\documentclass[acmsmall,screen]{acmart}  %

\newif\iffullversion
\fullversiontrue

\iffullversion
\settopmatter{printfolios=true,printccs=true,printacmref=false}
\else 
\settopmatter{}
\fi



\acmPrice{}
\acmDOI{10.1145/3428220}
\acmYear{2020}
\copyrightyear{2020}
\acmSubmissionID{oopsla20main-p95-p}
\acmJournal{PACMPL}
\acmVolume{4}
\acmNumber{OOPSLA}
\acmArticle{152}
\acmMonth{11}

\setcopyright{rightsretained}

\bibliographystyle{ACM-Reference-Format}
\citestyle{acmauthoryear}   


\usepackage[utf8]{inputenc}
\usepackage[T1]{fontenc}
\usepackage{microtype}

\usepackage{booktabs}   
\usepackage{subcaption} 
\captionsetup[table]{textfont=sc,position=top}
\usepackage{wrapfig}  

\usepackage{graphicx}
\usepackage{proof}
\usepackage{extarrows}
\usepackage{listings}

\usepackage{datetime}
\usepackage{dsfont}
\usepackage{stmaryrd}     

\usepackage[version=0.96]{pgf}
\usepackage{tikz}
\usetikzlibrary{calc}
\usetikzlibrary{automata,positioning,decorations.markings}
\usetikzlibrary{arrows}
\usepgflibrary{arrows}
\usetikzlibrary{arrows.meta}

\usepackage{verbatim}

%
%

\newcommand{\removebeforesubmission}[1]{#1} 
\renewcommand{\removebeforesubmission}[1]{} 

\newcommand{\fullversionref}[1]{%
\iffullversion%
Appendix~\ref{#1}%
\else
\cite{Igloo-TR-2020}%
\fi%
}

\definecolor{persianblue}{rgb}{0.11, 0.22, 0.73}
\definecolor{navyblue}{rgb}{0.0, 0.0, 0.5}
\definecolor{darkpowderblue}{rgb}{0.0, 0.2, 0.6}
\definecolor{frenchblue}{rgb}{0.0, 0.45, 0.73}
\definecolor{burntorange}{rgb}{0.8, 0.33, 0.0}

\newcommand{\fix}[2]{\footnote{\textbf{#1}: #2}}
\newcommand{\marking}[2]{\textcolor{#1}{#2}}

\newcommand{\todo}[1]{\marking{red}{[TODO: #1]}}

\renewcommand{\fix}[2]{}  
\renewcommand{\marking}[2]{#2}
\renewcommand{\todo}[1]{}


\newcommand{\tobias}[1]{\fix{Tobias}{#1}}

\newcommand{\christoph}[1]{\marking{blue}{#1}}
\newcommand{\pmue}[1]{\marking{blue}{{#1}}}
\newcommand{\mar}[1]{\marking{blue}{#1}}
\newcommand{\tk}[1]{\marking{blue}{#1}}
\newcommand{\faw}[1]{\marking{blue}{#1}}

\newcommand{\id}[1]{\mathit{#1}}
\newcommand{\kw}[1]{\mathsf{#1}}

\newcommand{\ie}{i.e.} 
\newcommand{\eg}{e.g.}
\newcommand{\cf}{cf.}


\newcommand{\Step}[1]{\textbf{Step~#1}}

\newcommand{\tru}{\kw{true}}
\newcommand{\fal}{\kw{false}}
\newcommand{\Bool}{\mathds{B}}
\newcommand{\Nat}{\mathds{N}}

\newcommand{\option}[1]{#1_{\bot}}      

\newcommand{\pick}{\id{pick}}            

\newcommand{\Unit}{\mathds{1}}       
\newcommand{\unit}{\bullet}

\newcommand{\List}[1]{#1^*}                
\newcommand{\nil}{\epsilon}         
\newcommand{\cons}{\,\#\,}                 
\newcommand{\cc}{\cdot}                     
\newcommand{\snoc}[1]{\cc \mklist{#1}}       
\newcommand{\mklist}[1]{\langle#1\rangle}  
\newcommand{\len}[1]{\mathit{len}(#1)}   

\newcommand{\Pow}{\mathbb{P}}        
\newcommand{\Powerset}[1]{\Pow(#1)}

\DeclareMathOperator*{\bigmultisetsum}{\scalerel*{\displaystyle{\sum}^{\#}}{\sum}} 
\usepackage{scalerel}

\newcommand{\fun}{\rightarrow}
\newcommand{\map}{\rightharpoonup}
\newcommand{\funupd}{\mapsto}

\newcommand{\ift}[2]{\kw{if}\ #1\ \kw{then}\ #2}
\newcommand{\els}{\kw{else}}
\newcommand{\ifte}[3]{\ift{#1}{#2}\ \els\ #3}

\newcommand{\lmparen}{\ensuremath{\mathclose{(\mkern-3.5mu\mid}}} 
\newcommand{\rmparen}{\ensuremath{\mathclose{\mid\mkern-3.5mu)}}} 
\newcommand{\record}[1]{\mbox{$\lmparen\,#1\,\rmparen$}}

\newcommand{\multiset}[1]{#1^\#}
\newcommand{\melem}{\mathrel{\in^\#\!}}
\newcommand{\mempty}{\emptyset^\#}
\newcommand{\mset}[1]{\{#1\}^\#}
\newcommand{\mplus}{\mathrel{+^\#}}

\newcommand{\pc}{\!\downarrow}

\newcommand{\corec}{\nu}

\newcommand{\pointsto}[2]{#1 \mapsto #2}

\newcommand{\Chunk}{\id{Chunks}}
\newcommand{\Place}{\id{Places}}
\newcommand{\Val}{\id{Values}}
\newcommand{\Bio}{\id{Bios}}
\newcommand{\bio}{\id{bio}}
\newcommand{\token}{\id{token}}

\newcommand{\sand}{\star}
\newcommand{\sall}{\forall^{\star}}

\newcommand{\Heap}{\id{Heap}}
\newcommand{\ActBio}{\id{Act}}
\newcommand{\Ty}{\id{Ty}}

\newcommand{\BioRule}{\textsf{Bio}}
\newcommand{\ContradictRule}{\textsf{Contradict}}
\newcommand{\ChaosRule}{\textsf{Chaos}}

\newcommand{\can}{\id{can}} 
\newcommand{\welltyped}{\id{welltyped}}
\newcommand{\addtoken}[1]{#1^{\,\mathrm{t}}}


\newcommand{\Hoare}[3]{\{#1\}\ #2\ \{#3\}}

\newcommand{\trans}[1]{\xrightarrow{#1}}
\newcommand{\traces}{\id{traces}}
\newcommand{\ES}{\mathcal{E}}        
\newcommand{\GES}{\mathcal{G}}        
\newcommand{\ESH}{\mathcal{H}}        
\newcommand{\ESP}{\mathcal{P}}        

\newcommand{\eventhead}[1]{#1:\;\,}
\newcommand{\eventsep}{\triangleright}
\newcommand{\eventbody}[2]{#1 \;\,\eventsep\;\, #2}
\newcommand{\event}[3]{\eventhead{#1}\eventbody{#2}{#3}}

\newcommand{\skipp}{\kw{skip}}         
\newcommand{\eparallel}[1]{\mathrel{\parallel_{#1}}}
\newcommand{\chiI}{\chi_{I}}
\newcommand{\interl}{\eparallel{\chiI}}

\newcommand{\hact}[1]{\xrightarrow{#1}_{\mathcal{H}}}
\newcommand{\tracesH}{\traces^\mathcal{H}}

\newcommand{\refines}[1]{\sqsubseteq_{#1}}

\newcommand{\Proc}{\id{Proc}}
\newcommand{\Null}{\kw{Null}}
\newcommand{\choice}{\oplus}
\newcommand{\bigchoice}{\bigoplus}

\newcommand{\opsem}[1]{\xrightarrow{#1}_{\mathcal{P}}}

\newcommand{\proc}{\id{proc}}       
\newcommand{\emb}{\id{emb}}      


\newcommand{\Left}{\kw{L}}
\newcommand{\Right}{\kw{R}}

\newcommand{\cmodel}{\id{cmod}}
\newcommand{\gmodel}{\id{gmod}}
\newcommand{\premodel}{\id{pm}}

\newcommand{\rhowit}{\rho_{\id{wit}}}
\newcommand{\srcplaces}{\id{srcs}}       

\newcommand{\Nodeid}{\id{ID}}
\newcommand{\nextt}{\id{next}}
\newcommand{\nextnode}[1]{\nextt(#1)}
\newcommand{\winner}{\mathit{leader}}
\newcommand{\chan}{\mathit{chan}}
\newcommand{\ibuf}{\mathit{ibuf}}
\newcommand{\obuf}{\mathit{obuf}}
\newcommand{\Addr}{\mathit{Addr}}
\newcommand{\addrNA}{\id{addr}}
\newcommand{\addr}[1]{\addrNA(#1)}

\newcommand{\elive}{$\textit{live}_{\mathit{env}}$}
\newcommand{\alive}{$\textit{live}$}

\newcommand{\syst}{\mathrm{s}}
\newcommand{\envir}{\mathrm{e}}
\newcommand{\system}[1]{#1^{\syst{}}}
\newcommand{\environ}[1]{#1^{\envir{}}}

\newcommand{\Systmodel}{\system{\ES}}
\newcommand{\Envmodel}{\environ{\ES}}
\newcommand{\Envreal}{\ES^{\mathrm{re}}}

\newcommand{\POSee}{\nil \cc \nil}
\newcommand{\POSel}[1]{\nil \cc #1}
\newcommand{\POSle}[1]{#1 \cc \nil}
\newcommand{\POS}[2]{#1 \cc #2}
\tikzset{
processstyle/.style={
  inner sep=0pt,
  text width=6mm,
  align=center
  }
}
\tikzset{
ppos/.style={
  inner sep=0pt,
  text width=20mm,
  }
}
\tikzset{
    position/.style args={#1:#2 from #3}{
        at=(#3.#1), anchor=#1+180, shift=(#1:#2)
    }
}

\newcommand{\normalDist}{14mm}
\newcommand{\bigDist}{15mm}
\newcommand{\piDist}{9.0cm}


\begin{document}

\title[Igloo: Soundly Linking Compositional Refinement and Separation Logic for Distributed System Verification]
      {Igloo: Soundly Linking Compositional Refinement and Separation Logic for Distributed System Verification}



\author{Christoph Sprenger}
\email{sprenger@inf.ethz.ch}          

\author{Tobias Klenze}
\email{tobias.klenze@inf.ethz.ch}         

\author{Marco Eilers}
\email{marco.eilers@inf.ethz.ch}         

\author{Felix A.~Wolf}
\email{felix.wolf@inf.ethz.ch}         

\author{Peter M\"uller}
\email{peter.mueller@inf.ethz.ch}         

\author{Martin Clochard}
\email{martin.clochard@inf.ethz.ch}         

\author{David Basin}
\email{basin@inf.ethz.ch}         

\affiliation{
  \department{Department of Computer Science}             
  \institution{ETH Zurich}           
  \country{Switzerland}                   
}

\authorsaddresses{Authors' address: Department of Computer Science, ETH Zurich, Switzerland. Email: firstname.lastname@inf.ethz.ch}

\begin{abstract}
Lighthouse projects such as CompCert, seL4, IronFleet, and DeepSpec have demonstrated that full verification of entire systems is feasible by establishing a refinement relation between an abstract system specification and an executable implementation. Existing approaches however impose severe restrictions on either the abstract system specifications due to their limited expressiveness or versatility, or on the executable code due to their reliance on suboptimal code extraction or inexpressive program logics. 

We propose a novel methodology that combines the compositional refinement of abstract, event-based models of distributed systems with the verification of full-fledged program code using expressive separation logics, which support features of realistic programming languages like mutable heap data structures and concurrency. The main technical contribution of our work is a formal framework that soundly relates event-based system models to program specifications in separation logics, such that successful verification establishes a refinement relation between the model and the code. We formalized our framework, \emph{Igloo}, in Isabelle/HOL.

\christoph{Our framework enables the sound combination of tools for protocol development with existing program verifiers.} We report on three case studies, a leader election protocol, a replication protocol, and a security protocol, for which we refine formal requirements into program specifications \christoph{(in Isabelle/HOL)} that we implement in Java and Python and prove correct using the VeriFast and Nagini tools.
\end{abstract}

\begin{CCSXML}
<ccs2012>
   <concept>
       <concept_id>10003752.10003790.10002990</concept_id>
       <concept_desc>Theory of computation~Logic and verification</concept_desc>
       <concept_significance>500</concept_significance>
       </concept>
   <concept>
       <concept_id>10003752.10003790.10003800</concept_id>
       <concept_desc>Theory of computation~Higher order logic</concept_desc>
       <concept_significance>500</concept_significance>
       </concept>
   <concept>
       <concept_id>10003752.10003790.10011742</concept_id>
       <concept_desc>Theory of computation~Separation logic</concept_desc>
       <concept_significance>500</concept_significance>
       </concept>
   <concept>
       <concept_id>10002978.10002986.10002990</concept_id>
       <concept_desc>Security and privacy~Logic and verification</concept_desc>
       <concept_significance>300</concept_significance>
       </concept>
   <concept>
       <concept_id>10010147.10010919.10010172</concept_id>
       <concept_desc>Computing methodologies~Distributed algorithms</concept_desc>
       <concept_significance>300</concept_significance>
       </concept>
   <concept>
       <concept_id>10010520.10010575</concept_id>
       <concept_desc>Computer systems organization~Dependable and fault-tolerant systems and networks</concept_desc>
       <concept_significance>500</concept_significance>
       </concept>
 </ccs2012>
\end{CCSXML}

\ccsdesc[500]{Theory of computation~Logic and verification}
\ccsdesc[500]{Theory of computation~Higher order logic}
\ccsdesc[500]{Theory of computation~Separation logic}
\ccsdesc[500]{Computer systems organization~Dependable and fault-tolerant systems and networks}
\ccsdesc[300]{Security and privacy~Logic and verification}
\ccsdesc[300]{Computing methodologies~Distributed algorithms}
%

\iffullversion
\else
\keywords{end-to-end verification, distributed systems, compositional refinement, higher-order logic, separation logic, tool interoperability, leader election, fault-tolerance, security protocols.}  
\fi

\maketitle


\section{Introduction}
\label{sec:intro}

The full verification of entire software systems, formally relating abstract specifications to executable code, is one of the grand challenges of computer science~\cite{Hoare03}. Seminal projects such as seL4~\cite{KleinEHACDEEKNSTW09}, CompCert~\cite{Leroy06}, IronFleet~\cite{DBLP:conf/sosp/HawblitzelHKLPR15}, and DeepSpec~\cite{DBLP:conf/oopsla/Pierce16} have achieved this goal by formally establishing a refinement relation between a system specification and an executable implementation.

Despite this progress, substantial challenges still lay ahead. We posit that techniques for the verification of entire systems should satisfy \pmue{four} major requirements: 

\begin{enumerate}
\item
\textit{End-to-end guarantees}: Verification techniques need to provide system-wide correctness guarantees whose proofs relate global properties ultimately to verified implementations of the system components. 

\item
\textit{Versatility}:  Verification techniques should be applicable to a wide range of systems. In the important domain of distributed systems,
versatility requires (i)~the ability to model different kinds of environments in which the system operates, capturing, for instance, different network properties, fault models, or attacker models, (ii)~support for different flavors of systems, comprising different types of components (such as clients and servers) and allowing an unbounded number of instances per \tk{component type}, and (iii)~\christoph{support for} heterogeneous implementations, for instance, to support the common case that \christoph{clients are  sequential, servers are concurrent, and each of them is implemented in a different language.}

\item
\textit{Expressiveness}: Verification techniques should support \christoph{expressive languages and logics}. In particular, high-level system models and proofs often benefit from the expressiveness of rich formalisms such as higher-order logic, whereas code-level verification needs to target efficiently-executable and maintainable implementations, often in multiple languages. 

\item \christoph{
\textit{Tool interoperability}: \pmue{While it is possible to support the previous three requirements within one generic verification tool, it is advantageous to employ specialized tools, for instance, to obtain a high degree of automation and to leverage existing tools,} \christoph{infrastructure, and expert knowledge.}
This gives rise to the additional requirement of sound interoperability of different verification tools, which is a long-standing challenge in verification.
%
Moreover, integrating tools should ideally not require any modifications to the tools, even though they may support different logics and programming languages.
}
\end{enumerate}
Although existing work has demonstrated that full verification is now feasible, the employed techniques do not meet all of these requirements. 

Some existing approaches~\cite{DBLP:conf/cpp/Koh0LXBHMPZ19} use specifications of individual system components (such as a server), but do not explain how to formally connect them to a global model of the entire system. A global model is necessary to prove system-wide properties, especially in decentralized systems.
%
Others~\cite{DBLP:conf/ifm/OortwijnH19} do not consider the preservation of global model properties down to the implementation. 
Hence, these approaches do not meet our first requirement.

Most existing approaches do not match our versatility requirements. Some target particular types of systems~\cite{DBLP:conf/esop/RahliVVV18,DBLP:conf/popl/LesaniBC16,KleinEHACDEEKNSTW09} or make fixed environment assumptions~\cite{DBLP:conf/cpp/Koh0LXBHMPZ19,DBLP:journals/pacmpl/SergeyWT18}.
%
Moreover, in several works, different component types with unbounded numbers of instances are either not supported or it is unclear whether they are generically supported~\cite{DBLP:conf/sosp/HawblitzelHKLPR15,DBLP:conf/cpp/Koh0LXBHMPZ19}. 
%
Finally, many approaches~\cite{DBLP:conf/sosp/HawblitzelHKLPR15,DBLP:conf/esop/RahliVVV18,DBLP:conf/popl/LesaniBC16,DBLP:journals/pacmpl/SergeyWT18,DBLP:conf/pldi/WilcoxWPTWEA15} prescribe a fixed programming language and, thus, do not support heterogeneous implementations.

Most previous work does not satisfy our expressiveness requirement. Some of them~\cite{HawblitzelHLNPZZ14,DBLP:conf/sosp/HawblitzelHKLPR15}  limit the formalism used for model development to first-order logic, to leverage SMT solvers, which complicates the formalization of common properties such as graph properties. 
Others restrict the executable implementation~\cite{DBLP:conf/esop/RahliVVV18,DBLP:conf/popl/LesaniBC16,DBLP:journals/pacmpl/SergeyWT18,DBLP:conf/pldi/WilcoxWPTWEA15,DBLP:conf/cpp/WoosWATEA16,Leroy06,nfm20-liu} and extract executable code directly from formal models. This guarantees the implementation's correctness, but has several drawbacks. In particular, \christoph{the} extracted code is purely functional or rewriting-based, with sub-optimal performance, \christoph{and} any manual code optimizations invalidate the correctness argument and may compromise the intended behavior. Moreover, code extraction complicates the interaction 
with existing system components and libraries. Other approaches reason about manually-written implementations, but do not employ a modern verification logic~\cite{KleinEHACDEEKNSTW09}, restricting the implementation, for instance, to sequential code, and precluding the use of existing state-of-the-art program verification tools, potentially resulting in low proof automation and non-modular proofs.

\pmue{Finally, most existing approaches require the use of a single tool, typically an interactive theorem prover. }  
\christoph{This may prevent experts in both protocol and program verification from using the highly automated tools they are familiar with and from building on their existing infrastructure.
An exception is~\citet{DBLP:conf/ifm/OortwijnH19}, who combine the Viper verifier~\cite{DBLP:conf/vmcai/0001SS16} with the mCRL2 model checker~\cite{CranenGKSVWW13} to reason about message passing programs.}

\subsubsection*{This Work}

We propose a novel approach that combines the top-down compositional refinement of abstract, event-based system models~\cite{DBLP:journals/tcs/AbadiL91,DBLP:journals/iandc/LynchV95,Abrial10} with the bottom-up verification of full-fledged program code using separation logic~\cite{Reynolds02a}. Our approach satisfies all \christoph{four} of our requirements.
It offers the full expressive power of higher-order logic and the foundational guarantees of interactive theorem provers for developing formal models, as well as the expressiveness and tool support provided by modern program logics.

The core of our approach is a formal framework that \christoph{soundly} relates event-based system models to program specifications in separation logic, such that successful verification establishes a refinement relation between the model and the code. 
The program specifications \christoph{link models and code and at the same time they} decouple models and code, allowing us to support multiple programming languages and verification tools. 
\christoph{This is,} for instance, useful to develop multiple library implementations of a protocol. Moreover, this decoupling
enables a separation of concerns where we can use specialized tools for the separate tasks of model refinement and code verification, tailored to the problem and the programming language at hand.

We focus on the development of \emph{distributed} systems, consisting of an arbitrary number of components (of possibly heterogenous types such as clients and servers) with local states that interact by exchanging messages via an arbitrary, potentially faulty or adversarial environment. Such systems give rise to complex concurrent behaviors. 
In this setting, the program specification of a component's implementation prescribes the component's state changes as well as its I/O behavior \christoph{and is called an \emph{I/O specification}}. 
For this purpose, we employ an existing encoding of I/O specifications into a separation logic
to support assertions that can specify both of these aspects~\cite{DBLP:conf/esop/Penninckx0P15}. This encoding can be used with any logic that offers standard separation logic features, and can thus be used to verify \christoph{components} with mutable heap data structures, concurrency, and other features of realistic programming languages that enable efficient implementations. 

%

\subsubsection*{Approach}

\begin{wrapfigure}{r}{.48\linewidth}
\vspace{-8pt}
\begin{center}
\includegraphics[scale=.27]{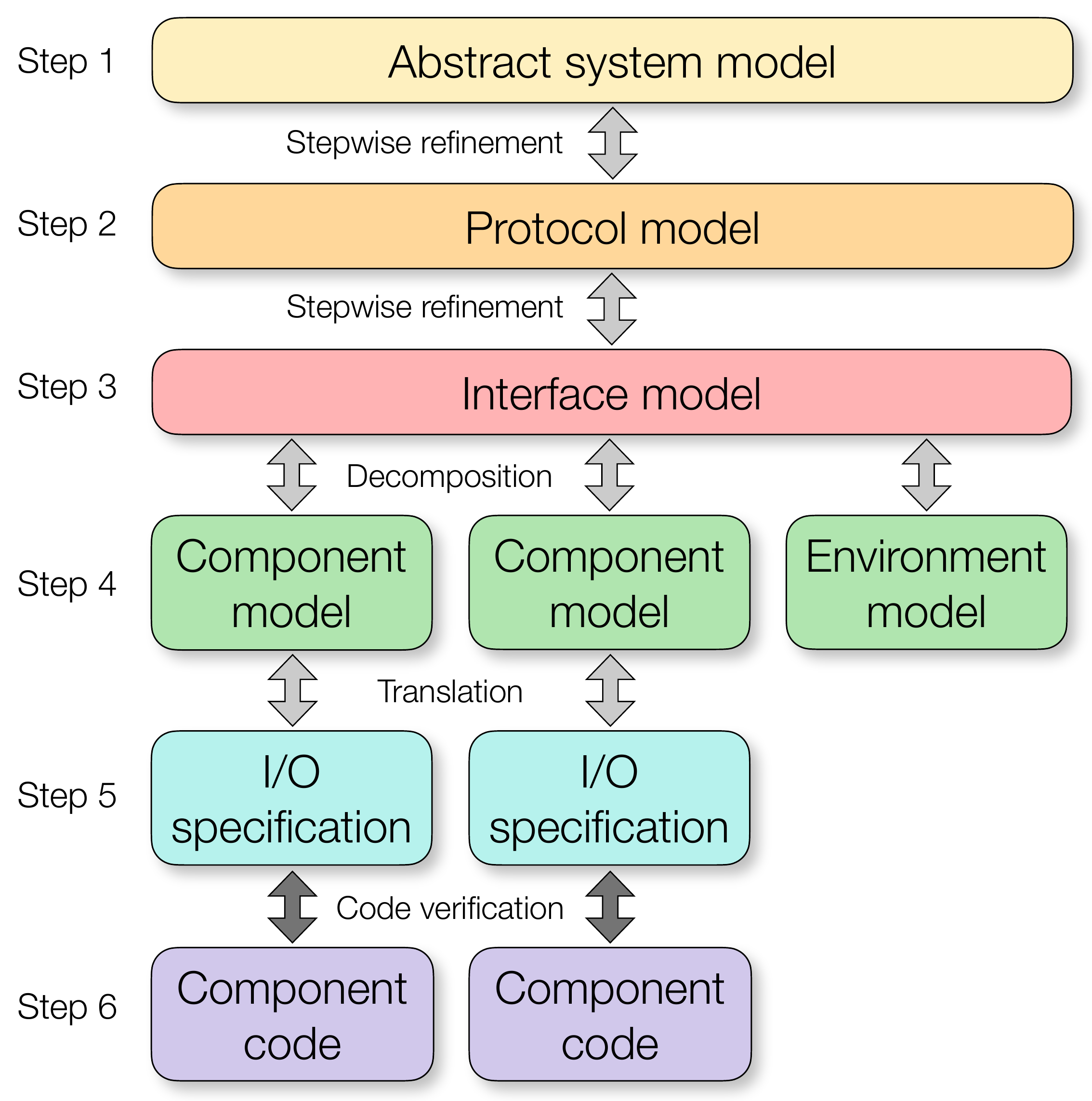}
\caption{The main steps of our approach. Boxes depict formal models, specifications, and programs. Light and dark gray arrows depict proofs in Isabelle/HOL and in program verification tools, respectively.}
\label{fig:approach}
\end{center}
\vspace{-6pt}
\end{wrapfigure}

Our methodology consists of six main steps, illustrated in Figure~\ref{fig:approach}.
All steps come with formal guarantees to soundly link the abstract models with the code. The first five steps are formalized in Isabelle/HOL~\cite{DBLP:books/sp/NipkowPW02}.
\Step{1} requires formalizing an initial abstract model of the entire system and proving desired trace properties. This model and subsequent models are expressed as event systems (\ie, labeled transition systems) in a generic refinement framework that we implemented in Isabelle/HOL\@. 
\Step{2} develops a protocol model, which contains the components of the distributed system to be developed as well as assumptions about the communication network. This \emph{environment} may, for instance, include a fault model or an attacker model, which can be used to prove properties about fault-tolerant or secure systems. 
So far, this is standard development by refinement, but Steps 3-5 are specific to our approach. 
\Step{3} prepares the model for a subsequent decomposition and refines the interfaces of the components and the environment to match the interfaces of the I/O libraries to be used in the implementations. \Step{4} decomposes the, so far monolithic, model into models of the individual system components (\eg, clients and servers) and the environment. 
%
\Step{5} translates each component's event system into an \christoph{I/O specification, which formalizes its valid I/O behaviors.} 
We express this specification as an encoding into standard separation logic assertions that can describe sequences of calls to I/O libraries, \eg, for sending and receiving messages~\cite{DBLP:conf/esop/Penninckx0P15}.  Each such call corresponds to one event of the component's event system. 
Finally, \Step{6} is standard code verification of the \christoph{different} system components, albeit with specifications describing I/O behavior. This verification step can be performed using an embedding of a separation logic into an interactive theorem prover (to obtain foundational guarantees) or by using \christoph{separate} dedicated program verifiers (to increase automation). For the latter, any existing verifier supporting standard separation logic features can be used without requiring changes to the tool, \christoph{provided it satisfies our \emph{verifier assumption}. This assumption states that proving a Hoare triple involving the I/O specification in the tool implies that the program's I/O behavior refines the one defined by the I/O specification.}
\christoph{Crucially, our approach supports modular reasoning in that the verification of a component's code does not involve reasoning about the  system's global properties, other components, or the environment.}
\christoph{Moreover, we can employ different code verifiers to support heterogeneous implementations, where different components are written in different languages, and some are sequential, while others use local concurrency for improved performance.}


Our approach ensures that the resulting distributed system's implementation does not abort due to runtime errors and satisfies, by virtue of compositional refinement, the requirements specified and proved for the formal models. 
%
These guarantees assume that the real environment, including the I/O libraries and the lower software and hardware layers, conforms to our environment model, \christoph{the components are correctly instantiated, and the verification tools used 
are sound.}
%
As our approach ``glues'' together models and code through their I/O behavior, we have dubbed it ``Igloo''. 


\subsubsection*{Contributions}

Our work makes the following contributions:

\begin{enumerate}

\item \textit{Methodology:} We present a novel methodology for the sound end-to-end verification of distributed systems that combines the top-down refinement of expressive, global system specifications with bottom-up program verification based on expressive separation logics. This combination supports the verification of system-wide properties and handles heap data structures, concurrency, and other language features that are required for efficient code. \christoph{Our methodology enables the sound interoperability of interactive theorem provers with \pmue{existing} code verification tools for different programming languages, as well as the verified interoperability of the resulting component implementations.}

\item \textit{Theory:} We establish a novel, formal link between event system models and I/O specifications for programs expressed in separation logics by relating both of them to a process calculus. This link between these disparate formalizations is central to our methodology's soundness. It is also interesting in its own right since it shows how to formally integrate the trace semantics of event systems and processes with the permissions manipulated by separation logics.

\item \textit{Case studies:} We demonstrate the feasibility of our approach by developing formal models for a leader election \christoph{protocol}, a replication \christoph{protocol}, and a security protocol, deriving I/O specifications for their \christoph{components}, and verifying independent implementations in Java and Python, using the  VeriFast~\cite{DBLP:conf/nfm/JacobsSPVPP11} and Nagini~\cite{DBLP:conf/cav/Eilers018} verifiers. \christoph{Some of these components' performance is optimized using locally concurrent execution.}

\item \textit{Formalization:} All our definitions and results are formalized and proven in Isabelle/HOL\@. This includes the refinement framework and its soundness, the formalization of I/O specifications, the soundness proof that formally links event systems, processes, and I/O specifications, and Steps~1--5 of our case studies. This foundational approach yields strong soundness guarantees.
\end{enumerate}

\section{Preliminaries}
\label{sec:preliminaries}


\begin{table}
\caption{Summary of notation.}
\vspace{-10pt}
\label{tab:math-notations}
\begin{small}
\begin{tabular}{c@{\hspace{8mm}}c}
\begin{tabular}[t]{|c|l|}
\hline
$\Unit$, $\Bool$, $\Nat$ & $\{\unit\}, \{\tru,\fal\}$, naturals \\
\hline
$A\times B$ & cartesian product \\
\hline
$\record{x \in A, y\in B}$ & set of records \\
\hline
$A \uplus B$ & disjoint union (sum) \\
\hline
$\option{A}$ & $= A \uplus \{\bot\}$ \\
\hline
$A\fun B$, $A\map B$ & total and partial functions \\
\hline
$\Pow(A)$ & powerset \\
\hline
$A^*$ & finite sequences \\
\hline
$\multiset{A}$ & multisets, $= A \fun \Nat\cup\{\infty\}$ \\
\hline
\end{tabular}
&
\begin{tabular}[t]{|c|l|}
\hline
$\record{x = a, y = b}$ & concrete record \\
\hline
$x(r)$, $r\record{x := v}$ & record field $x$, update \\
\hline
$f(x \funupd v)$, $f^{-1}$ & function update, inverse \\
$\nil$, $abc$ or $\mklist{a,b,c}$ & empty, concrete sequence\\
\hline
$x \cc y$ & concatenation \\
\hline
$\len{x}$, $x(i)$ & length, $i$-th value \\
\hline
$\mset{a,a,b,c}$ & concrete multiset \\
\hline
$M \mplus M'$ & multiset sum \\
\hline
$\mempty$, $a \melem M$ & $\mset{}$, $M(a) > 0$ \\
\hline
\end{tabular}
\end{tabular}
\end{small}
\end{table}

Although we formalize our development in Isabelle/HOL, we use standard mathematical notation where possible to enhance readability. 
Table~\ref{tab:math-notations} summarizes our notation.

\subsection{Event Systems, Refinement, and Parallel Composition}
\label{ssec:event-systems-refinement}

\subsubsection{Event Systems}
\label{ssec:event-systems-and-traces}

An \emph{event system} is a labeled transition system $\ES = (S, E, \trans{})$, where $S$ is a set of states, $E$ is a set of events, and $\,\trans{}\;  \subseteq S \times E \times S$ is the transition relation. We also write $s \trans{e} s'$ for $(s,e,s') \in\;\trans{}$. 
We extend the transition relations to \christoph{finite sequences of events $\tau$} by inductively defining, for all $s, s', s'' \in S$, $s \trans{\nil} s$ and $s\trans{\tau \snoc{e}} s''$, whenever $s \trans{\tau} s'$ and $s' \trans{e} s''$. Given a set of initial states $I \subseteq S$, a \emph{trace} of an event system $\ES$ starting in $I$ is a finite sequence $\tau$ such that $s \trans{\tau} s'$ for some initial state $s\in I$ and reachable state $s'$. We denote by $\traces(\ES,I)$ the set of all traces of $\ES$ starting in $I$. For singleton sets $I=\{s\}$, we also write $\traces(\ES,s)$, omitting brackets. We call a set of traces $P$ over $E$ a \emph{trace property} and write $\ES,I \models P$ if $\traces(\ES,I) \subseteq P$.

For concrete specifications, we often use \emph{guarded event systems} of the form $\GES = (S, E, G, U)$ where $G$ and $U$ denote the $E$-indexed families of \emph{guards} $G_e: S \fun \Bool$ and \emph{update} functions $U_e: S \fun S$. The associated transition relation is $\,\trans{} \; = \{(s,e,s') \mid G_e(s) \land s' = U_e(s)\}$.
%
If $S = \record{\bar{x} \in \bar{T}}$ is a record, we use the notation $\event{\id{e}\!}{\!\!G_{\id{e}}(\bar{x})\!\!}{\!\!\bar{x} := U_e(\bar{x})}$ to specify events. \christoph{For example, the event $\event{\id{dec}(a)\!}{\!\!z > a\!\!}{\!\!z := z - a}$ decreases $z$ by the parameter $a$ provided that the guard $z > a$ holds.}

\subsubsection{Refinement}
\label{ssec:refinement}

Given two event systems, $\ES_i = (S_i, E_i, \trans{}_i)$ and sets of initial states $I_i\subseteq S_i$ for $i\in\{1,2\}$, we say that \emph{$(\ES_2, I_2)$ refines $(\ES_1, I_1)$ modulo a mediator function $\pi\!: E_2\fun E_1$}, written $(\ES_2, I_2) \refines{\pi} (\ES_1, I_1)$, if there is a simulation relation $R \subseteq S_1 \times S_2$ such that 
\begin{enumerate}
\item for each $s_2\in I_2$ there exists some $s_1 \in I_1$ such that $(s_1, s_2)\in R$, and
\item for all $s_1\in S_1$, $s_2, s_2' \in S_2$ and $e_2 \in E_2$ such that $(s_1, s_2) \in R$ and $s_2 \trans{e_2}_2 s_2'$ there exists some $s_1' \in S_1$ such that $s_1 \trans{\pi(e_2)}_1 s_1'$ and $(s_1', s_2') \in R$.
\end{enumerate}
This is standard forward simulation~\cite{DBLP:journals/iandc/LynchV95}, augmented with the mediator function $\pi$, which allows us to vary the events in a refinement.
We assume that all models \christoph{$\mathcal{E}$} in our developments include a special stuttering event $\skipp\in E$, defined by $s \trans{\skipp} s$; \christoph{consequently,}
the trace properties $\traces(\ES, I)$ are closed under the addition and removal of $\skipp$ to traces. Events that are added in a refinement step often refine $\skipp$. 
%


We prove a standard soundness theorem stating that refinement implies trace inclusion. This trace inclusion in turn preserves trace properties (modulo the mediator $\pi$). Here, $\pi$ is applied to each element of each trace and $\pi^{-1}(P_1)$ consists of all traces that map element-wise to a trace in $P_1$. 
\begin{theorem}
\label{thm:refinement-soundness}
$(\ES_2, I_2) \refines{\pi} (\ES_1, I_1)$ implies $\pi(\traces(\ES_2, I_2)) \subseteq \traces(\ES_1, I_1)$.
\end{theorem}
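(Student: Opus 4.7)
The plan is to unfold the definitions, pull out the simulation relation $R$ promised by $(\ES_2, I_2) \refines{\pi} (\ES_1, I_1)$, and then do a straightforward induction on the length of the derivation $s_2 \trans{\tau_2}_2 s_2'$ that witnesses a trace $\tau_2 \in \traces(\ES_2, I_2)$. Concretely, given any such $\tau_2$ with witnesses $s_2 \in I_2$ and $s_2'$, I want to produce $s_1 \in I_1$ and $s_1'$ with $s_1 \trans{\pi(\tau_2)}_1 s_1'$, so that $\pi(\tau_2) \in \traces(\ES_1, I_1)$. The initial $s_1 \in I_1$ with $(s_1, s_2) \in R$ is supplied by clause~(1) of the refinement definition.

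The key lemma to establish by induction on $\tau_2$ is: for every $s_1 \in S_1$, $s_2, s_2' \in S_2$, and $\tau_2 \in E_2^*$, if $(s_1, s_2) \in R$ and $s_2 \trans{\tau_2}_2 s_2'$, then there exists $s_1' \in S_1$ with $s_1 \trans{\pi(\tau_2)}_1 s_1'$ and $(s_1', s_2') \in R$. The base case $\tau_2 = \nil$ is immediate with $s_1' = s_1$, using the base clause of the inductive definition of $\trans{\nil}$. For the step case $\tau_2 = \tau \snoc{e_2}$, the inductive definition of $\trans{}$ gives an intermediate state $s_2''$ with $s_2 \trans{\tau}_2 s_2''$ and $s_2'' \trans{e_2}_2 s_2'$; the induction hypothesis supplies $s_1''$ with $s_1 \trans{\pi(\tau)}_1 s_1''$ and $(s_1'', s_2'') \in R$; and clause~(2) of the refinement definition then produces $s_1'$ with $s_1'' \trans{\pi(e_2)}_1 s_1'$ and $(s_1', s_2') \in R$. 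Concatenating the two derivations yields $s_1 \trans{\pi(\tau)\snoc{\pi(e_2)}}_1 s_1'$, and since $\pi$ is applied pointwise to traces, $\pi(\tau)\snoc{\pi(e_2)} = \pi(\tau\snoc{e_2})$ closes the case.

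Combining this lemma with clause~(1) of the refinement definition gives the theorem: for any $\tau_2 \in \traces(\ES_2, I_2)$, pick $s_2 \in I_2$ with $s_2 \trans{\tau_2}_2 s_2'$, choose $s_1 \in I_1$ with $(s_1, s_2) \in R$, and apply the lemma to get $s_1 \trans{\pi(\tau_2)}_1 s_1'$, which shows $\pi(\tau_2) \in \traces(\ES_1, I_1)$.

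There is no real obstacle here; this is the standard soundness argument for forward simulation (as in Lynch and Vaandrager), and the only mild subtlety is the presence of the mediator $\pi$, which must be shown to commute with concatenation, i.e.\ $\pi(\tau \snoc{e_2}) = \pi(\tau) \snoc{\pi(e_2)}$. This follows from its pointwise extension to sequences and is what makes the inductive step line up. Everything else is mechanical unfolding of the inductive definition of $\trans{\tau}$.
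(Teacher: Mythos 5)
Your proof is correct and is exactly the standard forward-simulation soundness argument the paper relies on (the paper itself only remarks that this is "standard" and formalizes it in Isabelle/HOL): strengthen to the inductive lemma that $R$-related states are preserved along traces with $\pi$ applied pointwise, then combine with the initial-state clause. The induction on the snoc-structure of $\tau_2$ matches the paper's inductive definition of $\trans{\tau}$, so no gap remains.
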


\begin{lemma}
\label{lem:property-preservation}
Suppose $\ES_1, I_1 \models P_1$ and $\pi(\traces(\ES_2, I_2)) \subseteq  \traces(\ES_1, I_1)$. Then $\ES_2, I_2 \models \pi^{-1}(P_1)$.
\end{lemma}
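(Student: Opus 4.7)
The plan is to prove this by directly chasing the definitions: $\models$, $\traces$, and $\pi^{-1}$. The goal $\ES_2, I_2 \models \pi^{-1}(P_1)$ unfolds to the set inclusion $\traces(\ES_2, I_2) \subseteq \pi^{-1}(P_1)$, and by the stated meaning of $\pi^{-1}(P_1)$, this amounts to showing that $\pi(\tau) \in P_1$ for every $\tau \in \traces(\ES_2, I_2)$.

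To establish this, I would fix an arbitrary $\tau \in \traces(\ES_2, I_2)$ and apply $\pi$ element-wise. Then $\pi(\tau) \in \pi(\traces(\ES_2, I_2))$ by definition of the pointwise lift of $\pi$. Using the second hypothesis, $\pi(\traces(\ES_2, I_2)) \subseteq \traces(\ES_1, I_1)$, we obtain $\pi(\tau) \in \traces(\ES_1, I_1)$. The first hypothesis $\ES_1, I_1 \models P_1$ means $\traces(\ES_1, I_1) \subseteq P_1$, so $\pi(\tau) \in P_1$, and therefore $\tau \in \pi^{-1}(P_1)$ by definition. Since $\tau$ was arbitrary, $\traces(\ES_2, I_2) \subseteq \pi^{-1}(P_1)$, i.e., $\ES_2, I_2 \models \pi^{-1}(P_1)$.

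There is no real obstacle here: the statement is essentially a one-step diagram chase, amounting to the general fact that for any function $f: X \to Y$, set $A \subseteq X$, and set $B \subseteq Y$, if $f(A) \subseteq B$ then $A \subseteq f^{-1}(B)$, instantiated with $f = \pi$ (lifted element-wise to traces), $A = \traces(\ES_2, I_2)$, and $B = P_1$ (using transitivity with $\traces(\ES_1, I_1) \subseteq P_1$). The only mild subtlety worth making explicit in a formal Isabelle/HOL proof would be to unfold the pointwise extension of $\pi$ to sequences and confirm that the definition $\pi^{-1}(P_1) = \{\tau \mid \pi(\tau) \in P_1\}$ is used consistently, but this is purely notational bookkeeping rather than a mathematical difficulty. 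Combined with Theorem~\ref{thm:refinement-soundness}, this lemma yields the expected corollary that refinement preserves trace properties modulo $\pi$.
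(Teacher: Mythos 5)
Your proof is correct and is the same straightforward definition-chasing argument underlying the paper's (Isabelle-formalized) proof: unfold $\models$ and $\pi^{-1}$, push an arbitrary trace through $\pi$, and chain the two inclusions. Nothing is missing; the subtlety you flag about the element-wise lift of $\pi$ is indeed the only bookkeeping point.
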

%

\christoph{For complex or multi-level refinements, it may be advisable to reformulate the intended property~$P_1$ at the concrete level as $P_2$ and prove that $\pi^{-1}(P_1) \subseteq P_2$, which implies $\ES_2, I_2 \models P_2$.}


\subsubsection{Parallel Composition}
\label{ssec:parallel-composition}

\christoph{
Given two event systems, $\ES_i = (S_i, E_i, \rightarrow_i)$ for $i\in\{1,2\}$, a set of events $E$, and a partial function $\chi\!: E_1 \times E_2 \map E$, we define their \emph{parallel composition} $\ES_1 \eparallel{\chi} \ES_2 = (S, E, \rightarrow)$, where $S = S_1 \times S_2$ and $(s_1, s_2) \trans{e} (s_1', s_2')$ iff there exist $e_1\in E_1$ and $e_2 \in E_2$ such that $\chi(e_1,e_2) = e$, $s_1 \trans{e_1} s_1'$, and $s_2 \trans{e_2} s_2'$.  
%
We define the \emph{interleaving} composition $\ES_1 \interleave \ES_2 = \ES_1 \interl \ES_2$, where $E = E_1 \uplus E_2$ and $\chiI(e_1, \skipp) = e_1$, $\chiI(\skipp, e_2)=e_2$, and $\chiI(e_1, e_2)=\bot$ if $\skipp \notin \{e_1, e_2\}$. 
}

\christoph{
We can also define a composition on sets of traces. For two trace properties $T_1$ and $T_2$ over events $E_1$ and $E_2$, a set of events $E$, and a partial function $\chi\!: E_1 \times E_2 \map E$, we define $\tau \in T_1 \eparallel{\chi} T_2$ iff there exist $\tau_1 \in T_1$ and $\tau_2 \in T_2$ such that $\len{\tau_1} = \len{\tau_2} = len(\tau)$ and, for $0 \leq i < \len{\tau}$, we have $\chi(\tau_1(i), \tau_2(i))= \tau(i)$. 
We can then prove the following composition theorem (Theorem~\ref{thm:composition}), which enables compositional refinement (Corollary~\ref{cor:compositional-refinement}), where we can refine individual components while preserving trace inclusion for the composed system.
Similar results existed previously (see, \eg, \cite{DBLP:conf/fmco/SilvaB10}), but we have generalized them and formalized them in Isabelle/HOL\@.}
\begin{theorem}[Composition theorem]
\label{thm:composition}
\christoph{
$
\traces(\ES_1 \eparallel{\chi} \ES_2, I_1 \times I_2) 
= \traces(\ES_1, I_1) \eparallel{\chi} \traces(\ES_2, I_2).
$}
\end{theorem}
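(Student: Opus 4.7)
The plan is to establish the set equality by proving both inclusions independently, each by induction on the length of a witnessing trace. The key observation is that the parallel composition on event systems is defined precisely to mirror pointwise synchronization of single-step transitions through $\chi$, and this structure propagates along the inductive definition of $\trans{\tau}$ to yield exactly the pointwise condition that defines composition on trace sets.

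For $\subseteq$, I would first prove the strengthened statement (quantified over all pairs of endpoints) that whenever $(s_1, s_2) \trans{\tau} (s_1', s_2')$ in $\ES_1 \eparallel{\chi} \ES_2$, there exist $\tau_1, \tau_2$ of the same length as $\tau$ such that $s_1 \trans{\tau_1}_1 s_1'$, $s_2 \trans{\tau_2}_2 s_2'$, and $\chi(\tau_1(i), \tau_2(i)) = \tau(i)$ for every $i < \len{\tau}$. The empty-trace case is immediate. For $\tau \snoc{e}$, I would decompose the extended transition at its last step, apply the induction hypothesis to the prefix, and invoke the definition of $\eparallel{\chi}$ at the final step to obtain matching events $e_1, e_2$ with $\chi(e_1, e_2) = e$; appending these yields the required $\tau_1, \tau_2$. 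Choosing $(s_1, s_2) \in I_1 \times I_2$ then exhibits $\tau_1 \in \traces(\ES_1, I_1)$ and $\tau_2 \in \traces(\ES_2, I_2)$, hence $\tau \in \traces(\ES_1, I_1) \eparallel{\chi} \traces(\ES_2, I_2)$.

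For $\supseteq$, I would prove the converse statement by induction on the shared length: if $s_1 \trans{\tau_1}_1 s_1'$, $s_2 \trans{\tau_2}_2 s_2'$, $\len{\tau_1} = \len{\tau_2} = \len{\tau}$, and $\chi(\tau_1(i), \tau_2(i)) = \tau(i)$ for all $i < \len{\tau}$, then $(s_1, s_2) \trans{\tau} (s_1', s_2')$ in $\ES_1 \eparallel{\chi} \ES_2$. The base case is direct; for the inductive step I would split each of $\tau_1, \tau_2, \tau$ at its final index, apply the induction hypothesis to the prefixes, and combine the final single-step transitions via the defining clause of $\eparallel{\chi}$. Conceptually, both directions are straightforward; the main obstacle is bookkeeping, namely generalizing the induction hypothesis to quantify over the four endpoints in $S_1 \times S_2$ and keeping the index alignment between $\tau_1$, $\tau_2$, and $\tau$ explicit throughout the case for $\tau \snoc{e}$. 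In Isabelle/HOL, both directions should reduce to routine structural inductions on $\trans{\tau}$ and on sequence length, respectively, once the generalization is in place.
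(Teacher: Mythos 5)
Your proposal is correct: both inclusions go through by the inductions you describe, with the generalization over endpoints and the inversion of the last step of $\trans{\tau\snoc{e}}$ being exactly the bookkeeping needed. The paper gives no textual proof (the result is only formalized in Isabelle/HOL), and your argument is the standard one that such a formalization follows, so there is nothing to add.
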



\begin{corollary}[Compositional refinement]
\label{cor:compositional-refinement}
Suppose $\traces(\ES_i', I_i') \subseteq \traces(\ES_i, I_i)$ for \mbox{$i\in\{1,2\}$}. Then $\traces(\ES_1'  \eparallel{\chi} \ES_2', I_1' \times I_2')  \subseteq \traces(\ES_1 \eparallel{\chi} \ES_2, I_1 \times I_2)$. 
\end{corollary}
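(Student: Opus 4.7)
The plan is to derive Corollary~\ref{cor:compositional-refinement} as a straightforward consequence of Theorem~\ref{thm:composition} together with the monotonicity of the trace-level composition operator $\eparallel{\chi}$ in both of its arguments. First I would apply Theorem~\ref{thm:composition} twice, once to the hypothesis side and once to the conclusion side, to rewrite
\[
\traces(\ES_1' \eparallel{\chi} \ES_2',\, I_1' \times I_2') \;=\; \traces(\ES_1', I_1') \eparallel{\chi} \traces(\ES_2', I_2')
\]
and
\[
\traces(\ES_1 \eparallel{\chi} \ES_2,\, I_1 \times I_2) \;=\; \traces(\ES_1, I_1) \eparallel{\chi} \traces(\ES_2, I_2).
\]
This reduces the goal to showing that $\eparallel{\chi}$, viewed as a binary operation on trace properties, is monotone: whenever $T_i' \subseteq T_i$ for $i\in\{1,2\}$, we have $T_1' \eparallel{\chi} T_2' \subseteq T_1 \eparallel{\chi} T_2$.

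Monotonicity is essentially by unfolding the definition of $\eparallel{\chi}$ on trace sets. Given any $\tau \in T_1' \eparallel{\chi} T_2'$, pick witnesses $\tau_1 \in T_1'$ and $\tau_2 \in T_2'$ with $\len{\tau_1} = \len{\tau_2} = \len{\tau}$ and $\chi(\tau_1(i), \tau_2(i)) = \tau(i)$ for all $0 \le i < \len{\tau}$. By the hypotheses $T_i' \subseteq T_i$, the same $\tau_1$ and $\tau_2$ lie in $T_1$ and $T_2$ respectively, so they still witness $\tau \in T_1 \eparallel{\chi} T_2$. Composing this monotonicity lemma with the two instances of Theorem~\ref{thm:composition} yields exactly the desired inclusion.

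There is no real obstacle here: the corollary is essentially a corollary in name and content, and the only care needed is to invoke Theorem~\ref{thm:composition} on both sides so that one can compare the trace-level compositions pointwise rather than reasoning directly about the product transition system. In an Isabelle/HOL formalization I would state the monotonicity lemma for $\eparallel{\chi}$ as a small auxiliary fact (proved by the one-line witness argument above) and then discharge the corollary by \texttt{simp} with Theorem~\ref{thm:composition} and this monotonicity lemma.
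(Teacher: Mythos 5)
Your proposal is correct and matches the paper's intended argument: the corollary is presented there as a direct consequence of Theorem~\ref{thm:composition}, obtained exactly as you do by rewriting both sides with the composition theorem and using the (witness-reusing) monotonicity of $\eparallel{\chi}$ on trace sets. No gaps.
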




\subsection{I/O Specifications for Separation Logic}
\label{ssec:io-separation-logic}

To satisfy the versatility and expressiveness requirements stated in the introduction, we use a verification technique that works with \emph{any} separation logic that offers a few standard features. This approach supports a wide range of programming languages, program logics, and verification tools.

We build on the work by \citet{DBLP:conf/esop/Penninckx0P15}, which enables the verification of possibly non-terminating reactive programs that interact with their environment through a given set of I/O operations, corresponding to I/O library functions, using standard separation logic. 
They introduce an expressive assertion language for specifying a program's allowed I/O behavior; for example, one can specify sequential, non-deterministic, concurrent, and counting I/O behavior. 
This language can be encoded into any existing separation logic that offers standard features such as abstract predicates~\cite{Parkinson2005}. 
Consequently, our approach inherits the virtues of these logics, for instance, local reasoning and support for language features such as mutable heap data structures and concurrency (including fine-grained and weak-memory concurrency).
In particular, our approach leverages existing program verification tools for separation logic, such as VeriFast~\cite{DBLP:conf/nfm/JacobsSPVPP11} (for Java and C), Nagini~\cite{DBLP:conf/cav/Eilers018} (for Python), and GRASShopper~\cite{PiskacWZ13}, and benefits from the automation they offer.

\subsubsection*{Syntax}
\label{ssec:iosep-logic-syntax}

We assume a given set of (basic) \emph{I/O operations} $bio\in\Bio$ and countably infinite sets of values $v, w \in \Val$ and places $t, t' \in \Place$. The set of \emph{chunks} is defined by 
\christoph{
\[
\Chunk ::= \bio(t,v,w,t') \mid \token(t),
\]
where $\bio\in\Bio$, $t,t'\in\Place$, and $v,w\in\Val$.
}
We call a chunk of the form $\bio(t,v,w,t')$ an \emph{I/O permission} to invoke the operation $bio$ with output~$v$, whose \emph{source} and \emph{target} places are~$t$ and~$t'$, respectively, and which \emph{predicts} receiving the input value~$w$.
Note that input and output are from the perspective of the calling system component, not the environment: for example, $\id{send}(t_1, 12, 0, t_2)$ models a permission to send the value~$12$ (output) and a prediction that the obtained result will be $0$ (input). A chunk of the form $\token(t)$ is called a \emph{token} at place $t$. Intuitively, the places and I/O permissions form the nodes and edges of a multigraph. Allowed I/O behaviors are obtained by pushing tokens along these edges, which consumes the corresponding I/O permissions. 

The language of assertions, intended to describe multisets of chunks representing possibly non-terminating behavior, is co-inductively defined (indicated by the subscript $\corec$) by 
\[
   \phi ::=_{\corec} b \mid c \mid \phi_1 \sand \phi_2 \mid \exists v.\; \phi \mid \exists t.\, \phi,
\]
where $b \in \Bool$, $c \in \Chunk$, $\phi_1 \sand \phi_2$ is the separating conjunction, and the two existential quantifiers are on values $v \in \Val$ and places $t \in \Place$, respectively. In separation logic, chunks can be modeled using abstract predicates; all other assertions are standard.
In our Isabelle/HOL formalization, we use a shallow embedding of assertions. 
%
%
Disjunction is encoded using existential quantification. We borrow other constructs such as the conditional ``$\ifte{b}{\phi_1}{\phi_2}$'', variables, and functions operating on values from the meta-language, namely, Isabelle's higher-order logic.
We also call assertions \emph{I/O specifications} to emphasize their use as program specifications. 



\begin{example}
\label{ex:simple-IOspec}
The following I/O specification allows receiving an integer and subsequently sending the doubled value. 
\[
  \phi = \token(t) \sand (\exists x, t', t''.\, \id{recv}(t, x, t') \sand \id{send}(t', 2x, t'')).
\]
Since the input value $x$ is existentially quantified and unconstrained, there is no prediction about 
the value that will be received.
Here, we use I/O permissions performing only input ($\id{recv}$) or only output ($\id{send}$) instead of both. 
For such permissions, we elide the irrelevant argument, implicitly setting it to a default value like the unit  $\unit$. The single token points to the source place $t$ of $\id{recv}$.
\end{example}

Note that I/O specifications use places to determine the execution order of different I/O operations  without requiring specific program logic support beyond normal separation logic. 
For example, sequential composition and choice are expressed by using separate chunks that share source or target places.
Determining whether an I/O operation may be performed is therefore as simple as checking whether there is a permission that has a source place with a token.
Other approaches use \christoph{custom} specification constructs to express this and require \christoph{custom logics} (\eg,~\citet{DBLP:conf/cpp/Koh0LXBHMPZ19,DBLP:conf/ifm/OortwijnH19}).

\paragraph{Repeating behavior.} The co-inductive definition of assertions allows us to define formulas co-recursively. For consistency, Isabelle/HOL requires that co-recursive calls are \emph{productive}~\cite{DBLP:conf/esop/BlanchetteBL0T17}, \christoph{namely,} guarded by some constructor, which is the case for all of our co-recursive definitions.
For example, for a countable set of values $S$, we define the iterated separating conjunction $\sall v \in S.\, \phi$.
%
%
We can also co-recursively define possibly non-terminating I/O behavior. 

\begin{example}
\label{ex:IOspec-read-write}
\tk{
The assertion $\phi = \token(t) \sand \id{RS}(t,0)$ specifies the behavior of
repeatedly receiving inputs and sending their sum, as long as the received values are positive.
}
\[ 
\id{RS}(t, a)  =_\corec \exists z, t', t''.\, \id{recv}(t,z,t') \sand 
  \ifte{z > 0}{\id{send}(t', a + z,t'') \sand \id{RS}(t'', a + z)}{\tru}.
\]
Here, the parameters $t$ and $a$ of $\id{RS}$ represent the current state. Since this is a co-recursive definition, it includes the non-terminating behaviors where all received values are strictly positive.
\end{example}

\subsubsection*{Semantics}

Assertions have both a static semantics in terms of multisets of chunks and a transition semantics for which we have given an intuition above. This intuition suffices to understand our methodology. We therefore defer the definition of the formal semantics to Section~\ref{ssec:io-separation-logic-semantics}. 


\section{Igloo Methodology}
\label{sec:methodology}

In this section, we present our approach for developing fully verified distributed systems, which satisfies the requirements set out in the introduction. 
Our approach applies to any system whose components maintain a local state and exclusively communicate over an environment such as a communication network or a shared memory system.
There are neither built-in assumptions about the number or nature of the different system components nor about the environment. In particular, the environment may involve faulty or adversarial behavior.
We also support different programming languages and code verifiers for the implementation and the interoperability of heterogenous components written in different languages. This versatility is enabled by separating the modeling and implementation side and using I/O specifications to link them. 
%

After giving an overview of our methodology (Section~\ref{ssec:overview-formal-steps}) \christoph{and the distributed leader election protocol case study (Section~\ref{ssec:case-study})}, we explain \pmue{our methodology's steps and illustrate them by transforming} an informal, high-level description of the system and its environment into real-world implementations in Java and Python with formal correctness guarantees (Sections~\ref{ssec:abstract-phase}--\ref{ssec:component-implementation-and-verification}).
We summarize \christoph{our approach's} soundness arguments (Section~\ref{ssec:summary-and-guarantees}): trace properties established for the models are preserved down to the implementation provided that our trust assumptions (Section~\ref{ssec:trustassumptions}) hold.
We currently support the verification of safety properties, but not liveness properties.

\subsection{Overview of Formal Development Steps}
\label{ssec:overview-formal-steps}

Before we start a formal development, we must identify the \christoph{system requirements} and the assumptions about the environment.
The system requirements include the (informally stated) goals to be achieved by the system and structural constraints such as the types of its components. 
The environment assumptions describe the properties of the environment, including communication channels (\eg, asynchronous, lossy, reordering channels), the types of component faults that may occur (\eg, crash-stop or Byzantine~\cite{DBLP:books/daglib/0025983}), and possible adversarial behavior (\eg, the Dolev-Yao model of an active network adversary~\cite{DBLP:journals/tit/DolevY83}). 
%

Our methodology consists of six steps (\cf~Figure~\ref{fig:approach}). In Steps~1--2, we use standard refinement to develop a detailed model of the system and its environment. The number of refinements per step is not fixed.  
Each refinement is proven correct and may incorporate additional system requirements.

%
\begin{enumerate}
\item \emph{Abstract models}. We start with an abstract model that takes a global perspective on the problem. It may solve the problem in a single transition. Typically, the most central system properties are already established for this model, or the abstract models that further refine it.

\item \emph{Protocol models}. We then move from the global to a distributed view, where nodes execute a protocol and communicate over the environment. The result of this step is a model that incorporates all system requirements and environment assumptions.
\end{enumerate}
 

In Steps~3--6, we produce an interface model from which we can then extract component specifications, implement the components, and verify that they satisfy their specifications.
\begin{enumerate} \setcounter{enumi}{2}
\item \emph{Interface models}. We further refine the protocol model for the subsequent decomposition into system components and the environment, taking into account the I/O library interfaces to be used by the implementation. 

\item \emph{Decomposition.} We decompose the monolithic interface model into system components and the environment. Their re-composition is trace-equivalent with the monolithic model.

\item \emph{Component I/O specification.} We translate the component models into trace-equivalent I/O specifications (in separation logic) of the programs that implement the components. 

\item \emph{Component implementation and verification.} We implement the components in a suitable programming language and prove that they satisfy their I/O specification. 
\end{enumerate}

Steps~1--4 are supported by a generic refinement and composition framework that we have embedded in Isabelle/HOL (see Sections~\ref{ssec:event-systems-refinement} and~\ref{ssec:decomposition}). 
%
Steps~3--5 are novel and specific to our approach. In Steps~3 and~4, we align our models' events with the implementation's I/O library functions and then separate the interface model into a set of possibly heterogeneous system components (\eg, clients and servers) and the application-specific environment (\eg, modeling a particular network semantics, faulty, or adversarial behavior).
Step~5 constitutes one of the core contributions of our approach: a sound link between abstract models and I/O specifications in separation logic, also formalized in Isabelle/HOL\@. It will be introduced informally here and formalized in Section~\ref{sec:theory}. 
Step~6 corresponds to standard code verification, using tools such as Nagini (for Python) and VeriFast (for Java and C). Due to our clear separation of modeling and implementation, the code verifier must  check only that a component implementation follows the protocol; 
code verification neither needs to reason about the protocol's global properties nor about the environment, which simplifies verification and increases modularity.
%
\christoph{
In Section~\ref{ssec:summary-and-guarantees}, we will derive the overall soundness of our methodology from the individual steps' guarantees, which are summarized in Table~\ref{tab:steps-and-guarantees}.
}

Our three case studies demonstrate the versatility and expressiveness of our approach. We cover different types of systems, including fault-tolerant and secure ones, different component types with unbounded numbers of instances, and TCP and UDP communication. We have written and verified implementations in Python and Java, including concurrent ones. 
This section illustrates our approach using the leader election case study; the other case studies are presented in Section~\ref{sec:case-studies}.


\subsection{Case Study: Leader Election}
\label{ssec:case-study}

%
%
The main requirement of a distributed leader election protocol is to elect at most one leader in a network of uniquely identified but otherwise identical nodes, whose total number is a priori unknown. Since we do not consider liveness properties in this work, we do not prove that the protocol will terminate with an elected leader. 


We model an algorithm by~\citet{DBLP:journals/cacm/ChangR79}, which assumes a ring network and a strict total order on the set of node identifiers.
The algorithm elects the node with the maximum identifier as follows. Each node initially sends out its identifier to the next node in the ring and subsequently forwards all received identifiers greater than its own. When a node receives its own identifier, this is guaranteed to be the maximum identifier in the ring, and the node declares itself the leader.
For the environment, we assume that each node asynchronously sends messages to the next node in the ring over an unreliable, duplicating, and reordering channel. We do not consider other faults or adversarial behavior 
in this example, but see Section~\ref{sec:case-studies} for case studies that do.

\subsection{Step 1: Abstract Models}
\label{ssec:abstract-phase}

A common approach to develop systems by refinement is to start from a very abstract model whose correctness is either obvious or can be proved by a set of simple invariants or other trace properties. 
This model takes a global ``bird's eye'' view of an entire run of the protocol in that it does not explicitly model the network communication or represent the individual protocol steps.
\christoph{
\begin{example}
\label{ex:leader-election-abstract}
The abstract model of leader election elects a leader in a single ``one-shot'' transition. We assume a given set $\Nodeid$ of node identifiers. 
The model's state space is defined as an $\Nodeid$-indexed family of local states containing a single boolean state variable identifying the leader, \ie, $S_0 = \Nodeid \fun \record{\winner \in \Bool}$.  Initially, $\winner(s_0(i)) = \fal$, for all $i \in \Nodeid$. There is  a single event $\mathit{elect}$ that elects the leader. 
The guard ensures that this event can be performed only by a single, initially arbitrary node that updates its local variable $\winner{}$ to $\tru$.
\[
   \event{\id{elect}(i)}{(\forall j.\; \winner_j \Rightarrow i = j)}{\winner_i := \tru}.
\]
We use indexing to refer to different instances of variables, \eg, $\winner_j$ refers to node $j$'s local state. Note that the guard refers to other nodes' local states; hence, this model takes a global point of view.
We have proved that this model satisfies the main requirement for leader election, namely, the uniqueness of the leader. This is formalized as the trace property
\[
   U_0 = \{\tau \mid \forall i, j.\,\id{elect}(i) \in \tau \land \id{elect}(j) \in \tau \Rightarrow i = j\},
\] 
where $e\in \tau$ means that the event $e$ occurs in the trace $\tau$.
%
This model is sufficiently abstract to specify \emph{any} leader election algorithm, and will be refined to the protocol described above next.
\end{example}
}

\subsection{Step 2: Protocol Models}
\label{ssec:protocol-phase}

In Step~2, we move from a global to a distributed perspective, and distinguish system components (\eg, nodes or clients and servers) that communicate over an environment (\eg, \christoph{a wide-area network}). The way that we model the environment accounts for any assumptions made about network communication. For example, we can represent a reliable, non-duplicating, reordering channel as a multiset of messages. This step may also introduce a failure model for fault-tolerant systems or an adversary model for secure systems. 
%
The result of this step is a complete model of our system and environment that satisfies all system requirements. 

\begin{example}
\label{ex:leader-election-protocol}
\christoph{We refine our} abstract model into a protocol model. We model the environment by assuming a finite, totally ordered set of identifiers $\Nodeid$ and that the nodes are arranged in a ring defined by a function $\nextt \!: \Nodeid \rightarrow \Nodeid$, where $\nextnode{i}$ yields node $i$'s successor in the ring.
%
We extend the state with communication channels, which we model as sets, from which messages are never removed; this represents our assumption that the network may reorder and duplicate messages. Since we do not consider liveness properties, message loss is implicitly represented by never receiving a message.
Since messages contain node identifiers, our state space becomes $S_1 = \Nodeid \fun \record{\winner\in\Bool, \chan \! \in \Pow(\Nodeid)}$.  

Three events model the protocol steps: a $\id{setup}$ event where nodes send their own identifier to the next node in the ring, an $\id{accept}$ event where they forward received identifiers greater than their own, and an $\id{elect}$ event where a node receiving its own identifier declares itself the leader.
\[
\begin{array}{lll}
   \event{\id{setup}(i)&}%
            {\tru&}%
            {\chan_{\nextnode{i}} := \chan_{\nextnode{i}} \cup \{i\}} 
   \\
   \event{\id{accept}(i, j)&}%
             {j \in \chan_i \land j > i&}%
             {\chan_{\nextnode{i}} := \chan_{\nextnode{i}} \cup \{j\}} 
   \\
   \event{\id{elect}(i)&}%
             {i \in \chan_i &}%
             {\winner_i := \tru}
\end{array}
\]

\christoph{
We have proved that this protocol model 
refines the abstract model defined in Example~\ref{ex:leader-election-abstract}. For this we use the simulation relation that removes the field $\chan$ from the local state and the mediator function that maps $\id{elect}$ to itself and the new events to $\skipp$. 
The proof involves showing that the guard of this model's $\id{elect}$ event implies the guard of the abstract model's $\id{elect}$ event. 
We prove two invariants that together imply this. The first one is inductive and states that if a node ID $i$ is in the channel of node $j$ then $k < i$ for all node IDs $k$ in the channels in the ring interval from $i$ to $j$. From this it follows that if $i \in \chan_i$ then $i$ is the maximal node ID. The second invariant expresses that only the node with the maximal node ID can become a leader.
}
\end{example}

\subsection{Step 3: Interface Models}
\label{ssec:interface-phase}

\christoph{This is the first step towards an implementation. Its purpose is twofold: first, we prepare the model for the subsequent decomposition step (Step 4) and, second, we align the I/O events with the API functions of the I/O libraries to be used in the implementation.}
The resulting interface model must satisfy the following \christoph{structural} \emph{interface requirements}: 
\begin{enumerate}
\item The state space is a product of the components' local state spaces and the environment's state space. The events are partitioned into \christoph{\emph{I/O events}, which model the communication with the environment, and \emph{internal events}, which model local computations.
} 

\item Each I/O event can be associated with a single I/O library function (\eg, receiving or sending a message on a socket, but not both). It must have the same parameters as that library function, each of which can be identified as \christoph{an output parameter (\eg, the message to send) or an input parameter (\eg, an error code returned as a result).} 


\item 
Each I/O event's guard must be the conjunction of 
\begin{itemize}
\item 
a \emph{component guard}, which refers only to the component's local state, the event's output parameters, \christoph{and the component identifier}, and 
\item 
an \emph{environment guard}, referring only to the environment's state, the input parameters, \christoph{and the component identifier}. 
\end{itemize}
\end{enumerate}

Our approach leaves the choice of the abstraction level of the interface model's I/O events to the user. For example, the APIs of network socket libraries typically represent payloads as bitstrings, which the application must parse into and marshal from its internal representation. We may choose to either (i) define I/O events (and thus I/O operations) that operate on bitstrings, which requires modeling and verifying  parsing and marshalling explicitly, or (ii) keep their interface on the level of parsed data objects, and trust that these functions are implemented correctly.

\begin{example}
\label{ex:interface-model}
We refine the protocol model into a model satisfying the interface requirements. The protocol model's $\id{accept}$ event receives, processes, and sends a message. To satisfy Conditions~1--2 above, we introduce two local buffers, $\ibuf$ and $\obuf$, for each node and split $\id{accept}$ into three events: 
$\id{receive}$ transfers a message from the previous node to the input buffer $\ibuf$, 
$\id{accept}$ processes a message from $\ibuf$ and places the result in the output buffer $\obuf$, and
$\id{send}$ sends a message from $\obuf$ to the next node.
%

We also align the I/O events $\id{send}$ and $\id{receive}$ with the I/O operations \christoph{$\id{UDP\_send\_int}(\id{msg}, \id{addr})$}
and $\id{UDP\_receive\_int}(\id{msg})$, which are offered by standard socket libraries. 
%
Here, we represent messages as integers, but as stated above, we could alternatively represent them as bitstrings, and model parsing and marshalling explicitly (including 
bounds and endianness), resulting in stronger correctness guarantees. 
Since each I/O event must match the corresponding I/O operation's parameters (Condition~2), we add the send operation's destination address as an event parameter. Hence, we introduce an injective function $\addrNA : \Nodeid \rightarrow \Addr$, where $\Addr$ is the set of addresses. 
UDP communication is unreliable and messages sent may be reordered, duplicated, or lost; our environment model faithfully represents this behavior by modeling channels as sets (Section~\ref{ssec:protocol-phase}).%

\begin{figure}[t]
\begin{center}
\[
\begin{array}{lll}
   \event{\christoph{\id{setup}_i()} &}%
            {\tru &}%
            {\obuf_{\!i} := \obuf_{\!i} \cup \{i\}} 
   \\
   \event{\christoph{\id{receive}_i(j)} &}%
            {j \in \chan_{\addr{i}} &}%
            {\ibuf_{\!i} := \ibuf_{\!i} \cup \{j\}} 
   \\
   \event{\christoph{\id{accept}_i(j)} &}%
            {j \in \ibuf_{\!i} \land j > i &}%
            {\obuf_{\!i} := \obuf_{\!i} \cup \{j\}}
   \\
   \event{\christoph{\id{send}_i(j, a)} &}%
            {j \in \obuf_{\!i} \land a = \addr{\nextnode{i}} &}%
            {\chan_{a} := \chan_{a} \cup \{j\}}
   \\
   \event{\christoph{\id{elect}_i()} &}%
            {i \in \ibuf_{\!i} &}%
            {\winner_i := \tru}.
\end{array}
\]
\vspace{-4mm}
\caption{Event system resulting from interface refinement step.}
\label{fig:leader-election-interface-model}
\end{center}
\vspace{-2mm}
\end{figure}

We define the state space as the product $S_2 = \system{S}_2 \times \environ{S}_2$ (Condition~1) of a system state space $\system{S}_2 = \Nodeid \fun \record{\winner \in \Bool, \ibuf \in \Powerset{\Nodeid}, \obuf \in \Powerset{\Nodeid}}$ and an environment state space $\environ{S}_2 = \Addr \fun \record{\chan  \in \Powerset{\Nodeid}}$. 
%
The events are specified in Figure~\ref{fig:leader-election-interface-model}. 
\christoph{
We henceforth consider the component identifier $i$ as a component parameter and therefore write it as a subscript of the event.
}
Only $\id{receive}$ and $\id{send}$ are I/O events; all others are internal (Condition~1). These I/O events have the required form and parameters (Condition~2) and their guards have the required separable form (Condition~3). The parameter \christoph{$j$} of $\id{receive}$ is the only input parameter and all others are outputs.
The simulation relation with the protocol model projects away the internal buffers. The mediator function maps $\id{elect}$ to itself, $\christoph{\id{send}_i(j, a)}$ to $\christoph{\id{setup}(i)}$ if $i=j$ and to $\christoph{\id{accept}(i,j)}$ otherwise, and all other events to $\skipp$.
The refinement proof requires an invariant relating internal buffers to channels, \eg, stating that $\christoph{j} \in \id{ibuf_{i}}$ implies $\christoph{j} \in \id{chan}_{\addr{i}}$.
\end{example}
%



\subsection{Step 4: Decomposition}
\label{ssec:decomposition}

To support distributed systems with different component types (such as nodes or clients and servers), \christoph{we decompose the monolithic interface model from Step~3 into a parallel composition of an environment model and (a family of) component models for each component type.} 


%

%
\christoph{
We first decompose the interface model into a parallel composition $\mathcal{E} = \Systmodel \eparallel{\chi} \Envmodel$ of a system model $\Systmodel$ and an environment model $\Envmodel$. We have already distinguished their respective state spaces $\system{S}$ and $\environ{S}$ in the interface model.
The I/O events $e$ of $\mathcal{E}$ are split into a system part $\system{e}$, consisting of $e$'s component guard and system state updates, and an environment part $\environ{e}$, consisting of $e$'s environment guard and environment state updates. We define $\chi$ such that it synchronizes the split I/O events and interleaves the internal events.
}
The system model is \christoph{further} subdivided into models of different component types, which are composed using interleaving composition $\Systmodel = \Systmodel_1 \interleave \cdots \interleave \Systmodel_n$. This reflects our assumption that the components exclusively communicate via the environment. If there are multiple \emph{instances} of a component type, parametrized by a countable index set $I$ of identifiers, the respective model, say $\Systmodel_k$, becomes an interleaving composition over~$I$, 
\christoph{
that is, $\interleave_{i\in I}\: \Systmodel_k(\vec{\gamma}_k(i))$. Each component model $\Systmodel_k(\vec{p})$ may have some parameters $\vec{p}$. We instantiate these using a \emph{configuration map} $\vec{\gamma}_k$, which represents assumptions on the \emph{correct system configuration}.
}%
%
%
Note that component models may be further refined before translating them to I/O specifications. 


In preparation for the subsequent translation to I/O specifications, we model (instances of) system components in a subclass of guarded event systems. An \emph{I/O-guarded event system} $\GES = (S, E, G, U)$ is a guarded event system, where $E$ consists of events of the form $\bio(v,w)$ (formally introduced as \emph{I/O actions} in Section~\ref{ssec:heap-transitions}) 
and all guards $G_{\bio(v,w)}$ are component guards as in Condition~(3), \ie, they \pmue{must not depend on the I/O action's input~$w$}. This models that an input becomes available only as the result of an I/O operation and cannot be selected before the I/O operation is invoked. 
Furthermore, we model a component's internal events as \emph{ghost I/O actions}; \christoph{these actions} change the state of the abstract model, but do not correspond to real I/O operations. The implementation may have to perform a corresponding state change to stay aligned with the abstract model. 

We prove the correctness of the decomposition by showing that the parallel (re)composition of all parts is trace-equivalent to the original system.



\begin{example}
\label{ex:decomposition}
All nodes instantiate the same component type. We thus decompose the model from the previous step into an environment event system $\Envmodel$ and an I/O-guarded event system $\Systmodel(i, a)$, parametrized by a node identifier $i \in \Nodeid$ and an address $a \in \Addr$.
These will also be the parameters of the future program $c(i, a)$ implementing $\Systmodel(i, a)$. 
\christoph{
For the system's (re)composition, we use the configuration map $\vec{\gamma}(i)=(i,\addr{\nextnode{i}})$, which instantiates the destination address $a$ for $i$'s outbound messages with the address of $i$'s successor in the ring.
}
%
The environment operates on the state \christoph{$\environ{S}_3 = \Addr \fun \record{\chan \in \Powerset{\Nodeid}}$} and the state space of each node model \christoph{$\Systmodel(i,a)$ is $\system{S}_3 = \record{\winner \in \Bool, \ibuf\in \Powerset{\Nodeid}, \obuf\in \Powerset{\Nodeid}}$}. The environment has the following events, where `$-$' represents the identity update function:
\[
\begin{array}{lll}
   \event{\environ{\id{receive}}(i, m) &}%
            {m \in \chan_{\addr{i}} &}%
            {-} 
   \\
   \event{\environ{\id{send}}(i, m, a) &}%
            {\tru &}%
            {\chan_{a} := \chan_{a} \cup \{m\}}.
\end{array}
\]
These events execute synchronously with their matching system parts:
\[
\begin{array}{lll}
   \event{\system{\id{receive}}_{i,a}(m) &}%
            {\tru &}%
            {\ibuf{} := \ibuf{} \cup \{m\}} 
   \\
   \event{\system{\id{send}}_{i,a}(m, a') &}%
            {m \in \obuf \land a' = a &}%
            {-}.
\end{array}
\]

\noindent
Note that the $\system{\id{receive}}$ event's guard does not depend on its input parameter $m$ and the $\system{\id{send}}_{i,a}$ event's single output parameter
is a pair of a message and an address. \christoph{The equality $a'=a$ in \pmue{the guard of} $\system{\id{send}}_{i,a}$ enforces that messages are sent only to the node at the address $a$, which is a component parameter. This is a constraint on the future program's use of the I/O library function.}
The internal events $\id{setup}_{i,a}()$, $\id{accept}_{i,a}(m)$, and $\id{elect}_{i,a}()$ of $\Systmodel(i, a)$ are ghost I/O actions, which are identical to their counterparts in the previous model modulo their slightly different parametrization. We have proved that the composition of all parts is trace-equivalent to the original monolithic system.
\end{example}

\subsection{Step 5: I/O Specifications}

We can now perform the central step of our approach: we extract, for each component, an I/O specification that defines the implementation's I/O behavior. Our translation maps an I/O-guarded \christoph{parametrized} event system $\Systmodel(\vec{p})$ to an I/O specification of the form 
\begin{align*}
	\phi(\vec{p}) = \exists t.\, \token(t) \sand P(t, \vec{p}, s_0),
\end{align*}
where $P$ is a co-recursively defined predicate encoding the events and $s_0$ is the event system's initial state.\footnote{\christoph{The formal development of our theory (Section~\ref{sec:theory}) is based on event systems with single initial states. This is without loss of generality since multiple initial states can easily be introduced by a non-deterministic initialization event.}} 
The predicate $P$ takes a place $t$, the event system's (and future program's) parameters~$\vec{p}$, and the event system's abstract state $s$ as arguments. The predicate $P$ contains, for each event and all values of its output parameters satisfying the guard, a permission to execute the I/O operation represented by the event, and another instance of itself with the argument representing the new state resulting from applying the event's update function. 
This translation
is formally defined and proved correct in Section~\ref{sec:theory}. 
Here, we explain the intuition behind it using our example. 

\begin{figure}[t]
\begin{center}
\begin{align*}
	P(t, (i, a), s) =_\corec &~(\exists t'.\, \id{setup}(t, t') \sand P(t', (i, a), s\record{\obuf{} := \obuf(s) \cup \{i\}})) \sand {} \\
	&~(\exists m, t'.\, \id{UDP\_receive\_int(t, m, t') \sand P(t', (i, a), s\record{\ibuf{} := \ibuf(s) \cup \{m\}})}) \sand {} \\
	&~(\sall m.\, \ifte{m \in \ibuf(s) \wedge i < m}{\exists t'.\, \id{accept}(t, m, t') \sand {} \\
	& \phantom{~(\sall m.\, \exists t'.\, } P(t', (i, a), s\record{\obuf{} := \obuf(s) \cup \{m\}})}{\tru} ) \sand {}\\
	&~(\sall m, a'.\, \ifte{m \in \obuf(s) \land a' = a}{\exists t'.\, \id{UDP\_send\_int}(t, (m, a'), t') \sand {} \\
	&\phantom{~(\sall m, a'.\,} P(t', (i, a), s)}{\tru} ) \sand {}\\
	&~ (\ifte{i \in \ibuf(s)}{\exists t'.\, \id{elect}(t, t') \sand {} 
	P(t', (i, a), s\record{\winner{} := \tru})}{\tru}).
\end{align*}
\vspace{-4mm}
\caption{I/O specification of leader election nodes.}
\label{fig:io-spec-leader-election}
\end{center}
\vspace{-2mm}
\end{figure}

\begin{example}
\label{ex:iospec}
Figure~\ref{fig:io-spec-leader-election} defines the predicate $P(t, (i,a), s)$ for our example, where $i$ and $a$ denote the local node identifier $i$ and the address $a$ of the next node in the ring. 
The fourth top-level conjunct of $P$ corresponds to the $\system{\id{send}}_{i,a}(m, a')$ event from the previous step. It states that for all possible values of the output parameter $(m,a')$ that fulfill the event's guard $m \in \obuf(s) \land a' = a$, there is a permission to perform the I/O operation $\id{UDP\_send\_int}$ (which is mapped to the $\system{\id{send}}_{i,a}$ event) and another instance of $P$ at the operation's target place with the same state, since $\system{\id{send}}_{i,a}$ does not change the local state. 
The second (simplified) conjunct corresponds to the $\system{\id{receive}}_{i,a}$ event and 
existentially quantifies over the event's input parameter and contains another predicate instance with an updated state $s\record{\ibuf{} := \ibuf(s) \cup \{m\}}$ as defined by $\system{\id{receive}}_{i,a}$.
\christoph{The remaining conjuncts correspond to the internal events $\id{setup}$, $\id{accept}$, and $\id{elect}$.}
\end{example}

\subsection{Step 6: Component Implementation and Verification}
\label{ssec:component-implementation-and-verification}

In the final step, we prove for every component that its implementation $c$ fulfills the I/O specification~$\phi$ that was extracted in the previous step. This requirement is expressed as
\begin{equation}
\traces(c) \subseteq \traces(\phi), 
\label{eq:program-traces-cond}
\end{equation}
\ie, the I/O traces of the component implementation $c$, as defined by 
its operational semantics, are included in those specified by the I/O specification~$\phi$. 
\christoph{Here, we elide possible parameters $\vec{p}$ of the program $c$ and the I/O specification $\phi$ for the sake of a lighter notation.}
Since I/O specifications are language-agnostic, the implementation may use any programming language. Verifying~\eqref{eq:program-traces-cond} typically requires defining a suitable I/O-aware semantics of the chosen language that defines the I/O traces produced by its programs. We assume that the verification technique used defines an interpretation of Hoare triples of the form $\models \Hoare{\phi}{c}{\psi}$, and a sound program logic to prove them.
We only rely on the \emph{verifier assumption} 
\christoph{stating that the correctness of a command with respect to a precondition implies the trace inclusion between the command and the precondition assertion.}
Since the I/O permissions in the precondition restrict which I/O operations may be performed, these triples do not trivially hold even though the postcondition is $\tru$:
\begin{equation}
\models \Hoare{\phi}{c}{\tru} \;\text{ implies }\; \traces(c) \subseteq \traces(\phi).
\tag{VA}\label{eq:verifier-assumption}
\end{equation}

Our approach leaves open the mechanism for proving such Hoare triples. In principle, proofs can be constructed using an interactive theorem prover, an SMT-based automated verifier, or even as pen-and-paper proof.  
I/O specifications consist only of constructs that can be expressed using standard 
separation logic with abstract predicates. This allows us to leverage existing tool support, in particular, proof automation. For instance, encoding such specifications in VeriFast required less than 25 LoC to declare types for places and abstract predicates for chunks, and no tool modifications.

\begin{figure}[t]
\begin{center}
\begin{lstlisting}[basicstyle=\scriptsize\sffamily,language=Python,caption={Pseudocode of the leader election algorithm with proof annotations (simplified). The method $\id{try\_receive\_int}$ tries a receive operation and either returns an identifier or times out and returns $\id{None}$.},captionpos=b,label={lst:impl}]
def main(my_id: int, out_host: str):
    #@ PRE: exists t . token(t) and P(t, (my_id, out_host), INIT_STATE)
    #@ POST: true
    to_send = my_id    # variable stores only the largest identifier seen so far
    setup()            # ghost I/O operation
    while True:
        #@ INVARIANT: exists t, s . token(t) and P(t, (my_id, out_host), s)
        #@ INVARIANT: to_send in s.obuf and to_send >= my_id
        send_int(out_host, to_send)
        msg = try_receive_int()     # returns None on timeout
        if msg is not None:
            if msg is my_id:
                elect()    # ghost I/O operation
                break
            elif msg > to_send:
                accept(msg)    # ghost I/O operation
                to_send = msg
\end{lstlisting}
\end{center}
\end{figure}

The I/O specification is (currently manually) converted to the syntax of the respective tool, and the program verifier is then used to prove the correctness of the program with respect to its I/O specification. Assuming~\eqref{eq:verifier-assumption} holds for the tool, this guarantees the required trace inclusion~\eqref{eq:program-traces-cond}.

\christoph{Besides the verification, we must manually justify that the actual program's I/O operations satisfy the assumptions encoded in the environment model. For example, we may implement an order-preserving network channel model using TCP sockets, but not UDP sockets. Conversely, it is sound to implement an unordered channel model using either TCP or UDP communication.}

\begin{example}
\label{ex:implementation}
We have implemented three versions of the leader election algorithm, a sequential and a concurrent one in Java and a sequential version in Python, and verified in VeriFast and Nagini that these implementations conform to their I/O specification:
$\models \Hoare{\phi(i, a)}{\id{main}(i, a)}{\tru}$.
All three implementations are interoperable and successfully elect a leader in actual networks. 

Listing~\ref{lst:impl} shows a slightly simplified \christoph{pseudocode version of the sequential implementation; the Java and Python versions share} the same structure but contain additional annotations as required by the respective verifier. The concurrent version uses two separate threads for receiving and sending identifiers.
\christoph{We use the standard UDP socket libraries of the} respective languages; since the API in both cases is structured differently, we defined the I/O operations used in the specification to be compatible with both. We annotated the relevant I/O library operations with contracts, whose correctness is assumed and must be validated manually \christoph{against the environment model}. 

\begin{figure}[t]
\begin{center}
\begin{lstlisting}[basicstyle=\scriptsize\sffamily,language=Python,caption={The simplified pseudocode contract for a library method for sending packets via UDP. The names starting with question marks are implicitly existentially quantified. \textit{connected} is a separation logic predicate that contains the heap memory of a UDP socket object connected to the shown address.},captionpos=b,label={lst:send_stub}]
def send_int(address: str, msg: int):
    #@ PRE: token(?t) * UDP_send_int(t, (msg, address), ?tp)
    #@ PRE: connected(this, address)
    #@ PRE: 0 <= msg <= MAX_MSG_VAL
    #@ POST: token(tp)
    #@ EXCEPTIONAL POST: token(t) * UDP_send_int(t, (msg, address), tp)
\end{lstlisting}
\end{center}
\end{figure}

\christoph{%
Listing~\ref{lst:send_stub} shows a simplified pseudocode specification of a message sending function. Its precondition consists of three typical parts that
(i) specifiy the I/O behavior of this function in terms of tokens and I/O permissions, 
(ii) constrain the program state, in this case requiring that our socket is already connected to the receiver address, and  
(iii) impose additional restrictions on messages that do not exist on more abstract levels, in this case, that the sent message falls within a valid range.
}


Since the I/O specification describes the allowed I/O behavior in terms of the model's state, the verification process requires relating the program to the model state. The latter is represented in the program as a \emph{ghost state}, which is present only for verification purposes, but not during program execution. 
If the verifier can prove for a given program point that a token for a place~$t$ and the predicate $P(t, (i, a), s)$ are held 
for some model state $s$, this means that the current program state corresponds to the model state $s$. The invocations of the internal operations $\id{setup}$, $\id{accept}$, and $\id{elect}$ in the code update the ghost state to stay aligned with the program state.

As an optimization, the implementations store and forward only the largest identifier seen so far, since smaller ones can never belong to the leader. The verifier proves the loop invariant that this largest integer is always in the output buffer and may therefore be sent out.
%

Note that, although we do not prove liveness, our implementation repeatedly resends UDP packets since packets may be lost. This will continue even after a leader has been elected since our simple protocol does not include a leader announcement phase.
\end{example}

\subsection{Overall Soundness Guarantees}
\label{ssec:summary-and-guarantees}

Our methodology provides a general way of proving properties of a distributed system. Table~\ref{tab:steps-and-guarantees} summarizes the soundness guarantees of each step (see also Figure~\ref{fig:approach}).  
We now show how to combine them to obtain the overall soundness guarantee that \christoph{the models' properties are preserved down to the implementation.}
\christoph{
We will first discuss the simpler case with a single instance of each component and later extend this reasoning to multiple instances.
}
%
%

\begin{table}[t]
\caption{Method overview with guarantees of each step (initial states elided).}
\label{tab:steps-and-guarantees}
\vspace{-5pt}
\begin{center}
\begin{tabular}{c|l|l|l}
steps  & activity & guarantee & justification  
\\ \hline
1--3   
& model refinements and
& $\hat{\pi}_i(\traces(\mathcal{M}_m)) \subseteq \traces(\mathcal{M}_i)$
& ref.~$\mathcal{M}_{i+1} \refines{\pi_{i+1}} \mathcal{M}_{i}$
\\ 
& interface refinement
& where $\hat{\pi}_i = \pi_m \circ \cdots \circ \pi_{i+1}$
& Theorem~\ref{thm:refinement-soundness}
\\ \hline
4          
& decompose $\mathcal{M}_m$ into 
& $\traces(\mathcal{M}_m) = $
& mutual refinement   
\\
& $\Systmodel_1, \ldots, \Systmodel_n$, and $\Envmodel$ 
& $\traces((\Systmodel_1 \interleave \ldots \interleave \Systmodel_n) \eparallel{\chi_{e}} \! \Envmodel)$
& Theorem~\ref{thm:refinement-soundness}
\\ \hline
5          
& translate $\Systmodel_j$ into $\phi_j$
& $\traces(\phi_j) = \traces(\Systmodel_j)$
& Theorems~\ref{thm:ioges-into-proc-correct} and~\ref{thm:gorilla-glue}
\\ \hline
6      
& verify $\models \Hoare{\phi_j}{c_j}{\tru}$
& $\traces(c_j) \subseteq \traces(\phi_j)$
& Assumption~\eqref{eq:verifier-assumption}
\end{tabular}
\end{center}
\end{table}

Let our implemented system be defined by
$
\mathcal{S} = (\mathcal{C}_1 \interleave \ldots \interleave \mathcal{C}_n) \eparallel{\chi_e}\!\Envreal, 
$
where each $\mathcal{C}_j$ is the event system defining the operational semantics of $c_j$, \ie, 
$\traces(\mathcal{C}_j) = \traces(c_j)$ and suppose the event system~$\Envreal$ corresponds to the real environment. 
From Steps~5--6's guarantees listed in Table~\ref{tab:steps-and-guarantees}, we derive
$
\traces(\mathcal{C}_j) \subseteq \traces(\Systmodel_j).
$    
\christoph{Furthermore, we assume that  the environment model faithfully represents the real environment, \ie,}
\begin{equation}
\tag{EA}\label{eq:environment-assumption}
\christoph{\traces(\Envreal) \subseteq \traces(\Envmodel).}
\end{equation}

Using Corollary~\ref{cor:compositional-refinement} and Step~4's guarantee, we derive that the implemented system's traces are included in the interface model's traces:
$
\traces(\mathcal{S}) \subseteq \traces(\mathcal{M}_m).
$  
\christoph{
Using Lemma~\ref{lem:property-preservation} and the guarantees of Steps~1--3, we derive our overall soundness guarantee stating that any trace property $P_i$ of model $\mathcal{M}_i$ is preserved all the way down to the implementation:
}
\begin{equation}
\label{eq:igloo-sound}\tag{SOUND}
\christoph{
\mathcal{M}_i \models P_i \;\Longrightarrow\; \mathcal{S} \models \hat{\pi}_i^{-1}(P_i).
}
\end{equation}

\christoph{%
With multiple component instances, the event systems $\Systmodel_j$ and $\mathcal{C}_j$ have the form of compositions $\interleave_{i\in I}\: \Systmodel_j(\vec{\gamma}_j(i))$ and  $\interleave_{i\in I}\: \mathcal{C}_j(\vec{\gamma}_j(i))$ for some configuration map $\vec{\gamma}_j$ and we have $\traces(\mathcal{C}_j(\vec{p})) = \traces(c_j(\vec{p}))$ for all parameters $\vec{p}$. The guarantees of Step~4 in Table~\ref{tab:steps-and-guarantees} hold for these compositions and those of Steps~5--6 for the individual parametrized components. We then easily derive the guarantee~\eqref{eq:igloo-sound} using a theorem for indexed interleaving composition similar to Corollary~\ref{cor:compositional-refinement}.  
}

\subsection{Trust Assumptions}
\label{ssec:trustassumptions}

The guarantees described in the last subsection hold under the following trust assumptions:

\begin{enumerate}

	\item \emph{Environment assumptions:} The modeled environment includes all possible behaviors of the real system's environment, \christoph{as formulated in Assumption~\eqref{eq:environment-assumption}}. This means it faithfully represents the behavior of all real components below the interface of the I/O library used in the implementation. These may include the I/O library, the operating system, the hardware,  the communication network, as well as potential attackers and link or node failures. 
	Recent work by \citet{ManskyHA20} demonstrates how to connect the verification of I/O behavior to a verified operating system to remove the I/O library and operating system from the trust assumptions. Their approach could be adapted to our setting. Other environment assumptions, such as the attacker model, remain and cannot be eliminated through formal proofs.


	\item \emph{Correct program configuration:} The programs are called with parameters \christoph{conforming to the configuration map $\vec{\gamma}$.}
	For instance, our case study assumes that each node program is initialized with parameters \christoph{$\vec{\gamma}(i) = (i, a)$} where $i$ is a node identifier and $a = \addr{\nextnode{i}}$ is the network address of $i$'s successor in the ring. 
	Verifying the configuration, which typically happens using scripts or manual procedures, is orthogonal to program correctness.

	\item \emph{Manual translation of I/O specification:} The I/O specification is translated correctly from the Isabelle/HOL to the code verifier tool's syntax. This translation is manual, but well-defined 
	and can thus be automated in the future by a translator implemented and verified in Isabelle.

	\item \emph{Toolchain soundness.} The verification logics and tools are sound and all proofs are thus correct. They agree on the semantics of I/O specifications and the code verifier satisfies the verifier assumption~\eqref{eq:verifier-assumption}. In our case, this concerns Isabelle/HOL and either VeriFast (which uses~Z3~\cite{de2008z3}) or Nagini (which depends on Viper~\cite{DBLP:conf/vmcai/0001SS16} and~Z3). 
	The trusted codebase could be reduced further by using a foundational verifier such as VST \cite{AppelVST12}, at the cost of a higher verification effort.
\end{enumerate}


\section{Case Studies: Fault-tolerance and Security}
\label{sec:case-studies}

We evaluate our methodology with two additional case studies that demonstrate the generality and versatility of Igloo. 
Concretely, we study a fault-tolerant, primary-backup replication system and an authentication protocol. 
These case studies showcase different features of our approach: (1) proofs of global, protocol-level properties, (2) environments with different types of networks as well as faulty and adversarial environments, (3) different component types with unbounded numbers of instances, and (4) \christoph{sequential and concurrent} implementations in different programming languages.
%
%

\subsection{Primary-backup Replication}
\label{sec:primary-backup-replication}
We apply our methodology to a primary-backup replication protocol \christoph{presented} by \citet[Sec. 2.3.1]{charron2010replication}. This case study exhibits the following features supported by our approach: (i) an environment that includes a fault model for components, (ii) reliable, in-order communication implemented by TCP, and (iii) sequential as well as concurrent implementations.

\subsubsection{Description}
\tk{
The primary-backup replication protocol maintains an append-only distributed data structure, called \emph{log}, which is a sequence of arbitrary data (\eg, database commands).
One server, the \emph{primary}, receives requests from clients to append elements to the log. 
The primary server first synchronizes a given append request with all other servers, the \emph{backups}, before 
extending its own log and responding to the client.
%
The protocol's goal is to maintain \emph{backup consistency}, \ie, the log stored on the primary when it replies to the client
is a prefix of the logs stored at all backups.
%
We assume a fail-stop fault model, where servers can fail \tk{but} not recover, and perfect failure detection, where all clients and servers eventually detect server failures (see, \eg,~\cite{DBLP:books/daglib/0025983}).
%
%
The servers are totally ordered, and initially, the first server is the primary. A backup server becomes the new primary once it detects that all previous servers in the order have failed. 
}
%
%

\subsubsection{Protocol Model}
\tk{
In this case study, we have chosen not to model an abstract version of the protocol (Step~1), but rather the concrete protocol (Step~2) directly. 
While the normal (fault-free) operation of the protocol is straightforward, the non-deterministic failures and their detection
add significant complexity.
When a new primary server takes over, its log may diverge from those of the backups. 
By synchronizing its log with those of the backup servers, it reestablishes a consistent state before responding to a client.
}
\tk{
Once backup $b$ has replied to a sync request from primary $a$, all logs contained in their states and sent between them are totally ordered in a prefix relation:
}
\[
\emph{ordered-wrt-prefix}(
\mklist{\textit{log}(a)} \cc 
\textit{transit}(b, a)  \cc 
\mklist{\textit{log}(b)}  \cc 
\textit{transit}(a, b) \cc
\mklist{\textit{pend}(a)}
).
\]
\tk{
Here, $\textit{pend}(a)$ is the primary's log including all pending additions, and $\textit{transit}(a, b)$ is the sequence of logs in transit from $a$ to $b$ (in sync requests or responses).
Additional inductive invariants and history variables are needed to derive this invariant. Careful modeling is also required for the failure detection.
The environment state contains a set \elive{}, consisting of all live servers. Since clients and servers may detect failures only after some delay, each of them has its own set \alive{} containing all servers except for those whose failure was noticed. As we show in an invariant, \alive{} sets are supersets of \elive{}.
In total, our proof of backup consistency relies on nine invariants. 
%
}

\subsubsection{Towards an Implementation}

\tk{
Our protocol model already includes the input and output buffers that are typically only added in Step 3.
This allows us to directly decompose the model into a trace-equivalent composition of client and server components and an environment (Step~4).  
}

\christoph{
Besides \emph{send} and \emph{receive}, the clients and servers have a third I/O event, \emph{detect-failure}, to query the failure detector. Its system part removes a server from the component's \alive{} set whereas its environment part has of a guard ensuring that the removed server is not in the \elive{}-set.
}

\tk{
We extract I/O specifications for both the server and the client component types (Step~5).
This extraction, as well as the equivalence proof between the component's event systems and their I/O specifications, follows the same standard pattern as in our security case study below.
Thus, this step could likely be automated in the future.
}


\subsubsection{Code Level}

We implement a sequential client and a concurrent server in Java. 
To handle requests in parallel, we split the server into multiple threads, communicating over shared buffers, \faw{guarded by a lock}. For TCP, we use Java's standard socket library.
\faw{
For failure detection, clients and servers get a failure detector object as an argument. 
This object provides methods to query whether a server has failed.
If Java's socket API determines that a connection attempt times out, the failure detector is queried. According to the protocol, failed servers are removed from the set of \alive{} servers, otherwise the last action is repeated.
For this case study, we provide a dummy failure detector instance, which stores the set of failed server ids.
When we kill a server in our execution script, the server process is terminated and its id is added to that  set.
The setup of contracts for trusted libraries and the verification of our client and server implementations in VeriFast against their respective I/O specifications closely follows our approach described in Section~\ref{ssec:component-implementation-and-verification}. 
}

\faw{
In the server, all shared data is protected by a lock.
For this lock, we define a \emph{monitor invariant}~\cite{LeinoMueller09},
declaring that the lock owns all shared data structures and 
the I/O permissions.
The ownership of these resources is transferred to a thread when it acquires the lock
and is transferred back when the lock is released.
The I/O permissions define which I/O operations may be performed \christoph{depending on the component model's state.} 
Since the implementation's behavior depends on the actual program state, in particular the state guarded by the lock, we also need to link the model state to the actual state in the monitor invariant.
We do this using an abstraction function.
%
\christoph{Thus, when executing an I/O operation, the associated model state update must be matched by a corresponding program state update before the lock can be released.
}}

\subsection{Two-party Entity Authentication}
\label{sec:security-protocols}

%
We also use our methodology to derive and implement a two-party authentication protocol standardized as ISO/IEC 9798-3. This case study demonstrates the following features of our approach: 
(i)~an adversarial environment, 
\christoph{
(ii)~the use of cryptography, 
}
(iii)~unreliable, unordered channels implemented by UDP, and 
(iv)~the use of data abstraction linking abstract message terms and their concrete \christoph{cryptographic} bitstring representations. 

\subsubsection{Description}

\christoph{
In standard (informal) Alice\&Bob notation, the protocol reads as follows.
\[
\begin{array}{lcll}
\textnormal{M1.} & A\rightarrow B & : & A,B,N_A \\
\textnormal{M2.} & B\rightarrow A & : & [N_B,N_A,A]_{\mathsf{pri}(B)}
\end{array}
\]
Here $N_A$ and $N_B$ are the nonces generated by the initiator $A$ and the responder $B$ respectively, and $[M]_{\mathsf{pri}(B)}$ denotes the digital signature of the message $M$ with $B$'s private key. 
First, the initiator generates a fresh nonce and sends it to the responder. Afterwards, the responder generates his own nonce, signs it together with the initiator's nonce and name, and sends the signed message to the initiator. The nonces provide replay protection. The protocol's authentication goal is that the initiator agrees with the responder on their names and the two nonces.
}

\subsubsection{Abstract and Protocol Models}

\christoph{
We follow the four-level refinement strategy for security protocols proposed by~\citet{DBLP:journals/jcs/SprengerB18}. Its levels corresponds to the first two steps of our methodology. We start from an abstract model of the desired security property, in our case, injective agreement~\cite{DBLP:conf/csfw/Lowe97a}. We then refine this into a protocol model that introduces the two protocol roles (\ie, Igloo component types) and their runs.
Each protocol run instantiates a role and stores the participants' names and received messages in its local state. We model an unbounded number of runs as a finite map from run identifiers to the runs' local states. 
The runs communicate over channels with intrinsic security properties. In our case, $A$  sends her nonce on an insecure channel to $B$, who returns it with his own nonce on an authentic channel back to~$A$. The attacker can read, but not modify, messages on authentic channels. 
}

\christoph{
In a second refinement, we represent messages as terms, use digital signatures to implement authentic channels, and refine the attacker into a standard Dolev-Yao attacker~\cite{DBLP:journals/tit/DolevY83} who completely controls the network.
The attacker's capabilities are defined by a closure operator $\id{DY}(M)$, denoting the set of messages he can derive from the set of observed messages $M$ using both message composition (such as encryption) and decomposition (such as decryption). All refined models  correspond to Igloo protocol models with different levels of detail. The refinement proofs imply that they all satisfy injective agreement. 
}

\subsubsection{Towards an Implementation}

\christoph{
We further refine the final protocol model into an interface model, where the messages are represented by bitstrings of an abstract type $\id{bs}$, which we later instantiate to actual bitstrings. We assume a surjective abstraction function $\alpha\!: \id{bs} \fun \id{msg}$ from bitstrings to messages. A special term $\kw{Junk}$ represents all unparsable bitstrings. 
We define a concrete attacker operating on (sets of) bitstrings by  
$\id{DY}_{bs} = \alpha^{-1} \circ \id{DY} \circ \alpha$, where 
$\alpha^{-1}$ is the inverse of $\alpha$ lifted to message sets. 
%
The simulation relation includes the equation $\alpha(\id{CIK})  = \id{IK}$, where $\id{CIK}$ and $\id{IK}$ respectively represent the concrete attacker's knowledge and the Dolev-Yao attacker's knowledge.
This allows us to transfer the Dolev-Yao model's security guarantees to the implementation.
}

\christoph{
We also augment each role's state with buffers for receiving and sending bitstring messages. The send and receive I/O events each take a network address and a bitstring message. 
%
The two roles' events still operate on message terms, but exchange messages with I/O buffers. For example, we refine a guard $m \in \id{IK}$ modeling a message reception in the protocol model by $bs \in \id{ibuf} \land \alpha(bs) = m$ in the interface model. 
%
The roles' events have several parameters, including the run id, the participants' names, and the long-term key bitstrings, which later become program parameters. 
}

\christoph{
Finally, we decompose the interface model into an environment and (an unbounded number of) initiator and  responder components. From these, we derive the initiator's and the responder's I/O specifications. Our framework provides lemmas to support the corresponding proofs.
}

\subsubsection{Code Level}

\christoph{
We implement each component type (protocol role) in both Java and Python. Each role's implementation sends and receives one message and checks that received messages have the correct form. 
%
Our implementation provides a (trusted) parsing and marshalling function for each type of abstract message (\eg, signatures, identifiers, pairs), each specified by a contract relating the message to its bitstring representation using $\alpha$. This yields a partial definition of $\alpha$, which we otherwise leave abstract to avoid modeling bit-level cryptographic operations.
Since $\alpha$ relates each bitstring to a unique message term (or $\kw{Junk}$), we ensure that every bitstring representing a non-junk message can be parsed unambiguously by tagging bitstring messages with their type. We employ widely-used cryptographic libraries: PyNaCl for Python and Java's standard library.
}%
\begin{figure}[t]
\begin{center}
\begin{lstlisting}[basicstyle=\scriptsize\sffamily,language=Python,caption={Simplified pseudocode implementation of a function for verifying signatures and its (trusted) specification. The pre- and postconditions relate the bitstrings to their abstract message representations that are used in the I/O specification. Variable \textit{a} is implicitly existentially quantified (denoted by a question mark).},captionpos=b,label={lst:signature_stub},escapechar=/]
def verify(signed_msg: bytes, key: bytes) -> bytes:
    #@ PRE: alpha(key) == Key(publicKey(?a)))
    #@ POST: alpha(signed_msg) == Crypt(Key(privateKey(a))), alpha(result)) 
    #@       && len(signed_msg) == len(result) + 1 + SIGNATURE_SIZE
    #@ EXCEPTIONAL POST: forall msg. alpha(signed_msg) != Crypt(Key(privateKey(a)), alpha(msg)) 
    if len(signed_msg) == 0 or signed_msg[0] != SIGNATURE_TAG:
        raise InvalidDataException("Invalid tag on signature") 
    return nacl.verify(signed_msg[1:])   # raises exception if signature is not valid
\end{lstlisting}
\end{center}
\vspace{-7pt}
\end{figure}

\christoph{
Listing~\ref{lst:signature_stub} shows the contract and implementation of the signature verification function. It checks the signature's tag and then calls the cryptographic library's signature verification function, which either returns the corresponding plaintext message or raises an exception for invalid signatures. 
The contract requires that the key bitstring represents some agent's public key and guarantees that the function terminates normally iff the input bitstring was signed with that agent's private key. 
}

\christoph{
We use VeriFast and Nagini to prove the implementations correct. 
As an overall result, we obtain a proof that the entire system satisfies the intended authentication property. 
}

%
%
%
%

\section{From Event Systems to I/O Specifications in Separation Logic}
\label{sec:theory}

\christoph{
A central aspect of our methodology is to soundly link protocol models and code. 
We use I/O specifications to bridge the substantial semantic gap between the component models that result from the interface model decomposition step and the code. 
The component models are given as event systems, whereas I/O specifications are separation logic formulas built over I/O permissions, possibly employing co-recursive predicates to specify non-terminating behavior.  What event systems and I/O specifications share is a trace semantics. We therefore define a translation of the components' event systems into I/O specifications and prove that they are trace-equivalent by establishing a simulation in each direction. It is this trace equivalence that, together with the verifier assumption~\eqref{eq:verifier-assumption} and the compositional refinement theorem (Corollary~\ref{cor:compositional-refinement}), enables the seamless switching from models to code verification in our methodology (cf.~Section~\ref{ssec:summary-and-guarantees}).%
}

Instead of translating component models directly into I/O specifications, we will pass through an intermediate translation to a sequential process calculus. 
This intermediate step has several benefits. 
First, it shifts the focus from data (guards and state updates) to I/O interactions.
Second, it introduces a minimal syntax for these interactions, providing a useful structure for the correctness proofs.
Third, it builds a bridge between two popular specification formalisms: process calculi on the modeling level and permission-based I/O specifications in separation logic on the code level.

The main challenge in proving our result stems from the disparate semantics of event systems and processes on one hand and I/O specifications on the other hand. 
Concretely, we must show that a process $P$ can simulate the traces induced by its corresponding  assertion $\phi_P$. As we shall see, an assertion's behavior is the intersection of all its models' behaviors. This is challenging as some models of $\phi_P$ exhibit spurious behavior not present in $P$ and \christoph{also} due to  the absence of a single model representing exactly the behavior of $\phi_P$.

\subsection{Background: Semantics of I/O Specifications for Separation Logic}
\label{ssec:io-separation-logic-semantics}

We slightly extend the semantics of the I/O specifications of~\citet{DBLP:conf/esop/Penninckx0P15} by enforcing a typing discipline on inputs by using a \emph{typing} function
$
\Ty : \Bio \times \Val \fun (\Pow(\Val)\setminus\{\emptyset\}),
$
which assigns to each I/O operation and output value a type, given as a non-empty set of accepted inputs. 
An I/O permission $\bio(t,v,w,t')$ and its input value $w$ are \emph{well-typed} if $w\in\Ty(\bio,v)$, and a chunk is well-typed if it is a well-typed I/O permission or a token. The typing function specifies a relational contract for each I/O operation. The set $\Ty(bio,v)$  typically captures the possible results of an I/O operation, which is useful to match I/O operations to I/O library functions. 

\subsubsection*{Assertion Semantics.}
\label{ssec:assertion-semantics}

The formal semantics of 
our assertions is co-inductively defined over \emph{heaps} $h \in \Heap$, where $\Heap = \multiset{\Chunk}$ is the set of multisets of chunks (see Section~\ref{ssec:io-separation-logic}), as follows.
\[
\begin{array}{lcl}
h \models b &  \Longleftrightarrow & b = \tru \\
h \models c   & \Longleftrightarrow & c \melem h \,\land\, \text{$c$ is well-typed}  \\
h \models \phi_1 \sand \phi_2  & \Longleftrightarrow & \exists h_1,h_2\in\Heap. \, h = h_1 \mplus h_2 \,\land\, h_1\models\phi_1 \,\land\, h_2\models\phi_2 \\
h \models \exists v.\,\phi  & \Longleftrightarrow & \exists v' \in \Val .\, h\models\phi[v'/v] \\
h \models \exists t.\,\phi  & \Longleftrightarrow & \exists t' \in \Place. \, h\models\phi[t'/t] 
\end{array}
\]

Note that a \emph{heap} here is different from \christoph{a program's heap memory}; its chunks represent permissions to perform I/O operations or tokens, not memory locations and their values.
Here, we elide the ordinary program heap for simplicity and since it is orthogonal to modeling I/O behavior.

The semantics of assertions satisfies the following monotonicity property.
\begin{lemma}[Monotonicity]
\label{lem:heap-extension}
If $h \models \phi$ then $g \mplus h \models \phi$. 
\end{lemma}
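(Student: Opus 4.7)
The plan is to prove this by coinduction on the semantic judgment $\models$, which is coinductively defined because assertions $\phi$ themselves may be coinductive (as noted in Example~\ref{ex:IOspec-read-write}). The appropriate coinductive relation to consider is
\[
R \;=\; \{(h', \phi) \mid \exists g, h.\; h' = g \mplus h \,\land\, h \models \phi\},
\]
and we show $R \subseteq\, \models$ by checking that $R$ is consistent with each of the semantic clauses; this yields the lemma by instantiating $h' = g \mplus h$.

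The case analysis proceeds on the head constructor of $\phi$. For the atomic boolean case $\phi = b$, the semantics is heap-independent, so there is nothing to do. For the chunk case $\phi = c$, from $c \melem h$ and the monotonicity of multiset membership under $\mplus$ we obtain $c \melem g \mplus h$, and well-typedness of $c$ is unaffected. For the existential cases, the witness value or place that works for $h$ also works for $g \mplus h$ after substitution, and we reduce to a smaller assertion by the coinduction hypothesis. The interesting case is the separating conjunction $\phi_1 \sand \phi_2$: given a decomposition $h = h_1 \mplus h_2$ with $h_i \models \phi_i$, we re-associate as $g \mplus h = (g \mplus h_1) \mplus h_2$, then apply the coinduction hypothesis to the left subheap (witnessing $(g \mplus h_1, \phi_1) \in R$) and reuse the existing derivation on the right (trivially $(h_2, \phi_2) \in R$ by taking the empty extension).

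The main obstacle is the coinductive set-up itself, in particular justifying the asymmetric treatment of $g$ in the $\sand$ case: the extension must be absorbed entirely into one of the two sub-heaps rather than being split. Once one observes that commutativity and associativity of $\mplus$ make this choice arbitrary, the step goes through. In the Isabelle/HOL formalization this would be discharged by applying the coinduction principle automatically generated for the semantic judgment and then handling each clause by the corresponding introduction rule, with the multiset lemmas for $\mplus$ doing the heavy lifting in the atomic and $\sand$ cases.
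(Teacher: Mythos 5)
Your proof is correct and follows the approach the paper itself relies on: the paper only states the lemma and discharges it in the Isabelle/HOL formalization, where the natural (and evidently intended) argument is exactly a coinduction on the coinductively defined satisfaction relation, with the extension $g$ absorbed into one conjunct in the $\sand$ case and the multiset facts about $\mplus$ handling the atomic cases. Your relation $R$ and the clause-by-clause check (including recovering $(h_2,\phi_2)\in R$ via the empty extension $\mempty$) are exactly what is needed.
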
 

\begin{example}
\label{ex:simple-IOspec-heaps}
Consider the I/O specification 
$
\phi = \token(t) \sand (\exists x, t', t''.\, \id{recv}(t, x, t') \sand \id{send}(t', 2x, t''))
$ 
from Example~\ref{ex:simple-IOspec}. 
Examples of heaps
that satisfy $\phi$ are
$h_1 = \mset{\token(t), \id{recv}(t,12,t_1), \id{send}(t_1,24,t_2)}$,
$h_2 = \mset{\token(t), \id{recv}(t,12,t), \id{send}(t,24,t)}$, and
$h_3 = h_1 \mplus \mset{\id{send}(t_1,35,t_2)}$.
More generally, all heaps satisfying $\phi$ have the form 
$
H_\phi(x, t',t'',h) = \mset{\token(t), \id{recv}(t,x,t'), \id{send}(t',2x,t'')} \mplus h
$ 
for some value $x$, places $t'$ and $t''$, and heap $h$. We will return to this example below.
\end{example}

\begin{figure}[t]
\begin{center}
\[
\begin{array}{c}
\infer[\BioRule]{\mset{\token(t),\bio(t,v,w,t')} \mplus h \hact{\bio(v,w)} \mset{\token(t')} \mplus h}{w\in\Ty(\bio,v)}  \\[1.5em] 
\infer[\ContradictRule]{\mset{\token(t),\bio(t,v,w,t')} \mplus h \hact{\bio(v,w')} \bot}{w \neq w' & w,w'\in\Ty(\bio,v)}  
\qquad
\infer[\ChaosRule]{\bot \hact{\bio(v,w)} \bot}{w \in \Ty(bio,v)}  
\end{array}
\]
\vspace{-4mm}
\caption{Heap transition rules.}
\label{fig:heap-trans}
\end{center}
\vspace{-2mm}
\end{figure}

\subsubsection*{Heap Transitions.}
\label{ssec:heap-transitions}

Heaps have a transition semantics, where I/O permissions are consumed by pushing a token through them. This semantics is given by the event system $\ESH = (\option{\Heap}, \ActBio, \;\hact{})$ with the set of states $\option{\Heap}$ and the set of events
$
\ActBio = \{\bio(v,w) \mid \bio\in\Bio \land v,w\in\Val\},
$
called \emph{I/O actions}. 
Note that $\bio$ is overloaded, with the 2-argument version yielding trace events and the 4-argument one defining a chunk. 
An I/O action $\bio(v,w)$ is \emph{well-typed} if $w\in\Ty(\bio,v)$ and a trace $\tau \in \ActBio^*$ is well-typed if all its events are.

The transition relation $\hact{}$ of $\ESH$ is defined by the rules in Figure~\ref{fig:heap-trans} and mostly matches the place-I/O-permission multigraph intuition given in Section~\ref{ssec:io-separation-logic}.
The rule~\BioRule\ corresponds to a normal heap transition executing an I/O operation. The input read is well-typed. The token moves to the I/O permission's target place and the permission is consumed and removed from the heap. 
The rule~\ContradictRule\ describes the situation where a transition $\bio(v,w)$ would be possible, but the environment provides an
input $w' \neq w$ that is different from the one predicted by the chunk. In this case, the heap can perform a transition $\bio(v,w')$ to the special state $\bot$.
In this state, arbitrary (well-typed) behavior is possible by the rule~\ChaosRule.  
Hence, all traces of $\ESH$ are well-typed.
For a set of heaps $H$, we define the set of traces of $H$ to contain the traces executable in \emph{all} heaps of $H$, \ie,
\[
\tracesH(H) = \{\tau \mid \forall h \in H.\, \tau \in \traces(\ESH, h)\}.
\]
The set of traces of an assertion $\phi$ is then defined to be the set of traces of its heap models, \ie, 
\[
\tracesH(\phi) = \tracesH(\{h \mid h \models \phi\}).
\]
%
This universal quantification over all heap models of an assertion constitutes the main challenge in our soundness proof (Theorem~\ref{thm:gorilla-glue}).
Let us now look at an example illustrating these definitions.
\begin{example}[Heap and assertion traces]
\label{ex:simple-IOspec-traces}
Consider the heap models $h_1$, $h_2$, and $h_3$ of the I/O specification~$\phi$ from Example~\ref{ex:simple-IOspec-heaps}. 
First focusing on \emph{regular \christoph{behaviors}}, \ie, ignoring the rules~\ContradictRule\ and~\ChaosRule, their traces are given by the following sets, where $\,\pc$ denotes prefix closure: 
\begin{itemize}
\item $\traces(\ESH, h_1) = \{ \id{recv}(12) \cc \id{send}(24) \}\pc $,
\item $\traces(\ESH, h_2) = \traces(\ESH, h_1) \cup \{ \id{send}(24) \cc \id{recv}(12) \}\pc$, and
\item $\traces(\ESH, h_3) = \traces(\ESH, h_1) \cup \{ \id{recv}(12) \cc \id{send}(35) \}\pc$.
\end{itemize}
%
The first heap, $h_1$, exhibits an instance of the expected behavior: receive a value and send the doubled value. The heaps $h_2$ and $h_3$, however, also allow unintended behaviors. Heap $h_2$ has a trace  where receive and send are inverted. This comes from the semantics of existential quantification, which does not ensure that the places are distinct. Heap $h_3$ can send a value different from the doubled input value, which is possible due to the monotonicity property in Lemma~\ref{lem:heap-extension}. 
Due to these additional behaviors, which we call \emph{spurious}, the set $\tracesH(\psi)$ of traces of an I/O specification~$\psi$ is defined to contain those traces that are possible in \emph{all} heap models of $\psi$. The three heaps above only share the traces of $h_1$, which corresponds to the intended behavior.

Note that these spurious behaviors are not an artifact of the particular formalism we use, but a standard part of the permission-based specification style of separation logics in general. For example, all program heaps satisfying a standard points-to assertion $\pointsto{x}{e}$ allow the program to dereference the pointer $x$, but some heaps may also allow dereferencing the pointer $z$ because~$z$ and~$x$ happen to alias in a particular interpretation (analogous to ``aliasing'' places in $h_2$), or, for logics with  monotonicity, may contain (and therefore allow access to) extra memory pointed to by~$y$.  However, like in our case, the program logic must not allow dereferencing $y$ or $z$ because it is not possible in \emph{all} program heaps satisfying the assertion.

The rules~\ContradictRule\ and~\ChaosRule\ add, for any \christoph{regular} trace of the form $\tau_1 \cc \id{recv}(w) \cc \tau_2$, \christoph{a spurious traces} of the form $\tau_1 \cc \id{recv}(w') \cc \tau$ for \christoph{each} well-typed $w' \neq w$ and well-typed trace $\tau$. These rules formalize that a heap \christoph{reading} some (well-typed) input different from the one predicted by the I/O permission may behave arbitrarily. 
\christoph{For example, both $h_1' = \mset{\token(t), \id{recv}(t,19,t_1), \id{send}(t_1,38,t_2)}$ and $h_1$ are models of $\phi$ and $\nil$ is their only shared regular trace. However, the regular traces of $h_1'$ are also spurious traces of $h_1$ and vice versa. Hence, $\tracesH(\{h_1,h_1'\})$ consists of the regular traces of $h_1$ and $h_1'$. 
This ensures that
$
\tracesH(\phi) = \{  \id{recv}(x) \cc \id{send}(2x)  \mid x \in \Val \}\!\pc
$
is the trace property intended by the assertion $\phi$.
}
Without these two rules, we would have $\tracesH(\phi) = \{ \nil \}$. 
%
\end{example}

\subsection{Embedding I/O-guarded Event Systems into Processes}


We co-inductively define a simple language of (sequential) processes:
\[
  P ::=_{\corec} \Null \mid \bio(v, z).P \mid P_1 \choice P_2.
\]
Here, $\Null$ is the inactive process, $\bio(v, z).P$ is prefixing with an I/O operation, which binds the input variable $z$ in $P$, and $P_1 \choice P_2 $ is a binary choice operator. Let $\Proc$ be the set of all processes.

We can then co-recursively define processes. For example, we define a countable choice operator $\bigchoice_{v\in S} P(v)$ over a set of values $S$ with $\Null$ as the neutral element, analogous to the definition of the iterated separating conjunction. We can also co-recursively define non-terminating processes.

\begin{example}
A process corresponding to the I/O specification from Example~\ref{ex:IOspec-read-write} is specified by $\id{RSP}(0)$, where
$
\id{RSP}(a) =_\corec \id{recv}(z). \,\ifte{z>0}{\id{send}(a+z).\id{RSP}(a+z)}{\Null}.
$
\end{example}

The operational semantics of processes is given by the event system $\ESP = (\Proc, \ActBio, \opsem{})$, where the transition relation $\opsem{}$ is inductively defined by the following rules:
\[
\begin{array}{c}
\infer[\textsf{Pref}]{\bio(v,z).P \opsem{\bio(v,w)} P[w/z] }{w \in \Ty(\bio, v)} \qquad
\infer[\textsf{Choice}_1]{P_1 \choice P_2 \opsem{a} P_1'}{P_1 \opsem{a} P_1'} \qquad
\infer[\textsf{Choice}_2]{P_1 \choice P_2 \opsem{a} P_2'}{P_2 \opsem{a} P_2'}
\end{array}
\]
We write $\traces(P)$ as a shorthand for $\traces(\ESP, P)$. 


\subsubsection*{Translation}
We define a translation from I/O-guarded event systems $\GES = (S, \ActBio, G, U)$ to processes. The process $\proc(\GES, s)$ represents $\GES$ in state $s$ and is co-recursively defined by 
\begin{align*}
\proc(\GES, s) =_{\corec} & \bigchoice_{\bio\in\Bio}\, \bigchoice_{v\in\Val}  
\ifte{G_{(\bio,v)}(s)}{\bio(v, z).\, \proc(\GES, U_{\bio(v, z)}(s))}{\Null}. 
\end{align*}
Recall that here we borrow the conditional from our meta-language HOL\@.
The following correctness result is established by a simulation in each direction.

\begin{theorem}[Correctness of event system translation]
\label{thm:ioges-into-proc-correct}
For any I/O-guarded event system $\GES=(S,\ActBio,G,U)$ and state $s\in S$, we have $\traces(\GES,s) = \traces(\proc(\GES, s))$. 
\end{theorem}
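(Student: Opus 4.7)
The plan is to prove the two trace inclusions separately by exhibiting a forward simulation in each direction, following the paper's hint. For the ``$\subseteq$'' inclusion, I would define the relation $R = \{(s', \proc(\GES, s')) \mid s' \in S\} \subseteq S \times \Proc$ and show it is a simulation from $(\GES, s)$ to $(\ESP, \proc(\GES, s))$ with identity mediator. The initial condition is trivial since $(s, \proc(\GES, s)) \in R$. For the ``$\supseteq$'' inclusion, I would take the converse relation $R^{-1}$ and show it is a simulation in the other direction. Each part then yields the corresponding trace inclusion via Theorem~\ref{thm:refinement-soundness}.

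The technical heart is a single-step matching property: for every $s' \in S$ and every I/O action $\bio(v,w)$,
\[ s' \trans{\bio(v,w)} s'' \text{ in } \GES \;\Longleftrightarrow\; \proc(\GES, s') \opsem{\bio(v,w)} \proc(\GES, s''). \]
For the forward direction, assume $s' \trans{\bio(v,w)} s''$ in $\GES$. By the definition of I/O-guarded event systems, this gives $G_{(\bio,v)}(s') = \tru$ (the component-guard condition removes the dependence on $w$) and $s'' = U_{\bio(v,w)}(s')$. Unfolding $\proc(\GES, s')$ once, the branch indexed by $(\bio, v)$ in the nested choice reduces to $\bio(v, z).\,\proc(\GES, U_{\bio(v,z)}(s'))$ since its guard holds. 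Repeated applications of $\textsf{Choice}_1$ and $\textsf{Choice}_2$ select this branch, and one application of $\textsf{Pref}$ produces $\proc(\GES, U_{\bio(v,w)}(s')) = \proc(\GES, s'')$ (the well-typedness side condition $w\in\Ty(\bio,v)$ comes from the implicit restriction that $\GES$-transitions be well-typed I/O actions). The reverse direction follows by inversion of the operational semantics: any action of $\proc(\GES, s')$ must originate from an enabled prefix, which forces the action to have the form $\bio(v,w)$ with $w \in \Ty(\bio,v)$ and $G_{(\bio,v)}(s')$ true, and the target to be $\proc(\GES, U_{\bio(v,w)}(s'))$; re-assembling these facts yields the matching $\GES$-transition.

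Given the one-step correspondence, lifting to traces is routine: an induction on trace length discharges the simulation step condition, with the empty trace case trivial. Combining the two inclusions yields the equality.

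The main obstacle is the inversion reasoning for the countable choice operator $\bigchoice_{\bio \in \Bio}\bigchoice_{v \in \Val}$ in the reverse direction. I would prove a preparatory lemma of the form $\bigchoice_{x \in X} P(x) \opsem{a} Q \;\Leftrightarrow\; \exists x \in X.\, P(x) \opsem{a} Q$ by unfolding the co-recursive definition of countable choice and appealing to the $\textsf{Choice}$ rules. A secondary subtlety is the conditional $\ifte{G_{(\bio,v)}(s)}{\ldots}{\Null}$ inside the choice: since $\Null$ has no outgoing transitions, inverting a process transition immediately shows that the selected guard must hold, giving the matching $\GES$-transition's enabling condition. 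Once these two observations are in place, the rest reduces to a mechanical case analysis on the process-transition rules.
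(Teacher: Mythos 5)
Your proposal is correct and follows essentially the same route as the paper, which establishes this theorem precisely by exhibiting a simulation in each direction between the event system and its process translation, with the work concentrated in the one-step correspondence (guard holds and update taken iff the corresponding guarded prefix branch of the unfolded countable choice fires). Your auxiliary inversion lemma for $\bigchoice$ and the observation that $\Null$ branches force the guard to hold match the intended argument; the only point to keep explicit is the well-typedness side condition of \textsf{Pref}, which you correctly note must be mirrored by a typing assumption on the event system's I/O actions.
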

%

\subsection{Embedding Processes into I/O Specifications}
\label{ssec:processes-into-iospecs}


We now co-recursively define the embedding $\emb$ from processes and places into I/O specifications:
\begin{align*}
 \emb(\Null, t) & =_\corec \tru \\
 \emb(\bio(v, z).P, t) & =_\corec \exists t',z'.\, \bio(t, v, z', t') \sand \emb(P[z'/z], t') \\
 \emb(P_1 \choice P_2, t) & =_\corec  \emb(P_1,t) \sand \emb(P_2, t).
\end{align*}
We define the \emph{process assertion} of $P$ by $\emb(P) = \exists t.\, \token(t) \sand \emb(P,t)$.
We then prove by co-induction that countable choice translates to iterated separating conjunction.
\begin{lemma}
\label{lem:bigchoice}
 $\emb(\bigchoice_{v\in S} P(v), t) = \sall v\in S.\, \emb(P(v), t)$.
\end{lemma}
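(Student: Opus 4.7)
The plan is to prove this equality by co-induction, exploiting the parallel co-recursive structure shared by countable choice and iterated separating conjunction. Both $\bigchoice_{v\in S} P(v)$ and $\sall v \in S.\, \phi(v)$ are defined analogously: they iterate a binary operator ($\choice$ and $\sand$, respectively) over an enumeration of $S$, with neutral elements $\Null$ and $\tru$. The defining equations of $\emb$ mirror this structure, sending $\Null$ to $\tru$ and distributing $\choice$ into $\sand$. So the lemma essentially says that $\emb$ is a homomorphism from $(\Null, \choice)$ to $(\tru, \sand)$, lifted from binary to countable operations.

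Concretely, I would proceed as follows. First, handle the base case $S = \emptyset$: by the defining equations $\emb(\bigchoice_{v\in\emptyset} P(v), t) = \emb(\Null, t) = \tru = \sall v \in \emptyset.\, \emb(P(v), t)$. For the step case, fix an enumeration of $S$ and split off its head $v_0$, leaving a tail $S'$. Unfolding one step on each side using the co-recursive definitions gives
\begin{align*}
\emb(\bigchoice_{v \in S} P(v), t) &= \emb(P(v_0) \choice \bigchoice_{v \in S'} P(v), t) = \emb(P(v_0), t) \sand \emb(\bigchoice_{v \in S'} P(v), t), \\
\sall v \in S.\, \emb(P(v), t) &= \emb(P(v_0), t) \sand \sall v \in S'.\, \emb(P(v), t).
\end{align*}
The two right-hand sides share the left operand $\emb(P(v_0), t)$, and their right operands are instances of the lemma's statement for the smaller set~$S'$. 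Applying the co-inductive hypothesis closes the step.

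To make this formal in Isabelle/HOL's shallow embedding, I would introduce a candidate relation $R$ on assertions containing all pairs $(\emb(\bigchoice_{v \in S} P(v), t), \sall v \in S.\, \emb(P(v), t))$ (for countable $S\subseteq \Val$ and arbitrary $P$, $t$), together with the identity, and verify that $R$ is closed under the one-step unfolding associated with the co-inductive definition of assertions. The base case and the step case above provide exactly the two required closure proofs, so the co-induction principle for assertions yields $R$-relatedness, hence equality, of the two sides.

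The main obstacle I expect is administrative rather than conceptual: guaranteeing productivity of the co-recursion. Since $S$ may be countably infinite, both $\bigchoice$ and $\sall$ must be peeled one element at a time through a guarded co-recursion, and the proof must respect whatever enumeration scheme the definitions use. As long as both operators are defined via the same enumeration, the head/tail peeling aligns on both sides and the co-inductive step goes through without needing any reordering or reassociation; otherwise, one would first have to establish auxiliary lemmas that $\bigchoice$ and $\sall$ are insensitive to the choice of enumeration of $S$ before proceeding. A secondary care point is the empty-tail case for finite $S$, which must be covered uniformly by the same co-inductive argument (or split off as a separate base case by induction on $|S|$).
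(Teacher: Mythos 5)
Your proposal is correct and matches the paper's approach: the paper proves this lemma by co-induction (in Isabelle/HOL), exactly as you do, using the fact that the defining equations of $\emb$ send $\Null$ to $\tru$ and $\choice$ to $\sand$, so that peeling one element of the enumeration on each side yields $\sand$-nodes with equal first components and second components again related by the candidate relation. Your caveats about productivity, enumeration alignment, and the empty-set case are the right administrative points and do not indicate any gap.
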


We now turn to our main result, namely, the trace equivalence of process $P$ and its I/O specification $\emb(P)$. We focus on the intuition here and defer the formal details to~\fullversionref{app:theory-details}.

\begin{theorem}[Correctness of process translation] 
\label{thm:gorilla-glue}
$\traces(P) =  \tracesH(\emb(P))$.
\end{theorem}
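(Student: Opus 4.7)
The plan is to prove $\traces(P) = \tracesH(\emb(P))$ by establishing the two inclusions separately.

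For $\traces(P) \subseteq \tracesH(\emb(P))$, I would first establish a \emph{stepping lemma}: whenever $P \opsem{a} P'$ and $h \models \token(t) \sand \emb(P, t)$, there exist $t', h'$ such that $h \hact{a} h'$ and $h'$ is either a model of $\token(t') \sand \emb(P', t')$ or equal to $\bot$. The proof unfolds the co-recursive definition of $\emb$ on the structure of $P$: at a prefix $\bio(v,z).Q$, the model exposes a chunk $\bio(t, v, z_0, t_0)$ together with a continuation $\emb(Q[z_0/z], t_0)$, so rule \BioRule{} fires when the input in $a$ matches $z_0$ and \ContradictRule{} falls into $\bot$ otherwise; at a choice, the identity $\emb(P_1 \choice P_2, t) = \emb(P_1, t) \sand \emb(P_2, t)$ makes either branch's first permission available at the shared token. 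Lifting from $h_0 \models \token(t) \sand \emb(P, t)$ to arbitrary $h = g \mplus h_0$ follows from monotonicity (Lemma~\ref{lem:heap-extension}) and the multiset-additive form of the rules in Figure~\ref{fig:heap-trans}. A routine induction on the length of the trace, with \ChaosRule{} handling continuations once $\bot$ has been reached, then yields the inclusion.

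For $\tracesH(\emb(P)) \subseteq \traces(P)$ I would argue the contrapositive, since, as the text emphasises, no single model of $\emb(P)$ exactly characterises the process traces. Given $\tau \notin \traces(P)$, pick the longest prefix $\tau'$ of $\tau$ with $\tau' \in \traces(P)$ and let $a$ be the next action in $\tau$; by maximality, no $P'$ with $P \opsem{\tau'} P'$ enables $a$. I would then construct a tailored $h_\tau \models \emb(P)$ with $\tau \notin \traces(h_\tau)$. Define $h_\tau$ co-recursively by \emph{walking $P$ along $\tau'$} with fresh distinct places throughout and inserting no extra chunks beyond those mandated by unfolding $\emb$: at a prefix $\bio(v,z).Q$ insert exactly the chunk $\bio(t_{i-1}, v_i, w_i, t_i)$ whose prediction matches $\tau$'s actual input $w_i$ at that step; at a choice $P_1 \choice P_2$ include both branches separatingly, both rooted at the current place but with disjoint fresh-place supplies for their remainders; beyond position $|\tau'|$ continue with any canonical model of the residual process. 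By construction $h_\tau \models \emb(P)$. Because predictions along $\tau'$ match $\tau$'s inputs exactly, no \ContradictRule{} firing is possible along $\tau'$ and $\bot$ is unreachable there; after $\tau'$, every chunk whose source currently carries a token originates from $\emb(P', t')$ for some $P'$ reachable by $\tau'$, none of which admits $a$, and chunks from unexplored choice branches sit at places whose tokens have already been consumed and which freshness guarantees will never be revisited. Hence no transition on $a$ is available from the state reached after $\tau'$, giving $\tau \notin \traces(h_\tau)$.

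The principal obstacle will be making this construction rigorous in the co-inductive setting: productivity of the co-recursion requires that every recursive call is guarded by a newly inserted chunk; fresh-place allocation must be organised compositionally so that the separating conjunctions in $\emb$ really correspond to multiset-level disjointness; and the construction must be defined uniformly for infinite $P$ and for the infinite tail beyond $\tau'$, rather than piecewise. A secondary subtlety is verifying that the unused-branch chunks introduced at choice points really remain inert throughout, which rests squarely on the freshness invariant for places. Once the stepping lemma and the canonical-heap construction are in place, the two inclusions combine to yield the theorem.
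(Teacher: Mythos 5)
Your proposal matches the paper's proof in both halves: the easy inclusion is the same simulation between $P$ and heaps satisfying $\token(t) \sand \emb(P,t)$ (with $\bot$ absorbing mismatched inputs), and your tailored model $h_\tau$ --- tree-shaped fresh places, untaken choice branches kept as inert (dead) heap parts, input predictions read off the trace, arbitrary canonical continuation elsewhere --- is exactly the paper's canonical model $\addtoken{\cmodel}(P,\rhowit(\tau))$ for the witness schedule. The only difference is packaging: the paper factors the hard direction through $\tracesH(\emb(P)) \subseteq \tracesH(\can(P)) \subseteq \traces(P)$ and simulates the whole trace by process transitions, whereas you argue the contrapositive via a maximal executable prefix, a minor rearrangement of the same lemmas.
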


The proof follows from Propositions~\ref{prop:tracesP-subset-traces-embP}, \ref{prop:traces-embP-subset-traces-canP}, and~\ref{prop:traces-canP-subset-tracesP} to which the remainder of Section~\ref{sec:theory} is devoted.
Together with Theorem~\ref{thm:ioges-into-proc-correct}, this result allows us to translate any I/O-guarded event system $\ES$ modeling some component of our system into an I/O specification $\phi_{\ES} = \emb(\proc(\ES))$
with identical behavior. We can then use $\phi_{\ES}$ as a specification for the code implementing $\ES$'s behavior. 


The left-to-right trace inclusion of this theorem is captured by the following proposition, \christoph{which we prove by} a simulation between process $P$ and heap models of $\emb(P)$ (see~\fullversionref{sapp:processes-into-iospecs}).
\begin{proposition}
\label{prop:tracesP-subset-traces-embP}
$\traces(P)  \subseteq  \tracesH(\emb(P))$.
\end{proposition}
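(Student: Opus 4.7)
The plan is to build a forward simulation from the process transition system $\ESP$ to the heap transition system $\ESH$ with the identity mediator. Concretely, I will define the relation $R \subseteq \option{\Heap} \times \Proc$ by
\begin{equation*}
(h, P) \in R \iff h = \bot \;\text{or}\; h \models \emb(P).
\end{equation*}
Since $(h, P) \in R$ whenever $h \models \emb(P)$, establishing that $R$ is a forward simulation with initial states $\{h\}$ and $\{P\}$ will yield, by Theorem~\ref{thm:refinement-soundness}, $\traces(P) \subseteq \traces(\ESH, h)$ for each $h \models \emb(P)$; intersecting over all such models gives the claim.

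The simulation step requires that for every $(h, P) \in R$ and every $P \opsem{a} P'$, some $h'$ with $h \hact{a} h'$ and $(h', P') \in R$ exists. The case $h = \bot$ is immediate, since rule \textsf{Pref} forces $a$ to be well-typed and \ChaosRule\ then supplies $\bot \hact{a} \bot$.

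The case $h \models \emb(P)$ proceeds by induction on the derivation of $P \opsem{a} P'$. In the base case $P = \bio(v, z).Q$ and $P' = Q[w/z]$, unfolding $\emb(P, t) = \exists t', z'.\, \bio(t, v, z', t') \sand \emb(Q[z'/z], t')$ together with $h \models \exists t.\, \token(t) \sand \emb(P, t)$ yields places $t, t'$, a well-typed predicted input $w'$, and a decomposition $h = \mset{\token(t), \bio(t, v, w', t')} \mplus g$ with $g \models \emb(Q[w'/z], t')$. If $w' = w$, rule \BioRule\ gives $h \hact{a} \mset{\token(t')} \mplus g$; since $g \models \emb(P', t')$, the result satisfies $\emb(P')$. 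If $w' \neq w$, rule \ContradictRule\ gives $h \hact{a} \bot$ and $R$ is preserved trivially. In the inductive cases $\textsf{Choice}_{1,2}$, the identity $\emb(P_1 \choice P_2, t) = \emb(P_1, t) \sand \emb(P_2, t)$ splits $h$ into a subheap that is a model of $\emb(P_i)$ (for the branch $i$ that took the step) and a residual; the induction hypothesis produces a transition on the subheap, a frame lemma (if $h_0 \hact{a} h_0'$ then $h_0 \mplus g \hact{a} h_0' \mplus g$, with $\bot$ absorbing) lifts it to $h$, and monotonicity (Lemma~\ref{lem:heap-extension}) preserves $R$ on the result.

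The hard part is the universal quantification over heap models built into $\tracesH$: a model of $\emb(P)$ may pick existential witnesses $w', t'$ that disagree with the input or continuation the process chooses, and may contain extraneous chunks by monotonicity. The rules \ContradictRule\ and \ChaosRule\ are exactly what allows such mis-predicting or extended heaps to still simulate every process step---on the first contradicting input the heap collapses to $\bot$, from which any well-typed continuation is admissible---so a single simulation relation discharges every $h \models \emb(P)$ uniformly.
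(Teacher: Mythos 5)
Your proposal is correct and follows essentially the same route as the paper: the paper also proves the inclusion by a simulation between $P$ and each heap model of $\emb(P)$ (or $\bot$), using \textsf{Bio} when the predicted input matches, \textsf{Contradict} to collapse to $\bot$ otherwise, and \textsf{Chaos} from $\bot$, with the simulation step proved by induction on the process operational semantics. Your extra details (frame/monotonicity handling of the choice cases, and intersecting the per-model trace inclusions to obtain $\tracesH(\emb(P))$) are exactly what the paper's proof leaves implicit.
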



The main difficulty lies in the proof of the reverse set inclusion and stems from the meaning of $\tracesH(\emb(P))$, which contains exactly those traces $\tau$ that are a trace of \emph{all} models of $\emb(P)$. From Example~\ref{ex:simple-IOspec-traces}, we know that many models of $\emb(P)$ (or of any  assertion $\phi$ for that matter) exhibit spurious behaviors that are not in $\tracesH(\emb(P))$ (or in $\tracesH(\phi)$, respectively). 
Therefore, picking an arbitrary heap model of $\emb(P)$ and trying to simulate its transitions by the process $P$'s transitions will fail. Instead, we restrict our attention to \emph{canonical} models that do not exhibit spurious behaviors.
We denote by $\can(P)$ the set of all canonical models of $P$ (introduced in Section~\ref{ssec:canonical-models}).
We then decompose the proof into the following chain of set inclusions:
\begin{equation}
\label{eq:trace-inclusions}
\tracesH(\emb(P)) \subseteq \tracesH(\can(P)) \subseteq \traces(P).
\end{equation}
The first inclusion expresses that the canonical models cover all behaviors of $\emb(P)$. We will establish the second inclusion by simulating the behavior of canonical models by process transitions.

\subsection{Canonical Heap Models for Processes}
\label{ssec:canonical-models}

A natural canonical model candidate for a process $P$ would be the heap $h_P$ that is isomorphic to $P$'s computation tree, where a process $\bio(v,w).Q$ would result in one I/O permission $\bio(t, v, w, t_w)$ for each input $w$ on the heap. Although this proposal avoids spurious behaviors due to additional permissions and place identifications (\cf~Example~\ref{ex:simple-IOspec-traces}), it fails as the following example shows.

\begin{example}[Failed attempt]
Let $P = in(x).out(x).\Null$, $\Val = \Bool$, and $\Place = \List{\{\Left,\Right\}}$ (for tree positions). Then $h_P$ contains both I/O permissions $in(\nil, \fal, \Left)$ and $in(\nil, \tru, \Right)$.  
This would lead to $\tracesH(\can(P)) = \traces(h_P) = \{\nil\} \cup \{in(v) \cc \tau \mid v \in \Bool, \tau \in \List{\ActBio} \}$ according to the rules~\ContradictRule\ and~\ChaosRule\ and hence to $\tracesH(\can(P)) \supset \traces(P)$.
\end{example}

We will therefore construct the canonical heap models of a process $P$ with respect to an input schedule, which \christoph{is essentially a prophecy variable that} uniquely determines the inputs read by the process. An \emph{input schedule} is a function $\rho : \ActBio^* \times \Bio \times \Val \fun \Val$ mapping an I/O trace $\tau$, an I/O operation $\bio$, and an output value~$v$ to an input value $\rho(\tau,\bio,v)$. Hence, there will be a canonical model $\cmodel(P,\rho)$ for each input schedule $\rho$, which intuitively corresponds to the projection of $P$'s computation tree to the inputs prescribed by $\rho$. The set $\can(P)$ contains such a model for each input schedule $\rho$. Our construction uses the set of places $\Place = \List{\{\Left,\Right\}}$, \ie, the places are positions of a binary tree. The inputs being fixed, the only branching stems from the choice operator. The following example illustrates our construction. We defer its formal definition \christoph{and the proofs of the corresponding results to~\fullversionref{sapp:canonical-models}.}


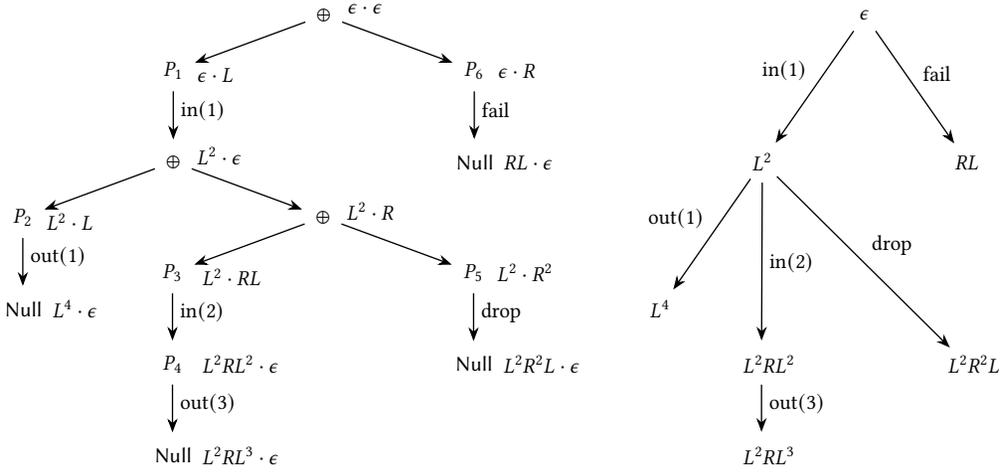
\begin{figure*}[t]
    \centering
\vspace{-7mm}
\begin{tikzpicture}[shorten >=1pt,node distance=1.5cm,on grid,auto,initial text={},
      >=Stealth, auto, scale = 0.8, transform shape, state/.style={circle,inner sep=0pt}
] 
   \node[state,processstyle] (e) at (0,0) {$\choice$};
   \node[state,processstyle] (L) [position=-160:{2cm} from e] {$P_1$};
   \node[state,processstyle] (L2) [below=of L] {$\choice$};
   \node[state,processstyle] (R) [position=-20:{2cm} from e] {$P_6$};
   \node[state,processstyle] (RL) [below=of R] {$\Null$};
   \node[state,processstyle] (L2R) [position=-20:{2cm} from L2] {$\choice$};
   \node[state,processstyle] (L2RL) [position=-160:{2cm} from L2R] {$P_3$};
   \node[state,processstyle] (L2RL2) [below=of L2RL] {$P_4$};
   \node[state,processstyle] (L3) [position=-160:{2cm} from L2] {$P_2$};
   \node[state,processstyle] (L4) [below=of L3] {$\Null$};
   \node[state,processstyle] (L2R2) [position=-20:{2cm} from L2R] {$P_5$};
   \node[state,processstyle] (L2R2L) [below=of L2R2] {$\Null$};
   \node[state,processstyle] (L2RL3) [below=of L2RL2] {$\Null$};

\draw[->,line width=0.5pt]
    (e) edge node {} (L)
    (L) edge node [yshift=3pt] {in$(1)$} (L2)
    (e) edge node {} (R)
    (R) edge node [yshift=3pt] {fail} (RL)
    (L2) edge node {} (L3)
    (L2) edge node {} (L2R)
    (L2R) edge node {} (L2RL)
    (L2R) edge node {} (L2R2)
    (L3) edge node [yshift=3pt] {out$(1)$}  (L4)
    (L2RL) edge node [yshift=3pt] {in$(2)$} (L2RL2)
    (L2R2) edge node [yshift=3pt] {drop} (L2R2L)
    (L2RL2) edge node [yshift=3pt] {out$(3)$} (L2RL3)
;

   \node[state,ppos] (!e) [right=\normalDist of e,yshift=3pt] {$\POSee{}$};
   \node[state,ppos] (!L) [right=\normalDist of L,yshift=-2pt] {$\POSel{L}$};
   \node[state,ppos] (!L2) [right=\normalDist of L2,yshift=3pt] {$\POSle{L^2}{}$};
   \node[state,ppos] (!R) [right=\normalDist of R] {$\POSel{R}$};
   \node[state,ppos] (!RL) [right=\bigDist of RL] {$\POSle{RL}$};
   \node[state,ppos] (!L2R) [right=\normalDist of L2R,yshift=3pt] {$\POS{L^2}{R}$};
   \node[state,ppos] (!L2RL) [right=\bigDist of L2RL,yshift=-2pt] {$\POS{L^2}{RL}$};
   \node[state,ppos] (!L2RL2) [right=\bigDist of L2RL2] {$\POSle{L^2RL^2}$};
   \node[state,ppos] (!L2RL3) [right=\bigDist of L2RL3]  {$\POSle{L^2RL^3}$};
   \node[state,ppos] (!L3) [right=\normalDist of L3,yshift=-2pt] {$\POS{L^2}{L}$};
   \node[state,ppos] (!L4) [right=\bigDist of L4] {$\POSle{L^4}$};
   \node[state,ppos] (!L2R2) [right=\normalDist of L2R2] {$\POS{L^2}{R^2}$};
   \node[state,ppos] (!L2R2L) [right=\bigDist of L2R2L] {$\POSle{L^2R^2L}$};

   \node[state,processstyle] (pie) [right=\piDist of e] {$\nil$};
   \node[state,processstyle] (piL2) [right=\piDist of L2,xshift=8mm] {$L^2$};
   \node[state,processstyle] (piRL) [right=\piDist of RL,xshift=-8mm] {$RL$};
   \node[state,processstyle] (piL2RL2) [right=\piDist of L2RL2,xshift=8mm] {$L^2RL^2$};
   \node[state,processstyle] (piL4) [right=\piDist of L4,xshift=16mm] {$L^4$};
   \node[state,processstyle] (piL2R2L) [right=\piDist of L2R2L,xshift=-8mm] {$L^2R^2L$};
   \node[state,processstyle] (piL2RL3) [right=\piDist of L2RL3,xshift=8mm] {$L^2RL^3$};

\draw[->,line width=0.5pt]
    (pie) edge node [above left] {in$(1)$} (piL2)
    (pie) edge node {fail} (piRL)
    (piL2) edge node [above left] {out$(1)$} (piL4)
    (piL2) edge node {in$(2)$} (piL2RL2)
    (piL2) edge node {drop} (piL2R2L)
    (piL2RL2) edge node [yshift=3pt]{out$(3)$} (piL2RL3)
;
\end{tikzpicture}
\vspace{-6mm}
\caption{Process $P$ and the schedule~$\rho$ of Example~\ref{ex:canonical-model} (left) and resulting canonical model (right). 
Each process (node in the left graph) is annotated with its position $cpos = ppos \cc x$.
%
}
\label{fig:canonical-model}
\end{figure*}

\begin{example}[Canonical model]
\label{ex:canonical-model}
Consider the process $P$ defined by 
\begin{align*}
     P     & = \id{in}(x).Q(x) \choice \id{fail}.\Null   & 
     Q(x) & = \id{out}(x).\Null 
                      \choice (\id{in}(y).\id{out}(x + y).\Null
                      \choice \id{drop}.\Null).
\end{align*}
For simplicity, the I/O operations $\id{drop}$ and $\id{fail}$ have no arguments. Let $\rho$ be the input schedule defined by $\rho(\tau, bio, v) = \len{\tau} + 1$. Figure~\ref{fig:canonical-model} (left) shows the projection of $P$'s syntax tree to the input schedule $\rho$. Edges arising from action prefixes are labeled with the corresponding action. Each node is annotated with its current position $cpos = ppos \cc x$, which is composed of $ppos$, the target position of the previous action-labeled edge in the tree (or $\nil$ if there is none), and a rest~$x$. 
Each edge labeled by some action $\bio(v, w)$ and connecting position $cpos = ppos \cc x$ to $cpos \cc \Left$ translates into an I/O permission $\bio(ppos, v, w, cpos\cc\Left)$ in the resulting canonical heap $\cmodel(P, \rho)$, which is shown in Figure~\ref{fig:canonical-model} (right).
\end{example}

The following result states that the canonical model for a process $P$ and a schedule $\rho$ is indeed a model of the assertion corresponding to the process $P$. The first inclusion in~\eqref{eq:trace-inclusions} \christoph{then} easily follows. 
\begin{proposition}[Canonical model property]
\label{prop:canonical-model}
$\cmodel(P, \rho) \models \emb(P, \nil)$ for all processes $P$ and well-typed schedules $\rho$.
\end{proposition}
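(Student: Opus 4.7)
The plan is to prove the claim by coinduction, exploiting the coinductive definition of $\models$ together with the corecursive structure of $\emb$ along $P$. A direct statement at $(\nil,\nil)$ is not productive under coinduction because the continuation after a prefix lives at a deeper position in the binary-tree place structure, so I would first strengthen the statement to a positional form: for every process $P$, every well-typed schedule $\rho$, and every position $cpos = ppos \cc x$ reached during the construction, the ``local subheap'' of $\cmodel(P,\rho)$ consisting of those chunks whose source lies in the subtree rooted at $cpos$ satisfies $\emb(P, ppos)$. The original proposition is then the special case $cpos = ppos = \nil$.

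The core of the argument is to exhibit a coinductive invariant $R$ on (heap, assertion) pairs containing every instance of the strengthened claim and closed under one step of the $\models$-unfolding. The case analysis follows the three clauses of $\emb$. For $P = \Null$, the target assertion is $\tru$ and there is nothing to prove. For $P = \bio(v,z).P'$, the construction of $\cmodel$ produces exactly one chunk with source $ppos$, output $v$, input $w = \rho(\tau, \bio, v)$ for the trace $\tau$ determined by the path to $cpos$, and target $cpos \cc \Left$; well-typedness of $\rho$ makes this chunk well-typed. Instantiating the existentials in $\emb(\bio(v,z).P',\,ppos)$ with $t' = cpos \cc \Left$ and $z' = w$ then splits the local subheap into this one chunk plus the subheap rooted at $cpos \cc \Left$, and the latter satisfies $\emb(P'[w/z], cpos \cc \Left)$ by the coinductive hypothesis. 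For $P = P_1 \choice P_2$, the local subheap partitions into the chunks of the $\Left$ and $\Right$ subtrees at $cpos$; since these have disjoint positional prefixes, the split is a genuine multiset partition, and each part satisfies $\emb(P_i, ppos)$ by the coinductive hypothesis. Iterated or countable choices reduce to this using Lemma~\ref{lem:bigchoice}.

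The main obstacle will be the prefix case, specifically the bookkeeping that underlies the claim ``the emitted permission at source $ppos$ has input $w = \rho(\tau, \bio, v)$ and target $cpos \cc \Left$.'' Choice nodes extend the running position $x$ (hence $cpos$) without changing the token position $ppos$, while an action resets the token position to $cpos \cc \Left$ and appends an event to the implicit trace $\tau$ consulted by $\rho$. Getting this coupling between the process's tree position, the underlying source places of the chunks, and the trace used to index the schedule right is where the formal definition of $\cmodel$ does the real work. Monotonicity (Lemma~\ref{lem:heap-extension}) does not rescue an imprecise split here because $\sand$ requires exact additive partitioning; the positional invariant therefore has to pin down the local subheap as a precise multiset. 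Once this positional invariant is cleanly stated, the three cases of the coinduction discharge almost mechanically, and the first inclusion in~\eqref{eq:trace-inclusions} is an immediate corollary.
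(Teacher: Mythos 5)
Your overall strategy coincides with the paper's: the paper also strengthens the claim to a position-indexed generalization, namely $\gmodel(P,\rho,\tau,\id{ppos},\id{cpos}) \models \emb(P, \id{ppos})$, and proves it by coinduction on $\models$ with the relation $X(h,\phi) = \exists P,\tau,\id{ppos},\id{cpos}.\; h = \gmodel(P,\rho,\tau,\id{ppos},\id{cpos}) \land \phi = \emb(P,\id{ppos})$, doing a case analysis on the structure of $P$ exactly as you sketch: nothing to prove for $\Null$, one well-typed chunk $\bio(\id{ppos}, v, \rho(\tau,\bio,v), \id{cpos}\snoc{\Left})$ plus the continuation heap for a prefix, and a two-way multiset split for a choice, all read off from the fixed-point equations of $\gmodel$ (Lemma~\ref{lem:gmodel-charact}).

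One point in your formulation needs repair, however. The strengthened invariant phrased as ``the chunks of $\cmodel(P,\rho)$ whose \emph{source} lies in the subtree rooted at $\id{cpos}$'' is not the right multiset. Every permission emitted before the next action prefix---including those belonging to sibling branches of choices already traversed---carries the source place $\id{ppos}$, which lies \emph{outside} the subtree rooted at $\id{cpos}$ whenever $\id{ppos}$ is a strict prefix of $\id{cpos}$; so the source-restricted subheap both omits the current subprocess's first-level chunks and fails to separate it from its siblings (in Example~\ref{ex:canonical-model}, at the inner choice both the $\id{in}$ and the $\id{drop}$ permission have source $L^2$). Since, as you observe, $\sand$ demands an exact additive split, this would break both the prefix and the choice case as literally stated. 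The fix is either to restrict by \emph{target} places (each chunk's target lies strictly inside its own subtree) or, as the paper does, to avoid restriction altogether by defining the generalized heap recursively as $\gmodel(P,\rho,\tau,\id{ppos},\id{cpos})$ and proving its unfolding equations (Lemma~\ref{lem:gmodel-charact}). With that reformulation, your coinductive case analysis goes through as described and the rest of the argument matches the paper's proof.
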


\begin{proposition}
\label{prop:traces-embP-subset-traces-canP}
$\tracesH(\emb(P)) \subseteq \tracesH(\can(P))$.
\end{proposition}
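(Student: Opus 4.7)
The proof is a short corollary of Proposition~\ref{prop:canonical-model}, so my plan is to reduce it to that result via a simple monotonicity observation.

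First, I would record the key antitonicity property of $\tracesH$ on its set-of-heaps argument: for any $H_1 \subseteq H_2 \subseteq \Heap$, $\tracesH(H_2) \subseteq \tracesH(H_1)$. This is immediate from the definition $\tracesH(H) = \{\tau \mid \forall h \in H.\, \tau \in \traces(\ESH, h)\}$, since demanding executability from every heap of a larger set is a stronger requirement.

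Next, I would invoke Proposition~\ref{prop:canonical-model}, which says that for every process $P$ and every well-typed input schedule $\rho$, the canonical heap $\cmodel(P, \rho)$ satisfies $\emb(P, \nil)$. Combined with the token at the root place $\nil$ that, by construction, belongs to the canonical model (so that the existential witness $t = \nil$ is discharged), this yields $\cmodel(P, \rho) \models \exists t.\, \token(t) \sand \emb(P, t) = \emb(P)$. Consequently $\can(P) \subseteq \{h \mid h \models \emb(P)\}$.

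Applying the antitonicity observation to this inclusion gives
\[
  \tracesH(\emb(P)) \;=\; \tracesH(\{h \mid h \models \emb(P)\}) \;\subseteq\; \tracesH(\can(P)),
\]
which is exactly the claim. There is essentially no obstacle at this level: the entire substance of the result has been packaged into Proposition~\ref{prop:canonical-model}, whose proof (carried out elsewhere by coinduction on $P$, tracking the schedule $\rho$ and the assignment of places from $\List{\{\Left,\Right\}}$) is what actually does the work. If any subtlety arises here, it will only be a bookkeeping check that the canonical-model construction indeed places the initial token at $\nil$ so that the existential $\exists t.\, \token(t)$ in $\emb(P)$ can be witnessed uniformly across all schedules.
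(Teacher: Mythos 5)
Your proof is correct and follows the same route as the paper, which derives this inclusion directly from Proposition~\ref{prop:canonical-model}: since every element of $\can(P)$ (the canonical model with its root token added) is a model of $\emb(P)$, the antitonicity of $\tracesH$ in its heap-set argument immediately gives $\tracesH(\emb(P)) \subseteq \tracesH(\can(P))$. Your handling of the token and the witness $t = \nil$ is exactly the small bookkeeping step the paper leaves implicit.
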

%

\subsection{Processes Simulate Canonical Models}
\label{ssec:processes-simulate-canonical-models}


We now \christoph{turn to} the second trace inclusion in~\eqref{eq:trace-inclusions}: each trace of the canonical model set $\can(P)$ is also a trace of $P$.
Writing $\addtoken{\cmodel}(P,\rho)$ for the canonical model $\cmodel(P,\rho)$ with a token added at its root place, we would like transitions of the heap $\addtoken{\cmodel}(P,\rho)$ to lead to a heap $\addtoken{\cmodel}(P',\rho)$ for some process $P'$, so we can simulate it with the corresponding process transition from $P$ to $P'$. 

\christoph{There are two obstacles to this plan:} (1) dead heap \christoph{parts, which} correspond to untaken choices in processes $P \choice Q$ and cannot perform any transitions, and (2) chaotic \christoph{transitions where,} given a trace \christoph{of the set of canonical models $\can(P)$, 
some of the models $\addtoken{\cmodel}(P,\rho)$ in $\can(P)$} transit to the ``chaotic'' state $\bot$ at some point along the trace. The problem here is that a given process cannot in general simulate the (arbitrary) I/O actions \christoph{that are} possible in \christoph{the state $\bot$}. 

Our proofs must take such dead heap parts into account to address problem (1) and carefully pick a particular schedule to avoid problem (2). Here, we focus on problem~(2) from an intuitive perspective (see~\fullversionref{sapp:processes-simulate-canonical-models} for a more precise and detailed account). Its solution is based on the observation that executing some I/O action $\bio(v, w_{\rho})$ with \emph{scheduled input $w_{\rho}=\rho(\tau, \bio, v)$} from $\addtoken{\cmodel}(P,\rho)$ indeed leads to a heap $\addtoken{\cmodel}(P',\rho)$ for some process $P'$ (and, in particular, not to~$\bot$).
Hence, to simulate a given trace $\tau$ of the heap $\addtoken{\cmodel}(P, \rho)$ by transitions of the process $P$, we must ensure that the schedule $\rho$ is consistent with the trace $\tau$. We therefore define a witness schedule $\rhowit(\tau)$, which returns the inputs appearing on the trace $\tau$ and has the property:
\begin{equation}
\label{eq:can-mod-rhowit-trace}
\addtoken{\cmodel}(P,\rhowit(\tau))\trans{\tau} h = \addtoken{\cmodel}(P',\rhowit(\tau))  
\end{equation}
for some process $P'$, \ie, in particular, $h \neq \bot$. 
The final trace inclusion in Equation~\eqref{eq:trace-inclusions} then follows immediately, since any trace $\tau \in \tracesH(\can(P))$ is also a trace of $\addtoken{\cmodel}(P,\rhowit(\tau))$.
\begin{proposition}
\label{prop:traces-canP-subset-tracesP}
$\tracesH(\can(P)) \subseteq \traces(P)$.
\end{proposition}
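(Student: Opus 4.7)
The plan is to show, for any $\tau \in \tracesH(\can(P))$, that $\tau \in \traces(P)$ by constructing a process computation that mirrors the execution of $\tau$ in a carefully chosen canonical model. The key device is the witness schedule $\rhowit(\tau)$, which reads off the inputs occurring along $\tau$ (and uses an arbitrary default elsewhere). Since $\cmodel(P,\rhowit(\tau)) \in \can(P)$, the assumption $\tau \in \tracesH(\can(P))$ immediately yields $\tau \in \traces(\ESH, \addtoken{\cmodel}(P,\rhowit(\tau)))$.

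The crux of the argument is Equation~\eqref{eq:can-mod-rhowit-trace}: executing $\tau$ from $\addtoken{\cmodel}(P,\rhowit(\tau))$ reaches a canonical heap $\addtoken{\cmodel}(P',\rhowit(\tau))$ for some process $P'$, and in particular never lands in $\bot$. Taking this property as granted, I would proceed by induction on the length of $\tau$ to establish a simulation: for every prefix $\tau'$ of $\tau$ with corresponding heap-after $\addtoken{\cmodel}(P',\rhowit(\tau))$, we have $P \opsem{\tau'} P'$ in the process semantics. The base case $\tau' = \nil$ is immediate. For the step, a single-action simulation lemma relates any heap transition $\addtoken{\cmodel}(P',\rhowit(\tau)) \hact{\bio(v,w)} \addtoken{\cmodel}(P'',\rhowit(\tau))$ (which by hypothesis exists and is not $\bot$) to a corresponding process transition $P' \opsem{\bio(v,w)} P''$. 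This lemma is proved by structural (co)induction on $P'$: a prefix $\bio(v,z).Q$ contributes exactly the permission consumed by the transition, a choice $P_1 \choice P_2$ reflects the two branches as two distinct subtrees in the canonical construction, and $\Null$ contributes no permissions. Crucially, the scheduled input $w = \rhowit(\tau)(\cdot,\bio,v)$ matches the one predicted by the I/O permission emitted by the canonical construction, so only rule $\BioRule$ fires and $\ContradictRule$ is avoided.

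The principal obstacle, anticipated by the text, lies in two places. First, rigorously establishing Equation~\eqref{eq:can-mod-rhowit-trace} requires a careful treatment of \emph{dead heap parts}, that is, subtrees of the canonical model originating from the branch of a choice not taken along~$\tau$. These subtrees sit alongside the ``active'' portion of the heap and must be shown to neither interfere with the intended transitions nor force a transition to $\bot$; since their root places carry no token (tokens only move along the path chosen by $\tau$), they are inert. Second, one must verify that the witness schedule is globally consistent: along $\tau$, the $n$-th input read must be exactly the value $w$ appearing in the $n$-th action of $\tau$, and this must be compatible with the indexing used by $\rhowit$. With these two technical points handled, the chain $\tracesH(\emb(P)) \subseteq \tracesH(\can(P)) \subseteq \traces(P)$ closes, and together with Proposition~\ref{prop:tracesP-subset-traces-embP} we obtain Theorem~\ref{thm:gorilla-glue}.
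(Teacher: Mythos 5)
Your proposal follows essentially the same route as the paper's proof: you instantiate the universal quantification over canonical models at the witness schedule $\rhowit(\tau)$ and then argue by induction over the trace, using a single-transition simulation lemma in which the scheduled input matches the permission's prediction (so only \BioRule{} fires) and untaken-choice subtrees are carried along as inert dead heap parts --- exactly the structure of the paper's Lemmas~\ref{lem:cmod-transition}--\ref{lem:opsem-simulates-cmod-trace}. The only imprecision is that the intermediate heaps are not literally of the form $\addtoken{\cmodel}(P',\rhowit(\tau))$ but generalized canonical heaps (with accumulated trace and positions) plus a dead residue, a point your discussion of dead heap parts already acknowledges.
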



\section{Related Work}
\label{sec:related-work}

Numerous formalisms have been developed for modeling and verifying systems. In the following, we focus on those approaches that combine models and code, and target distributed systems.

\subsubsection*{Model Verification with Code Extraction}

Various approaches verify models of distributed systems in formalisms that support the extraction of executable code. The following four approaches are all embedded in Coq and support the extraction of OCaml programs.

In Verdi~\cite{DBLP:conf/pldi/WilcoxWPTWEA15,DBLP:conf/cpp/WoosWATEA16}, a system is specified by defining types and handlers for external I/O and for network messages. The developer can focus on the application and its correctness proof by essentially assuming a failure-free environment. These assumptions can be relaxed by applying Verdi's verified system transformers to make the application robust with respect to communication failures or node crash failures. 
DISEL~\cite{DBLP:journals/pacmpl/SergeyWT18} offers a domain-specific language for defining protocols in terms of their invariants and atomic I/O primitives. It enables the modular verification of programs that participate in different protocols, using separation logic to represent protocol state separation. Component programs are verified in the context of one or more protocol models using a Hoare logic embedded in a dependent type theory. 
The program verification can be understood as a single refinement step.  
Velisarios~\cite{DBLP:conf/esop/RahliVVV18} is a framework for verifying Byzantine fault-tolerant state-machine replication protocols in Coq based on a logic of events. It models systems as deterministic state machines and provides an infrastructure for modeling and reasoning about distributed knowledge and quorum systems.  
%
Chapar~\cite{DBLP:conf/popl/LesaniBC16} is a formal framework in Coq for the verification of causal consistency for replicated key-value stores. The technique uses an abstract operational semantics that defines all the causally-consistent executions of a client of the store. The implementation of the store is verified by proving that its concrete operational semantics refines this abstract semantics. 
%
%
\citet{nfm20-liu} model distributed systems in Maude's rewriting logic~\cite{DBLP:conf/maude/2007}. These are compiled into distributed implementations using mediator objects for the TCP communication. They prove that the generated implementation is stuttering equivalent to the original model, hence preserving next-free CTL* properties. The implementation runs in distributed Maude sessions.

All of these approaches enable the development of distributed systems that are correct by construction. However, code extraction has three major drawbacks. 
First, the produced code is either purely functional or based on rewriting logic, which precludes common optimizations (\eg, using mutable heap data structures). 
Second, it is difficult for extracted code to interface existing software modules such as libraries; incorporating existing (possibly unverified) modules is often necessary in practice. 
Third, the approaches prescribe a fixed implementation language; however, it is often useful in practice to be able to combine components, such as clients and servers, written in different languages.
Our approach avoids all three problems by supporting the bottom-up development and verification of efficient, flexible implementations.

PSync~\cite{DBLP:conf/popl/DragoiHZ16} is a domain-specific language for implementing \christoph{round-based} distributed, fault-tolerant systems. PSync programs are executed via an embedding into Scala. A dedicated verifier allows one to prove safety and liveness properties of PSync programs, and a refinement result shows that these carry over to the executable system. \christoph{The focus of PSync is mostly on developing specific verified distributed \emph{algorithms} rather than entire software systems.} 


\subsubsection*{Combinations of Model and Code Verification}

The works most closely related to ours are those of~\citet{DBLP:conf/cpp/Koh0LXBHMPZ19} and of~\citet{DBLP:conf/ifm/OortwijnH19}. The former work
is part of DeepSpec~\cite{DBLP:conf/oopsla/Pierce16}, which is a research program with the goal of developing fully-verified software and hardware.
The DeepSpec developments are based on the Verified Software Toolchain~(VST)~\cite{CaoBGDA18}, a framework for verifying C programs via a separation logic embedded in Coq. 
\citet{DBLP:conf/cpp/Koh0LXBHMPZ19} use \emph{interaction trees}~\cite{DBLP:journals/pacmpl/XiaZHHMPZ20}, which are similar to our processes, to specify a program's I/O behavior and directly embed these into VST's separation logic using a special predicate. 
In contrast, our embedding of processes into separation logic using the encoding of~\citet{DBLP:conf/esop/Penninckx0P15} 
allows us to apply standard separation logic and existing program verifiers.
In both their and our work, a successful program verification guarantees an inclusion of the program's I/O traces in those of the I/O specification or interaction tree. 
\citet{DBLP:conf/cpp/Koh0LXBHMPZ19} verify a simple networked server in a TCP networking environment, for which they use two interaction trees at different abstraction levels and relate them by a form of contextual refinement that establishes linearizability. 
Their paper leaves open the question whether their approach can be used to verify system-wide global properties of distributed systems with different types of components and operating in different environments (\eg, exhibiting faulty and adversarial behavior). For example, it is unclear whether they could verify our case study protocols.
%
\citet{DBLP:conf/ifm/OortwijnH19} use a process calculus for modeling, which they embed into a concurrent separation logic~(CSL). Their approach relies on automated tools and combines the mCRL2 model checker with an encoding of CSL into Viper. The modeling-level expressiveness is limited by mCRL2 being a finite-state model checker. Moreover, while the soundness of CSL implies the preservation of state assertions from modeling to implementation level, it is unclear whether arbitrary trace properties are preserved.

%



IronFleet~\cite{DBLP:conf/sosp/HawblitzelHKLPR15} combines TLA-style refinement with code verification. Abstract models as well as the implementation are expressed in Dafny~\cite{Leino10}. Dafny is a powerful verification framework that supports, among other features, mutable heap data structures, inductive and coinductive data types, and proof authoring. 
Reasoning is supported by an SMT solver, which \mar{is restricted} to first-order logic.
\mar{Dafny enables different kinds of higher-order reasoning by encoding it into first-order logic internally, but nevertheless has some restrictions both in expressivity and practicality for larger proofs when compared to native higher-order theorem provers.}
By using Isabelle/HOL as modeling language, our approach provides the full expressiveness of higher-order logic, which also allows us to formalize our meta-theory. 
By using a single framework, Ironfleet avoids the problems we had to solve when linking abstract models to separation logic specifications. However, it lacks the flexibility to support different logics or modeling languages. Dafny currently compiles to sequential C\#, Go, and JavaScript, while existing separation logic based verifiers support concurrent implementations and allow developers to write the code directly in familiar programming languages rather than in Dafny.
IronFleet supports both safety and liveness properties, whereas our approach focuses on safety properties and leaves liveness as future work. 


Project Everest~\cite{BhargavanBDFHHI17} uses an approach similar to IronFleet to develop a verified implementation of TLS\@. An abstract implementation is developed and verified in Low$^\ast$~\cite{ProtzenkoZRRWBD17}, a subset of F$^\ast$~\cite{SwamyHKRDFBFSKZ16} geared toward imperative C-like code that is compiled to C\@. A main focus of this project is on verifying cryptographic algorithms. Like 
IronFleet, Low$^\ast$ verification uses an SMT solver and the extracted C code is sequential.

\section{Conclusions and Future Work}
\label{sec:conclusions}

We proposed a novel approach for the formal development of distributed systems. 
Our approach combines the top-down development of system models via compositional refinement with bottom-up program verification. This supports a \christoph{clean} separation of concerns and simplifies the individual verification tasks, which is crucial for managing the additional complexity arising in systems operating in faulty or adversarial environments. For program verification, we support state-of-the-art separation logics, which support mutable heap data structures, concurrency, and other features needed to develop efficient, maintainable code. We demonstrated that our approach bridges the gap between abstract models and concrete code, both through the theoretical foundations underpinning its soundness and \christoph{with} three complete case studies.  
The theory and case studies are mechanized in Isabelle/HOL 
and the Nagini and VeriFast program verifiers.

As future work, we plan to reduce the need for boilerplate Isabelle code by automating the translation of interface models into the components' I/O specifications that are input to the code verifiers.
%
We also plan to support liveness properties, which will require a more complex refinement framework in the style of TLA~\cite{Lamport94}, including support for fairness notions.
%
Finally, we are currently applying our approach to verify substantial parts of the SCION secure Internet architecture~\cite{DBLP:series/isc/PerrigSRC17}. We show protocol-level global security properties in the Dolev-Yao symbolic attacker model and verify the I/O behavior (as well as memory safety, secure information flow, and other properties) of the currently deployed implementation of  SCION routers. 
\bibliography{igloo-paper}

\newpage
\appendix

%
%
%
%

\iffullversion

\section{Theory Details}
\label{app:theory-details}

This section provides details on the main part of our soundness theorem, the equivalence of processes and their I/O specifications (Theorem~\ref{thm:gorilla-glue}).

\[\traces(P) =  \tracesH(\emb(P)).\]

Recall that we prove this theorem using a series of trace inclusions (Propositions~\ref{prop:tracesP-subset-traces-embP}, \ref{prop:traces-embP-subset-traces-canP}, and~\ref{prop:traces-canP-subset-tracesP}).

\[\traces(P) \subseteq \tracesH(\emb(P)) \subseteq \tracesH(\can(P)) \subseteq \traces(P).\]

\subsection{Formal Definitions and Proofs for Section~\ref{ssec:processes-into-iospecs}}
\label{sapp:processes-into-iospecs}

\subsubsection{Process Traces Are Process Assertion Traces}

\begin{proposition}
$\traces(P)  \subseteq  \tracesH(\emb(P))$.
\end{proposition}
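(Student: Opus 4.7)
The plan is to establish a simulation between processes and heap models of their embeddings, lifted to traces by induction on trace length. Define a relation $R \subseteq \Proc \times \option{\Heap}$ by $(P,h) \in R$ iff either $h = \bot$, or there exists a place $t$ such that $h \models \token(t) \sand \emb(P,t)$. Since $\emb(P) = \exists t.\,\token(t) \sand \emb(P,t)$, every pair $(P,h)$ with $h \models \emb(P)$ lies in $R$. The core claim is the single-step simulation property: whenever $(P,h) \in R$ and $P \opsem{\bio(v,w)} P'$, there exists $h'$ with $h \hact{\bio(v,w)} h'$ and $(P',h') \in R$. Once this is established, induction on the length of a process trace $\tau \in \traces(P)$ shows $\tau \in \traces(\ESH, h)$ for every $h \models \emb(P)$, yielding $\tau \in \tracesH(\emb(P))$.

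To prove the single-step property, I would proceed by induction on the derivation of $P \opsem{\bio(v,w)} P'$ (which is inductively defined even though processes are co-inductive). The case $h = \bot$ is immediate from \ChaosRule, since every process transition comes with $w \in \Ty(\bio,v)$ by rule \textsf{Pref}. For $h \models \token(t) \sand \emb(P,t)$, the interesting case is \textsf{Pref}: here $P = \bio(v,z).Q$ and unfolding $\emb$ gives $h \models \token(t) \sand (\exists t', z'.\, \bio(t,v,z',t') \sand \emb(Q[z'/z], t'))$. Splitting the heap, we obtain witnesses $t'$ and $w'$ with $\bio(t,v,w',t') \melem h$ and a remainder satisfying $\emb(Q[w'/z], t')$. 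If $w = w'$, rule \BioRule\ yields a transition to $h' \models \token(t') \sand \emb(Q[w/z], t')$, which is in $R$. If $w \ne w'$, both are well-typed (the predicted input must be well-typed for the chunk to be valid, and the actual input is well-typed by the process rule), so \ContradictRule\ applies and $h' = \bot \in R$. For the two \textsf{Choice} cases, $\emb(P_1 \choice P_2, t) = \emb(P_1,t) \sand \emb(P_2,t)$ splits the heap as $h = h_0 \mplus h_1 \mplus h_2$ with $\token(t) \melem h_0$ and $h_i \models \emb(P_i,t)$; applying the induction hypothesis to $(P_i, h_0 \mplus h_i)$ produces a matching transition, and Lemma~\ref{lem:heap-extension} combined with the frame structure of $\hact{}$ (each heap transition rule preserves the untouched multiset summand) lets us reassemble the result into a transition of the full heap $h$.

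The main obstacle is the \ContradictRule\ case: a priori, a heap model of $\emb(P)$ may predict inputs that contradict the actual input chosen by the process, seemingly breaking any literal heap-process correspondence. The fix is precisely to make $\bot$ part of the simulation relation, so that prediction mismatches are absorbed harmlessly — the heap ``falls into'' $\bot$, from which \ChaosRule\ simulates any well-typed continuation of $P$. A subsidiary difficulty is the co-inductive definition of $\emb$: unfolding must be justified via the defining equations, and the splitting of heaps across $\sand$ in the choice case requires a clean frame argument using Lemma~\ref{lem:heap-extension} together with the fact that each rule in Figure~\ref{fig:heap-trans} leaves an arbitrary $\mplus h$ summand undisturbed. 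Everything else — base case $\tau = \nil$, the inductive step for $\tau \snoc{a}$, and the closing quantifier $\forall h \models \emb(P)$ in the definition of $\tracesH$ — follows routinely from the simulation.
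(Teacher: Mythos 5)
Your proposal is correct and follows essentially the same route as the paper's proof: the same simulation relation (a token-plus-model-of-$\emb(P,t)$ heap, or $\bot$), established by induction on the derivation of the process transition, with \BioRule{} handling matching predictions, \ContradictRule{} absorbing mismatches into $\bot$, and \ChaosRule{} simulating from $\bot$. The only difference is that you spell out the heap-splitting/framing details for the choice case, which the paper leaves implicit.
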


\begin{proof}
It suffices to prove that $\traces(P) \subseteq \traces(ho)$ for any $P$ and $ho$ in the simulation relation 
\[
R(P,ho) = (\exists t, h.\, ho = \mset{\token(t)} \mplus h \land h \models \emb(P, t)) \lor ho = \bot.
\]
The proof proceeds by establishing a simulation between $P$ and $ho$ using this relation. 
If $P$ is related to a heap $h$ (first disjunct in $R$) then a given transition of $P$ can either be simulated by a transition to another heap $h'$ (using rule \BioRule) or to $\bot$ (using rule \ContradictRule). In each case, the resulting states are again related by $R$. We prove this by induction on the operational semantics of processes.
%
Otherwise, if $P$ is related to $\bot$, then any of $P$'s transitions can be simulated using rule \ChaosRule.
\end{proof}

\subsection{Formal Definitions and Proofs for Section~\ref{ssec:canonical-models}}
\label{sapp:canonical-models}

\subsubsection{Formal Definition of Canonical Model $\cmodel(P, \rho)$}

Recall from Example~\ref{ex:simple-IOspec-traces} that there are two sources of spurious behaviors: (a) unintended control flow stemming from the identification of places and (b) extra permissions not explicitly described by the assertion. Also recall that non-unique inputs in a heap allow arbitrary subsequent behavior through \ContradictRule\  and \ChaosRule\  rules.
We will therefore construct our canonical models of a process $P$ with respect to an input schedule $\rho$, which uniquely determines the inputs read by the process. 
These observations lead us to the construction of a canonical (heap) model $\cmodel(P,\rho)$ for each input schedule $\rho$. The set $\can(P)$ contains such a model for each input schedule $\rho$. The construction of $\cmodel(P,\rho)$ satisfies the following properties:
%
\begin{itemize}
\item $\cmodel(P,\rho) \models \emb(P,t_{\nil})$, \ie, canonical models are indeed models of $\emb(P,t_{\nil})$, where $t_{\nil}$ is the distinguished starting place of $\cmodel(P,\rho)$. 

\item A token never returns to the same place: the I/O permissions of $\cmodel(P,\rho)$ induce a tree on places where each $\bio(t,v,w,t')$ gives rise to an edge from $t$ to $t'$; this solves problem~(a).

\item $\cmodel(P,\rho)$ does not contain any extra permissions, \ie, every proper sub-multiset of $\cmodel(P,\rho)$ fails to satisfy $\emb(P,t_{\nil})$. This addresses problem~(b). We do not explicitly prove this property, but some of our trace inclusion proofs implicitly rely on it.
\end{itemize}
Intuitively, we construct a canonical heap model, $\cmodel( P,\rho)$, given a process $P$ and an input schedule $\rho$ by transforming the (syntactic) tree of $P$ for the input schedule $\rho$ to a corresponding heap model.

An \emph{input schedule} is a function $\rho : \ActBio^* \times \Bio \times \Val \fun \Val$ mapping an I/O trace $\tau$, an I/O operation $\bio$, and an output value $v$ to an input value $\rho(\tau,\bio,v)$. An input schedule $\rho$ is well-typed, written $\welltyped(\rho)$, if $\rho(\tau, \bio, v) \in \Ty(\bio,v)$ for all $\tau$, $\bio$, and~$v$.
We use the set of positions in a binary tree as our set of places $\Place = \List{\{\Left,\Right\}}$. 
%

We then construct the canonical model $\cmodel(P, \rho)$ in two steps (see also Example~\ref{ex:canonical-model}):
\begin{enumerate}
\item We define a recursive function $\premodel$, where
$\premodel(P,\rho,\tau,ppos,cpos,pos)$ returns a singleton multiset containing an I/O permission $\bio(t, v, w, t')$ corresponding to the I/O operation at position $pos$ of process $P$ under the input schedule $\rho$ (if any, otherwise~$\mempty$). Its starting place is given by $t=ppos$, where $ppos$ is the position $ppos$ of the last process appearing directly under an I/O operation prefix (initially $\nil$). Its target place is $t'=cpos \cc \mklist{L}$, where $cpos$ is the current position in the original process (\ie, the path already traversed). The trace $\tau$ records the traversed I/O actions and is used to determine the scheduled input $w = \rho(\tau, \bio, v)$. 
More precisely, for a prefix process $P = \bio(v, z).P'$, $\premodel$ behaves as follows. If $pos=\nil$, it returns the corresponding I/O permission $\bio(t, v, w_{\rho}, t')$, where $w_{\rho} = \rho(\tau, \bio, v)$ is the scheduled input, $t = ppos$ is the starting place and $t' = cpos\cc\mklist{\Left}$ is the target place. If $pos = \Left \cons pos'$ then the prefix is ``traversed'', calling $\premodel$ recursively with process $P[w_{\rho}/z]$ the updated trace $\tau \cc \mklist{\bio(v, w_{\rho})}$ and updated previous position $cpos \cc \mklist{\Left}$. Otherwise, it returns $\mempty$. Choices are traversed recursively.
Figure~\ref{fig:premodel} shows the formal definition of the function $\premodel$, which we discuss below.

\item We define the canonical heap model for a process $P$ and input schedule~$\rho$ by $\cmodel(P, \rho) = \gmodel(P, \rho, \nil, \nil, \nil)$, \ie, as an instance of an auxiliary function $\gmodel(P, \rho, \tau, ppos, cpos)$, 
%
which is defined by collecting all I/O permissions generated by the function $\premodel$ using the multiset sum over all positions $pos$: 
\[
\gmodel(P, \rho, \tau, ppos, cpos) = \bigmultisetsum_{pos} \premodel(P, \rho, \tau, ppos, cpos, pos)
\]

\end{enumerate}

We also define some derived heaps, adding a token to a canonical model, indicated by a superscript, \ie, $\addtoken{\gmodel}(P, \rho, \tau, ppos, cpos) = \mset{\token(ppos)} \mplus \gmodel(P, \rho, \tau, ppos, cpos)$ and $\addtoken{\cmodel}(P, \rho) = \addtoken{\gmodel}(P, \rho, \nil, \nil, \nil)$.
Finally, we define the set $\can(P)$ of canonical heap models of $P$ by 
\[
\can(P) = \{ h \mid \exists \rho.\,  \welltyped(\rho) \land h = \addtoken{\cmodel}(P, \rho)  \}.
\]

\begin{figure}[t]
\begin{center}
\begin{align*}
\premodel(\bio(v, z).P, \rho, \tau, ppos, cpos, \nil) 
& = \mset{\bio(ppos, v, \rho(\tau,\bio,v), cpos \cc \mklist{\Left}) }
\\ 
\premodel(\bio(v, z).P, \rho, \tau, ppos, cpos, \Left \cons pos) 
& = \premodel(P[w_\rho/z]), \rho, \tau \snoc{\bio(v, w_\rho)}, cpos\snoc{\Left}, cpos\snoc{\Left}, pos)
\\ 
\premodel(P_1 \choice \_,\rho,\tau,ppos,cpos,\Left \cons pos) 
& = \premodel(P_1, \rho, \tau, ppos, cpos\snoc{\Left}, pos)
\\ 
\premodel(\_ \choice P_2,\rho,\tau,ppos,cpos,\Right \cons pos) 
& = \premodel(P_2, \rho, \tau, ppos, cpos\snoc{\Right}, pos)
\\
\premodel(\_, \_, \_, \_, \_, \_)  
& = \mempty, \mathrm{otherwise.}
\end{align*}
\caption{Function $\premodel$ maps a process and a position to a singleton multiset containing an I/O permission or $\mempty$.
We abbreviate $\rho(\tau,\bio,v)$ as $w_\rho$ and write $x\cons xs$ for prefixing an element to a list (cons).
}
\label{fig:premodel}
\end{center}
\end{figure}

Figure~\ref{fig:premodel} shows the formal definition of the function $\premodel$.
The first two equations defining $\premodel$ cover the case of a prefixed process $\bio(v,w).P$. In the first equation, the desired position is reached ($pos=\nil$) and a singleton multiset containing the  corresponding I/O permission with source place $ppos$, target place $cpos\snoc{L}$, and scheduled input $\rho(\tau,bio,v)$ is returned. In the second equation, the position has head $\Left$ and the search continues in the process $P[w_\rho/z]$, where the scheduled input $w_\rho=\rho(\tau,\bio,v)$ replaces the bound variable $z$, for the trace $\tau$ extended with the traversed I/O event $\bio(v, w_\rho)$ and with the arguments $ppos$ and $cpos$ both set to $cpos \snoc{L}$. 
The third and fourth equations recursively navigate into a choice process in the direction given by the position, updating $cpos$ but not $ppos$ in the recursive call. The final equation catches all cases not covered by the previous equations and returns the empty multiset. 
Note that the concatenation $cpos \cdot pos$ is invariant throughout the recursive calls.

\subsubsection{Canonical Model Property}

The following lemma provides fixed-point equations for the canonical models, with one case per process form:
\begin{lemma}[Canonical model as fixed-point] \mbox{ }
\label{lem:gmodel-charact}
\begin{enumerate}
\item $\gmodel(\Null, \rho, \tau, ppos, cpos) = \mempty$,
\item $\gmodel(\bio(v, z).P, \rho, \tau, ppos, cpos) = $ \\
         $\mset{\bio(ppos, v, w_{\rho}, cpos')} \mplus 
          \gmodel(P[w_{\rho}/z], \rho, \tau \snoc{\bio(v, w_{\rho})}, cpos', cpos')$ \\
          where $w_{\rho} = \rho(\tau, \bio, v)$ and $cpos' = cpos\snoc{\Left}$, and
\item $\gmodel(P_1 \choice P_2, \rho, \tau, ppos, cpos) = $ \\
          $ \gmodel(P_1, \rho, \tau, ppos, cpos \snoc{\Left}) \mplus 
           \gmodel(P_2, \rho, \tau, ppos, cpos \snoc{\Right})$.
\end{enumerate}
\end{lemma}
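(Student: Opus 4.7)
The plan is to prove each of the three equations by unfolding the definition
\[
\gmodel(P,\rho,\tau,ppos,cpos) \;=\; \bigmultisetsum_{pos\in\List{\{\Left,\Right\}}} \premodel(P,\rho,\tau,ppos,cpos,pos),
\]
then partitioning the index set of positions according to its head, namely $\{\nil\}$, $\{\Left\cons pos \mid pos\in\List{\{\Left,\Right\}}\}$, and $\{\Right\cons pos \mid pos\in\List{\{\Left,\Right\}}\}$. Each of the five defining equations of $\premodel$ in Figure~\ref{fig:premodel} is pattern-matched on the pair (process, position), so this partition lines up exactly with the case analysis of $\premodel$. A preliminary auxiliary lemma would state that the multiset sum commutes with this three-way partition, \ie,
\[
\bigmultisetsum_{pos} f(pos) = f(\nil) \mplus \bigmultisetsum_{pos'} f(\Left\cons pos') \mplus \bigmultisetsum_{pos'} f(\Right\cons pos'),
\]
which follows directly from associativity and commutativity of $\mplus$ together with the bijection between $\List{\{\Left,\Right\}}$ and the disjoint union $\{\nil\}\uplus\List{\{\Left,\Right\}}\uplus\List{\{\Left,\Right\}}$ induced by cons.

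For case~(1) with $P = \Null$, none of the first four defining equations of $\premodel$ applies (they all require a prefix or choice as the process argument), so the catch-all equation yields $\mempty$ for every position. Summing $\mempty$ over all positions gives $\mempty$.

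For case~(2) with $P = \bio(v,z).P'$, the three partition classes are handled by the first two defining equations of $\premodel$ and the catch-all: position $\nil$ contributes $\mset{\bio(ppos, v, w_\rho, cpos')}$ by equation one; positions $\Left\cons pos'$ contribute $\premodel(P'[w_\rho/z], \rho, \tau\snoc{\bio(v, w_\rho)}, cpos', cpos', pos')$ by equation two; and positions $\Right\cons pos'$ fall through to the catch-all and contribute $\mempty$. After reindexing the second sum by $pos'$, we recognize the resulting multiset sum as $\gmodel(P'[w_\rho/z], \rho, \tau\snoc{\bio(v, w_\rho)}, cpos', cpos')$, yielding the stated identity. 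Case~(3) for $P_1 \choice P_2$ is analogous: position $\nil$ gives $\mempty$ via the catch-all, and the two subtrees of positions contribute, after reindexing, $\gmodel(P_i, \rho, \tau, ppos, cpos\snoc{d_i})$ for $i\in\{1,2\}$ and $d_1=\Left$, $d_2=\Right$, via the third and fourth defining equations of $\premodel$.

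The only genuinely non-routine step is the legitimacy of the partition/reindexing lemma for $\bigmultisetsum$ over the infinite index set $\List{\{\Left,\Right\}}$. Since chunk multisets are multisets over a countable underlying set and the sum is pointwise-finite by the tree structure, this should reduce to a standard fact about countable sums in $\multiset{\Chunk}$ already needed elsewhere in the development; once it is in place, each of the three cases is a short rewriting argument using the defining equations of $\premodel$.
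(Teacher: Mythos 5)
Your proof is correct and follows the route the paper intends (the paper states this lemma without a written proof, as it is immediate from the definitions): unfold $\gmodel$ as the multiset sum of $\premodel$ over all positions, split the index set by the head of the position ($\nil$, $\Left\cons pos'$, $\Right\cons pos'$), apply the matching defining equation of $\premodel$ in each class, and reindex the two tails back into $\gmodel$. The partition/reindexing step is indeed the only bookkeeping needed, and it holds even without your pointwise-finiteness remark, since the paper's multisets take values in $\Nat\cup\{\infty\}$, where countable sums can be rearranged freely.
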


\begin{proposition}[Canonical model property]
${\cmodel}(P, \rho) \models \emb(P,\nil)$ for all processes $P$ and well-typed schedules $\rho$.
\end{proposition}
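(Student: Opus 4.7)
The plan is to first strengthen the statement so that it is amenable to co-induction. Specifically, I would prove the generalized claim
\[
\gmodel(P, \rho, \tau, ppos, cpos) \models \emb(P, ppos)
\]
for all processes $P$, well-typed schedules $\rho$, traces $\tau$, and positions $ppos, cpos$. The original proposition follows immediately by instantiating $\tau = ppos = cpos = \nil$ and using $\cmodel(P,\rho) = \gmodel(P,\rho,\nil,\nil,\nil)$. Since $\models$ and $\emb$ are both defined co-inductively/co-recursively, the proof is by co-induction: I would exhibit the relation
\[
R = \{(h,\phi) \mid \exists P, \rho, \tau, ppos, cpos.\ \welltyped(\rho) \land h = \gmodel(P,\rho,\tau,ppos,cpos) \land \phi = \emb(P,ppos)\}
\]
and verify that $R$ is closed under the semantic clauses, unfolding $\gmodel$ via Lemma~\ref{lem:gmodel-charact} and $\emb$ via its defining equations.

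The co-inductive step proceeds by case analysis on the shape of $P$. When $P = \Null$, we have $\gmodel(\Null, \rho, \tau, ppos, cpos) = \mempty$ and $\emb(\Null, ppos) = \tru$, and the conclusion is immediate. When $P = \bio(v,z).P'$, the goal $\exists t', z'.\, \bio(ppos, v, z', t') \sand \emb(P'[z'/z], t')$ is discharged by choosing $t' := cpos\snoc{\Left}$ and $z' := w_\rho$ where $w_\rho = \rho(\tau, \bio, v)$. The fixed-point equation splits the canonical heap into the singleton I/O permission chunk, which is well-typed because $\welltyped(\rho)$ gives $w_\rho \in \Ty(\bio, v)$, and a residual $\gmodel(P'[w_\rho/z], \rho, \tau\snoc{\bio(v,w_\rho)}, cpos\snoc{\Left}, cpos\snoc{\Left})$ that is again in $R$ with $\emb(P'[w_\rho/z], cpos\snoc{\Left})$. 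When $P = P_1 \choice P_2$, the heap decomposes as $\gmodel(P_1, \rho, \tau, ppos, cpos\snoc{\Left}) \mplus \gmodel(P_2, \rho, \tau, ppos, cpos\snoc{\Right})$; crucially, the source place $ppos$ is preserved in both summands, matching the two conjuncts of $\emb(P_1,ppos) \sand \emb(P_2,ppos)$, and both pairs are again in $R$.

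The main obstacle will be establishing Lemma~\ref{lem:gmodel-charact} cleanly, since the infinite multiset sum $\bigmultisetsum_{pos}\, \premodel(\ldots)$ must be rearranged into the three fixed-point equations. This relies on the observation that, for each shape of $P$, every position $pos$ contributes to exactly one summand (the prefix case at $pos = \nil$ versus $pos = \Left\cons pos'$, and the choice case at $pos = \Left\cons pos'$ versus $pos = \Right\cons pos'$), and that all other positions yield $\mempty$. A secondary concern is the shallow embedding of assertions in Isabelle/HOL: co-induction must be performed through the semantic relation rather than through the assertion syntax, so the proof will likely unfold the definition of $\models$ and discharge each clause pointwise. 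Once these preliminaries are in place, the co-inductive step itself is essentially a direct match of the fixed-point equation for $\gmodel$ against the defining equation for $\emb$, with the well-typedness side condition handed off to the hypothesis $\welltyped(\rho)$.
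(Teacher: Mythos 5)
Your proposal matches the paper's proof: the paper also generalizes the statement to $\gmodel(P,\rho,\tau,ppos,cpos) \models \emb(P,ppos)$, proves it by co-induction using essentially the same witness relation (the paper fixes $\rho$ outside the relation rather than quantifying it inside, an inessential difference), and discharges the cases by structural analysis on $P$ via the fixed-point equations of Lemma~\ref{lem:gmodel-charact}. Your additional remarks on choosing $t' = cpos\snoc{\Left}$, using well-typedness of $\rho$, and rearranging the multiset sum to establish Lemma~\ref{lem:gmodel-charact} are correct elaborations of details the paper leaves implicit.
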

\begin{proof}
The lemma's statement follows from $\gmodel(P, \rho, \tau, ppos, cpos) \models \emb(P, ppos)$, which we prove by coinduction using the relation $X$ on heaps and formulas defined by 
\[
X(h,\phi) = \exists P,\tau,ppos,cpos.\, h = \gmodel(P,\rho,\tau,ppos,cpos) \land \phi = \emb(P, ppos),
\]
and a case analysis on the structure of $P$. The different cases are proved using the fixed point property of $\gmodel$ stated in Lemma~\ref{lem:gmodel-charact}.
\end{proof}

\subsection{Formal Definitions and Proofs for Section~\ref{ssec:processes-simulate-canonical-models}}
\label{sapp:processes-simulate-canonical-models}

We now turn to the second trace inclusions of Equation~\eqref{eq:trace-inclusions}, given on page~\pageref{eq:trace-inclusions}. It states that each trace of the canonical model set $\can(P)$ is also a trace of $P$.
We would like heap transitions of the canonical model $\addtoken{\cmodel}(P,\rho)$ to lead to a heap $\addtoken{\cmodel}(P',\rho)$ for some process $P'$, so that we can simulate it with the corresponding process transition from $P$ to $P'$. Recall that there are two problems:
\begin{enumerate}
\item \label{prob:dead-heap} {Dead heap parts:} 
Consider the process $P = Q \choice \bio(v, z).R$. The canonical model $\addtoken{\cmodel}(P,\rho)$ has a transition labeled $\bio(v,w)$ to $\addtoken{\gmodel}(R[w/z], \rho, \mklist{\bio(v,w)}, cpos', cpos') \mplus g$ where $w=\rho(\nil,\bio,v)$, $g = \addtoken{\cmodel}(Q, \rho)$, and $cpos' = \mklist{\Right,\Left}$ with the resulting token at place $cpos'$ (see Lemma~\ref{lem:gmodel-charact}). Since this token can subsequently only visit places in $\{pos \mid cpos' \leq pos \}$, this means that the $g$ portion of the heap will never be able to make a transition. Our proof must take such \emph{dead} heap parts into account.

\item \label{prob:chaotic-trans} {Chaotic transitions:} Let $\tau$ be a trace of all canonical models $\addtoken{\cmodel}(P,\rho)$ (\ie, for all input schedules~$\rho$). Some of these models transit to the ``chaotic'' state $\bot$ at some point along the trace. However, a given process cannot in general simulate the (arbitrary) I/O actions possible in that state. We will have to carefully pick a particular schedule to avoid this problem.
\end{enumerate}

We address problem~\eqref{prob:dead-heap} by considering heaps with dead parts in our transition lemmas. Let $\srcplaces(h)$ be the set of source places occurring in I/O permissions in the heap $h$. A heap $h$ is called \emph{dead} with respect to a position $pos$ if $h$ contains no tokens and $\srcplaces(h) \cap \{pos' \mid pos \leq pos'\} = \emptyset$, meaning a token at position $pos$ in a canonical model will never activate a transition in $h$.

The solution to problem~\eqref{prob:chaotic-trans} is based on the observation that executing some I/O action $\bio(v, w_{\rho})$ with \emph{scheduled input $w_{\rho}=\rho(\tau, \bio, v)$} from $\addtoken{\cmodel}(P,\rho)$ indeed leads to a heap $\addtoken{\cmodel}(P',\rho)$ for some process $P'$ (and, in particular, not to~$\bot$).
Hence, to simulate a given trace $\tau$ of the heap $\addtoken{\cmodel}(P, \rho)$ by transitions of the process $P$, we must ensure that the schedule $\rho$ is consistent with the trace $\tau$. We therefore define a ``witness'' schedule $\rhowit(\tau)$ such that, roughly speaking, 
\[
\addtoken{\cmodel}(P,\rhowit(\tau))\trans{\tau} h = \addtoken{\cmodel}(P',\rhowit(\tau))  
\]
for some process $P'$, \ie, in particular, $h \neq \bot$. We define the schedule $\rhowit(\tau)$
 to return the inputs appearing on the trace $\tau$:
\begin{align*}
\rhowit(\bio'(v',w) \cons \tau, (\nil, \bio, v)) & = \ifte{\bio' = \bio \land v' = v}{w}{\pick(\Ty(\bio,v))} \\
\rhowit(a \cons \tau', (b \cons \tau, bio, v)) & = \ifte{a = b}{\rhowit(\tau', (\tau, bio, v))}{\pick(\Ty(\bio,v))} \\
\rhowit(\_, (\_, bio, v)) & = \pick(\Ty(\bio,v)).
\end{align*}
That is, for proper prefixes $\tau'$ of the trace $\tau$, I/O operation $\bio$, and output $v$, the schedule $\rhowit(\tau)$ returns the input $w$, if $\bio(v,w)$ is the next step in $\tau$ after the prefix $\tau'$. For other traces, it returns an arbitrary well-typed input (\ie, $\pick(S)$ selects an arbitrary element from a non-empty set $S$). 


The following three lemmas make the intuition given above more precise.
We first prove a lemma about the individual transitions of canonical models.
%
\begin{lemma}[Canonical heap transitions]
\label{lem:cmod-transition}
Suppose that
\[
\addtoken{\gmodel}(P, \rho, \tau, ppos, cpos) \mplus g \trans{\bio(v, w)} h
\]
for some heaps $g$ and $h$ such that $w\in\Ty(bio,v)$, $ppos \leq cpos$, and $g$ is dead for $ppos$. Then there exist a process $P'$, positions $cpos'$ and $pos'$, and a heap~$g'$ such that $w = \rho(\tau, \bio, v)$, $g'$ is dead for $cpos'$,   $P \trans{\bio(v, w)} P'$,
 and
$h = \addtoken{\gmodel}(P', \rho, \tau \snoc{\bio(v, w)}, cpos', cpos') \mplus g'.$
\end{lemma}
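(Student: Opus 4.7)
The proof will proceed by structural induction on the process $P$, combined with the fixed-point characterization of canonical models in Lemma~\ref{lem:gmodel-charact}. The key preparatory step is an auxiliary positional lemma: whenever $ppos \le cpos$, every source place occurring in $\gmodel(P, \rho, \tau, ppos, cpos)$ is either exactly $ppos$ or a strict extension of $cpos$. This invariant, proved by an easy induction on $P$ using Lemma~\ref{lem:gmodel-charact}, lets me (a) pin down which chunks of the composite heap can be consumed from the token at $ppos$, and (b) show that a sub-heap passed over by the transition remains dead after the token moves deeper.

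The base case $P = \Null$ is vacuous: Lemma~\ref{lem:gmodel-charact}(1) makes the canonical component $\mempty$, and since $g$ is dead for $ppos$ there is no chunk with source $ppos$ anywhere in the heap, so neither \BioRule nor \ContradictRule is enabled. For the prefix case $P = \bio'(v', z).P_0$, I expand via Lemma~\ref{lem:gmodel-charact}(2) and apply the positional lemma to the tail $\gmodel(P_0[w_\rho/z], \rho, \tau\snoc{\bio'(v', w_\rho)}, cpos', cpos')$ with $cpos' = cpos\snoc{\Left}$; this makes the distinguished chunk $\bio'(ppos, v', w_\rho, cpos')$ the \emph{unique} chunk with source $ppos$ in the heap. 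The transition must therefore consume it, forcing $\bio = \bio'$, $v = v'$, and $w = w_\rho = \rho(\tau, \bio, v)$ (the last equality excludes \ContradictRule, since the desired conclusion demands $h$ to be of the form $\addtoken{\gmodel}(\cdots) \mplus g' \neq \bot$). Taking $P' = P_0[w_\rho/z]$ and $g' = g$ completes this case, with $g$ remaining dead for $cpos'$ because $\{q \mid cpos' \le q\} \subseteq \{q \mid ppos \le q\}$.

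The choice case $P = P_1 \choice P_2$, where the real work lies, is handled by expanding via Lemma~\ref{lem:gmodel-charact}(3) and case-splitting on which sub-heap contributes the consumed chunk. In the $P_1$-subcase, re-associate multisets so that the full transition factors as
\[
\addtoken{\gmodel}(P_1, \rho, \tau, ppos, cpos\snoc{\Left}) \mplus g \trans{\bio(v, w)} h_1,
\]
with $h = h_1 \mplus \gmodel(P_2, \rho, \tau, ppos, cpos\snoc{\Right})$. Since $P_1$ is a subterm and $g$ is still dead for $ppos$, the IH applies and yields $P_1 \trans{\bio(v, w)} P'_1$, $w = \rho(\tau, \bio, v)$, and $h_1 = \addtoken{\gmodel}(P'_1, \rho, \tau\snoc{\bio(v, w)}, cpos'_1, cpos'_1) \mplus g'_1$ with $g'_1$ dead for $cpos'_1$. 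Lifting through $\textsf{Choice}_1$ gives $P \trans{\bio(v, w)} P'_1$, and setting $g' = g'_1 \mplus \gmodel(P_2, \rho, \tau, ppos, cpos\snoc{\Right})$ delivers the required form for $h$; the $P_2$-subcase is symmetric via $\textsf{Choice}_2$. The main technical obstacle is verifying that this combined $g'$ really is dead for $cpos'_1$. Here the positional lemma is essential: source places of the absorbed right branch are either $ppos$ (which satisfies $ppos \le cpos$ and is a strict ancestor of $cpos'_1$, hence not $\ge cpos'_1$) or strict extensions of $cpos\snoc{\Right}$, which lie in the right subtree from $cpos$ and are incomparable with $cpos'_1 \ge cpos\snoc{\Left}$ in the left subtree. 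In either case the source place fails to be $\ge cpos'_1$, so combined with the IH's $g'_1$ the whole $g'$ is dead for $cpos'_1$, as required.
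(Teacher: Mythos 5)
Your overall skeleton --- case analysis via Lemma~\ref{lem:gmodel-charact}, the observation that the token at $ppos$ is unique and that the only enabled permission with source $ppos$ is the head chunk of a prefix (so $\bio$, $v$ and the scheduled input are forced, with \ContradictRule{} ruled out because the hypothesis already says $h$ is a heap, not ``because the desired conclusion demands it''), and the deadness/frame bookkeeping in the choice case --- is the right set of ingredients; the paper gives no prose proof to compare against, but this is essentially what any proof of the lemma must do. However, there is a genuine gap in the induction principle you invoke: processes are defined \emph{coinductively} ($::=_{\corec}$), precisely so that non-terminating processes such as $\id{RS\!P}$ exist. A codatatype admits case analysis on its outermost constructor but no well-founded structural induction, so ``structural induction on $P$'' is not available: in the choice case your appeal to an induction hypothesis for $P_1$ has no termination guarantee (consider a corecursively defined $P$ with infinitely many nested choices), and the same objection hits your auxiliary positional lemma, which you also prove ``by induction on $P$''. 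The repair is to induct on a finite measure attached to the consumed permission rather than on $P$: by the definition of $\premodel$, a chunk with source place $ppos$ can only be generated at a position consisting solely of choice steps (traversing a prefix replaces the $ppos$ parameter by a strict extension of $cpos$), and that position is a finite list; inducting on its length (equivalently, on the well-founded recursion of $\premodel$ in its position argument) makes each choice case strip one step of it, after which your argument goes through unchanged.

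A second, smaller gap is in the choice case itself: to show that $g' = g_1' \mplus \gmodel(P_2,\rho,\tau,ppos,cpos\snoc{\Right})$ is dead for $cpos_1'$ you use $cpos_1' \geq cpos\snoc{\Left}$, but the lemma's conclusion --- and hence the induction hypothesis you are quoting --- says nothing that relates the returned $cpos'$ to the $cpos$ argument of the call. You must either strengthen the statement being inducted on (e.g.\ add the conjunct that $cpos'$ strictly extends the $cpos$ of the call, which is preserved in both the prefix and choice cases) or derive the fact on the spot: target places of chunks in $\gmodel(P_1,\rho,\tau,ppos,cpos\snoc{\Left})$ strictly extend $cpos\snoc{\Left}$, and since neither the canonical part nor $g_1'$ contains a token, the unique token of $h_1$ sits at $cpos_1'$, which by \BioRule{} is the consumed chunk's target place. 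With that fact in hand, your incomparability argument for sources under $cpos\snoc{\Right}$ and for $ppos$ is correct.
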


The following lemma
states that transitions with scheduled input never lead to the chaotic state~$\bot$.
%
\begin{lemma}[Transitions with scheduled input]
\label{lem:scheduled-input-trans}
If $\addtoken{\gmodel}(P, \rho, \tau, ppos, cpos) \trans{\bio(v,w)} ho$ for $ppos \leq cpos$ and $w=\rho(\tau, \bio, v)$, then $ho \neq \bot$.
\end{lemma}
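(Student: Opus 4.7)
The plan is to show that no transition from the given canonical heap can trigger the \ContradictRule, which is the only rule that produces $\bot$. A transition $\bio(v,w)$ with $ho = \bot$ would require the presence in the heap of some chunk $\bio(ppos, v, w^\star, t')$ together with $w \neq w^\star$ and $w,w^\star \in \Ty(\bio,v)$ (note that the token of the canonical heap sits at $ppos$, so this is the only source place at which a conflicting permission could lie). So it suffices to prove the following structural invariant: every I/O permission of the form $\bio'(ppos, v', w', t')$ occurring in $\gmodel(P, \rho, \tau, ppos, cpos)$ satisfies $w' = \rho(\tau, \bio', v')$, whenever $ppos \leq cpos$.

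I would prove this invariant by induction on $P$, using the fixed-point equations of Lemma~\ref{lem:gmodel-charact} (coinduction is unnecessary since we only need to look one ``layer'' deep to locate a source-$ppos$ chunk, and productivity of $\gmodel$ guarantees that only finitely many permissions share a given source place). In the $\Null$ case the heap is empty and the claim is vacuous. In the prefix case $P = \bio'(v', z).P''$, Lemma~\ref{lem:gmodel-charact} splits the heap into the single chunk $\bio'(ppos, v', \rho(\tau,\bio',v'), cpos\snoc{\Left})$, which directly satisfies the invariant, plus $\gmodel(P''[\cdot], \rho, \tau\snoc{\bio'(v',\rho(\tau,\bio',v'))}, cpos\snoc{\Left}, cpos\snoc{\Left})$. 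Because $ppos \leq cpos$ implies $ppos < cpos \snoc{\Left}$ (strictly), the induction hypothesis applied to this sub-heap tells us that all of its source-places lie at or below $cpos\snoc{\Left}$ in the tree, hence none of its chunks has source $ppos$, so the invariant is preserved. In the choice case $P = P_1 \choice P_2$, the heap decomposes into $\gmodel(P_1, \rho, \tau, ppos, cpos \snoc{\Left}) \mplus \gmodel(P_2, \rho, \tau, ppos, cpos \snoc{\Right})$, each of which retains the same $ppos$ and $\tau$ and satisfies $ppos \leq cpos \snoc{d}$; applying the induction hypothesis to both yields the claim.

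Given the invariant, the lemma follows immediately: since $w = \rho(\tau,\bio,v)$, any source-$ppos$ chunk of the form $\bio(ppos, v, w^\star, t')$ in the heap must have $w^\star = \rho(\tau,\bio,v) = w$, ruling out the side condition $w \neq w^\star$ required by \ContradictRule. Therefore the only applicable rule for the transition is \BioRule, which produces a proper heap, so $ho \neq \bot$. The \ChaosRule\ cannot fire either, since its precondition requires the initial heap to already be $\bot$, whereas $\addtoken{\gmodel}(P,\rho,\tau,ppos,cpos)$ is a genuine multiset.

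The main obstacle is formalising the bookkeeping of positions: we must track that the recursive invocations of $\gmodel$ in the prefix case only generate chunks whose source places are proper extensions of $cpos\snoc{\Left}$, and therefore distinct from $ppos$. This monotonicity of source places under $\gmodel$ is not stated as a separate lemma in the excerpt, so it would have to be extracted (also by induction, following the same case split) and used as an auxiliary invariant alongside the main claim. Once this positional discipline is in place, everything else is routine case analysis driven by Lemma~\ref{lem:gmodel-charact}.
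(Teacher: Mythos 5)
Your overall strategy is the right one, and it matches the intuition the paper gives for this step (the paper itself prints no prose proof of this lemma; the argument lives in the Isabelle formalization): the only token in $\addtoken{\gmodel}(P,\rho,\tau,ppos,cpos)$ sits at $ppos$, so a transition to $\bot$ could only come from \ContradictRule{} applied to a chunk $\bio(ppos,v,w^\star,t')$ with $w^\star \neq w$; the key invariant that every chunk of $\gmodel(P,\rho,\tau,ppos,cpos)$ with source place $ppos$ predicts exactly the scheduled input $\rho(\tau,\bio,v)$ (given $ppos \leq cpos$) rules this out, and \ChaosRule{} is excluded because the initial state is a genuine multiset. That invariant is indeed true, precisely because the recursion updates $\tau$ and $ppos$ in lockstep: choices leave both unchanged, prefixes change both, and the side condition $ppos \leq cpos$ prevents a deeper chunk (whose source extends $cpos \snoc{\Left}$) from colliding with $ppos$.

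The gap is in how you propose to establish the invariant. ``Induction on $P$'' is not available: processes are defined \emph{co}inductively and may be infinite (the whole point of $=_\corec$), so structural induction on $\Proc$ is not a well-founded principle here. Your parenthetical justification is also factually wrong on two counts: chunks with source place $ppos$ are not found ``one layer deep'' --- they occur at arbitrary depth under nested choices --- and it is false that only finitely many permissions share a given source place: the translation $\proc(\GES,s)$ uses the countable choice $\bigchoice_{v\in\Val}$, so its canonical model contains infinitely many chunks whose source is $ppos$. The repair is standard but must be said: every chunk of $\gmodel$ arises as $\premodel(P,\rho,\tau,ppos,cpos,pos)$ for some \emph{finite} position $pos$ (this is how $\gmodel$ is defined), so both the input-correctness invariant and the auxiliary source-place monotonicity fact you identify should be proved by induction on $pos$ (equivalently, on the depth at which the chunk is generated), following the defining equations of $\premodel$ in Figure~\ref{fig:premodel} and threading the invariant $ppos \leq cpos$ through the prefix and choice cases exactly as you describe. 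With that change of induction principle, the rest of your case analysis goes through.
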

%

Next, we extend these lemmas from individual transitions to traces. 


\begin{lemma}[Canonical heap traces]
\label{lem:opsem-simulates-cmod-trace}
Suppose we have 
\[
\addtoken{\gmodel}(P, \rhowit(\sigma), \tau, ppos, cpos) \mplus g \trans{\tau'} ho,
\]
where $\tau \cc \tau' \leq \sigma$, $ppos \leq cpos$, $g$ is dead for $ppos$, and $ho \in\option{\Heap}$. Then there exist a process $P'$, place~$t'$, heap~$g'$, and positions $ppos'$ and $cpos'$ such that $ppos' \leq cpos'$, $g'$ is dead for $ppos'$,
\[
P \trans{\tau'} P', \text{ and }
ho = \addtoken{\gmodel}(P', \rhowit(\sigma), \tau \cc \tau', ppos', cpos') \mplus g'.
\]
\end{lemma}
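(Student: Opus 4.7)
The proof will proceed by induction on the length of $\tau'$, using Lemma~\ref{lem:cmod-transition} as a single-step simulation lemma and exploiting the defining property of $\rhowit$ to rule out any transition to the chaotic state $\bot$ along the way.

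For the base case $\tau' = \nil$, the transition sequence is empty, so $ho = \addtoken{\gmodel}(P, \rhowit(\sigma), \tau, ppos, cpos) \mplus g$. One then takes $P' := P$, $ppos' := ppos$, $cpos' := cpos$, and $g' := g$; all conclusions hold trivially with $P \trans{\nil} P$ by definition.

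For the inductive step, I will decompose $\tau' = \tau'' \snoc{a}$ with $a = \bio(v, w)$ and split the given run as
\[
\addtoken{\gmodel}(P, \rhowit(\sigma), \tau, ppos, cpos) \mplus g \trans{\tau''} ho'' \trans{a} ho.
\]
Since $\tau \cc \tau'' \leq \tau \cc \tau' \leq \sigma$, the induction hypothesis applied to $\tau''$ yields a process $P''$, positions $ppos'' \leq cpos''$, and a heap $g''$ dead for $ppos''$ with $P \trans{\tau''} P''$ and $ho'' = \addtoken{\gmodel}(P'', \rhowit(\sigma), \tau \cc \tau'', ppos'', cpos'') \mplus g''$. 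Because $\tau \cc \tau'' \cc \mklist{a} \leq \sigma$, the definition of $\rhowit$ yields $\rhowit(\sigma)(\tau \cc \tau'', \bio, v) = w$. Every permission in $\gmodel(P'', \rhowit(\sigma), \tau \cc \tau'', ppos'', cpos'')$ whose source place is $ppos''$ arises, via the fixed-point characterization of Lemma~\ref{lem:gmodel-charact}, from an I/O prefix at the top of $P''$, and its predicted input is therefore exactly $w$; moreover, $g''$ being dead for $ppos''$ rules out any permission in $g''$ with source $ppos''$. Hence rule \ContradictRule\ cannot fire on the step $ho'' \trans{a} ho$, so $ho \neq \bot$ — this is the $g''$-extended analogue of Lemma~\ref{lem:scheduled-input-trans}. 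With $ho \neq \bot$ secured, Lemma~\ref{lem:cmod-transition} applies and delivers $P'$, $cpos'$, and $g'$ dead for $cpos'$ with $P'' \trans{a} P'$ and $ho = \addtoken{\gmodel}(P', \rhowit(\sigma), \tau \cc \tau'' \snoc{a}, cpos', cpos') \mplus g'$. Setting $ppos' := cpos'$ and chaining $P \trans{\tau''} P''$ with $P'' \trans{a} P'$ yields $P \trans{\tau'} P'$, closing the induction.

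The chief obstacle will be the argument that no chaotic transition can occur along the simulated trace. Without both the careful choice of the witness schedule $\rhowit(\sigma)$ (which forces the scheduled input at each step to coincide with the input actually taken in $\tau'$) and the deadness invariant on $g''$ (which prevents residual permissions in the dead part from sharing a source place with the active token), a stray match would allow the heap to step to $\bot$, which the process $P$ could not simulate. Preserving these two invariants across each inductive step, as guaranteed by Lemma~\ref{lem:cmod-transition}, is exactly what holds the whole simulation together.
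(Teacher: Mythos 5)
Your proof is correct and follows essentially the same route as the paper, which proves the lemma by trace induction using Lemma~\ref{lem:cmod-transition} for the single-step simulation and Lemma~\ref{lem:scheduled-input-trans} to exclude transitions to $\bot$. Your only deviation is that you re-derive the $\bot$-exclusion inline for the heap extended with the dead part $g''$ (rather than citing Lemma~\ref{lem:scheduled-input-trans} directly), which is a reasonable and careful elaboration of the same idea.
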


\begin{proof}
By trace induction using Lemmas~\ref{lem:cmod-transition} and~\ref{lem:scheduled-input-trans} for single transitions.
\end{proof}

Now we can prove that each trace of the set of canonical models of $P$ is also a trace of $P$.
\begin{proposition}
$\tracesH(\can(P)) \subseteq \traces(P)$.
\end{proposition}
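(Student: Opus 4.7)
The plan is to exploit the machinery already set up in Section~\ref{ssec:processes-simulate-canonical-models}, and in particular to reduce the proposition to a single application of the trace-level simulation lemma (Lemma~\ref{lem:opsem-simulates-cmod-trace}) with a carefully chosen schedule. Let $\tau \in \tracesH(\can(P))$ be arbitrary. By definition of $\tracesH$, the trace $\tau$ must be executable in \emph{every} heap model in $\can(P)$; in particular, we are free to pick a schedule that best suits us. The natural choice is the witness schedule $\rhowit(\tau)$, which is precisely the gadget designed to avoid the chaotic transitions discussed as problem~(2).

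First I would verify that $\rhowit(\tau)$ is a legitimate member of the schedule space, i.e.\ that it is well-typed. For the arguments $(\tau', \bio, v)$ matched by a prefix of $\tau$, the returned input $w$ is the one that actually occurs in $\tau$ at the corresponding I/O action $\bio(v,w)$; since all traces of $\ESH$ are well-typed (as noted in Section~\ref{ssec:heap-transitions}), we have $w \in \Ty(\bio,v)$. For all other arguments, $\rhowit$ returns $\pick(\Ty(\bio,v))$, which is well-typed because $\Ty(\bio,v)$ is non-empty by assumption. Hence $\welltyped(\rhowit(\tau))$ holds and $\addtoken{\cmodel}(P, \rhowit(\tau)) \in \can(P)$, so $\tau \in \traces(\ESH, \addtoken{\cmodel}(P, \rhowit(\tau)))$.

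Next I would instantiate Lemma~\ref{lem:opsem-simulates-cmod-trace} with $\sigma = \tau$, initial trace $\nil$, suffix $\tau' = \tau$, $ppos = cpos = \nil$, and $g = \mempty$. All preconditions are immediate: $\nil \cc \tau \leq \tau$; $\nil \leq \nil$; and the empty heap is vacuously dead for any position. Since $\addtoken{\gmodel}(P, \rhowit(\tau), \nil, \nil, \nil) \mplus \mempty = \addtoken{\cmodel}(P, \rhowit(\tau))$, the hypothesis of the lemma reduces exactly to the fact we have just established. The conclusion then yields a process $P'$ with $P \trans{\tau} P'$, which by definition gives $\tau \in \traces(P)$, as required.

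The only subtle point in the plan is not really in this proposition itself but in the pre-existing Lemma~\ref{lem:opsem-simulates-cmod-trace} and its single-step counterparts (Lemmas~\ref{lem:cmod-transition} and~\ref{lem:scheduled-input-trans}); the main obstacle there is ensuring that neither dead fragments nor chaotic transitions derail the simulation, which is precisely what the choice $\rhowit(\tau)$ is designed to prevent. Once those lemmas are in hand, the present proof collapses into the two ingredients above---well-typedness of the witness schedule and a direct instantiation of the trace simulation lemma---so the write-up can be kept to just a few lines.
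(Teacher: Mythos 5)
Your proposal is correct and is essentially the paper's own proof: the paper likewise picks the witness schedule $\rhowit(\tau)$, notes that $\tau$ is executable in $\addtoken{\cmodel}(P,\rhowit(\tau))$ since it is a trace of all canonical models, and concludes by one application of Lemma~\ref{lem:opsem-simulates-cmod-trace}. Your additional checks (well-typedness of $\rhowit(\tau)$ and the concrete instantiation $\sigma=\tau$, $\tau'=\tau$, $ppos=cpos=\nil$, $g=\mempty$ with $\mempty$ vacuously dead) are just the details the paper leaves implicit.
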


\begin{proof}
Let $\tau \in \tracesH(\can(P))$. Then $\tau$ is a trace of all canonical heap models of $P$; hence, in particular, $\addtoken{\cmodel}(P,\rhowit(\tau)) \trans{\tau} ho$ for some $ho \in \Heap_\bot$. By Lemma~\ref{lem:opsem-simulates-cmod-trace}, we conclude that $\tau\in \traces(P)$.
\end{proof}

\else 
\fi

\end{document}


\section{Case Studies Details}
\label{app:case-studies-details}

\subsection{Primary-backup Replication Case Study}
\label{sapp:primary-backup}

\subsubsection{Modeling Level}
The protocol that we verify largely follows \citet[Sec. 2.3.1]{charron2010replication}, but deviates in the specification for the primary server. The cited book is unclear on how the primary server can handle multiple concurrent requests by clients, a problem that we resolve by adding local state $\textit{pend}$ that keeps track of all pending requests. Also, for simplicity, we include the log in the sync reply messages that backups send to the primary server, instead of a unique identifier or hash. 

For this case study, we begin with the protocol model directly, rather than an abstract model and a subsequent refinement. A finite number of clients and servers (collectively called agents) pass messages via an environment that delivers them in-order. 

In the absence of failures, the protocol works as follows: A client generates a new update request containing a \emph{unique operation} (uop) and sends it to the current primary server. Upon receiving it, the primary appends the uop to its pending log $\textit{pend}$ (but not to his real log $\textit{log}$) and sends both the uop and the pending log as sync requests to all backup servers. Each backup server overwrites its own log with the received log, and sends back a confirmation message containing the same log. The primary server, once it received responses from all backup servers believed to be alive, updates its own log and sends it to the client. Only at this point is \emph{backup consistency} achieved: the log stored on the primary (and sent back to the client) is a prefix of the logs stored at all backups. Before the primary has received confirmations from the backups and updated its own log a consistent view on that log can not be guaranteed: if the primary fails, a backup server becomes the new primary. Its current log may diverge from that of backups (\ie, neither log is a prefix of the other), and a sync and subsequent rollback by the backups is required to reach a consistent state. 
Clients who notice the failure of the old primary will re-send their request to the new primary, but not necessarily in the same order.
Thus, the order in which uops are appended to the log is only fixed once a primary has successfully synced with all the backups on those uops. Before that, a cascade of failing primaries could change the order of pending requests entirely.

\subsubsection{Invariants}
Our proof of backup consistency involves nine invariants. For each pair of servers we establish a total order on the logs stored locally on them, and contained in messages between them. 
In \autoref{sec:primary-backup-replication} we presented a simplified invariant that follows this idea. However, the inductive invariant in our formalization requires additional ghost state that we introduce here. 


According to the protocol, each server stores a log that we simply call $\textit{log}$. For the backup, this is the latest log received from the primary, whereas for the primary, this is the longest log that it knows to have been received by all backups.
In contrast, in our model, each server $a$ stores \emph{for each server $b$} a log $\textit{log}(a, b)$. For $a=b$, this contains $a$'s log. For $a \neq b$, it is \emph{ghost state}, and stores the latest log that backup $a$ received from primary $b$ as a sync request, or the latest log that primary $a$ received back from backup $b$ as a sync reply, depending on the order between $a$ and $b$.
In the initial state, $\textit{log}(a, b) = \nil$ for all $a$, $b$. 
We observe that a server, once primary, will only ever extend its own pending log, and thus its earlier sync messages contain logs that are prefixes of later ones.
The primary server only extends $\textit{log}(a, a)$ once all alive backups have synchronized (\ie, confirmed the extended log). To store the log that contains all \emph{pending} updates (\ie, those have not been confirmed by all backups), we use the aforementioned variable $\textit{pend}(a)$. For all backup servers $a$, $\textit{pend}(a) = \textit{log}(a, a)$. Note that for all servers, $\textit{log}(a, a)$ is a prefix of $\textit{pend}(a)$.

Messages (sync request and responses that consist of logs) are sent between servers over FIFO-channels.
We show the invariant that for all $a$ and $b$ that are alive (have not failed) the following sequence of logs is totally ordered by the prefix relation. Logs ``traverse'' this list from right to left as they are being forwarded from the primary server $a$ to the backup $b$ and back to $a$.
\[
\mklist{\textit{log}(a, b)} \cc 
\textit{transit}(b, a)  \cc 
\mklist{\textit{log}(b, a)}  \cc 
\textit{transit}(a, b) \cc
\mklist{\textit{pend}(a)}
\]

The primary $a$ only completes the synchronization once it has received sync responses containing the same log $h$ from all backups. It then sets its own log to $h$ and replies to the client with $h$. 
From the invariant we know that, for each backup $b$, $\textit{log}(b, a)$ is a prefix of $h$. 
Furthermore, we prove another invariant: if such a sync response exists from a backup $b$ to the primary $a$, then $\textit{log}(b, a) = \textit{log}(b, b)$. 
Together, this implies the backup consistency property as stated above.

\subsubsection{Code Level}
\tobias{Some of this text seems to have been copied to the main body of the text and is thus redundant.}
We implemented a sequential version of the client and a concurrent version of the server in Java. The client implementation is straightforward. The client gets a uop and the set of available servers as arguments. Each server entry consists of an address for communication and an integer identifier for ordering. The client picks the primary server, sends an update request, and waits for a response. For failure detection, clients and servers are provided with a failure detector object to query the state of servers. If Java's socket API determines that a connection attempt times out, the failure detector is queried. According to the protocol, in case of a failed server, the client or server removes that server from its set of \alive{} servers and repeats the run with a new primary server. If no failure is observed, the last connection attempt is repeated.
For the concurrent server, the implementation is split into different threads that each perform a different task of the primary-backup server. For communication, these threads use shared input and output buffers, guarded by a lock. A fixed number of threads receive, process, and respond to server sync requests. Client update requests are handled similarly, except that for each incoming request, a new thread is spawned to process it. Overall, our server implementation therefore uses an unbounded number of threads.

We verified our client and server implementation against our translated I/O specification in VeriFast. The verification of the client closely follows our approach described in Section~\ref{ssec:component-implementation-and-verification}. Contracts of trusted libraries were defined similarly to our other case studies;
a more detailed description can be found in Appendix~\ref{sapp:entity-authentication-details}.
Since different threads need to be able to perform I/O, we added a lock invariant containing a token and all I/O permissions to the lock that protects the server's shared state. Any thread that acquires the lock will therefore get the associated permissions and be able to perform I/O until it releases the lock again.

The I/O specification defines which I/O operations may be performed in terms of the abstract component state derived from the decomposed interface model. Since the implementation's behavior depends on the actual program state, in particular the state guarded by the lock, we needed to link the abstract component state to the actual state of the data structures protected by the  lock. 

As an example, consider the set of \alive{} servers. This set is part of the abstract state and also part of the actual program, in the form of a shared list data structure that contains server entries. 
In our lock invariant, we link these two by proving that the set in the model state is an abstraction of the actual list data structure. The abstraction of the data structure is needed because the state of the actual data structure and the model state differ in representation: For the set of \alive{} servers, the model state is defined as a mathematical set, whereas the actual data structure representing this set is a mutable list of server entries, each consisting of an identifier and an address, on the program heap. 
We bridged this abstraction gap in two steps. First, we verified the data structures and the operations that work on them with respect to a functional specification that defines their behavior on a purely mathematical level, i.e., that maps the behavior of operations on a stateful data structure on the program heap (e.g, a linked list) to operations on pure values (e.g., mathematical sequences or lists).  
Second, we further abstracted their functional behavior to the level of the abstract model. 
Consider the operation of removing a failed server from the set of \alive{} servers: First, we verify that calling remove on the stateful data structure protected by the lock corresponds to removing every server entry with the same value from its mathematical list representation. In the second step, we then verify that, when viewing this list as representing a set, removing every entry with the same value at the list level corresponds to removing that value from the set at the set level. 

The server and client implementations consists of 1029 lines of code (not counting libraries). For verification, we required 3953 lines of code for specification and proof statements. Of the 3953 lines, we required roughly 1000 lines for defining the I/O specification and verifying the aforementioned abstractions to the model state. Note that our verified abstractions can be reused for other projects. For proving functional correctness of all data structure operations, we needed around 2000 lines of code. The remaining lines of code were required for specifying stubs and verifying the core part of server and client, which is a mix of verifying functional behavior and verifying I/O specifications.

\subsection{Entity Authentication Protocol}
\label{sapp:entity-authentication-details}

In this case study, we model the two-party entity authentication protocol standardized as ISO/IEC 9798-3, which allows an initiator to authenticate a responder, prove its authentication property, and implement it in Java and in Python. 

\subsubsection{Abstract and Protocol Models}

We follow the methodology for deriving security protocols by stepwise refinement proposed by~\citet{DBLP:journals/jcs/SprengerB18}, which relies on a four-level refinement strategy. This strategy starts from abstract models of the desired security properties, in our case, injective agreement~\cite{DBLP:conf/csfw/Lowe97a}. This model is then refined into a first protocol model where we introduce the two protocol roles (\ie, Igloo component types). Each protocol run (\ie, Igloo component) instantiating a role stores the names of the participants and messages collected during the run in its local state. Instead of communicating messages via network channels, each protocol role reads information (in our case, nonces and participants' names) directly from the other role's local state. We model an unbounded number of protocol runs as a partial function from run identifiers to runs' local states. Initially, this map is empty. There is a run creation event for each role that allocates a fresh entry in this map.

At the next level, we refine this model into a protocol using channels with security properties such as confidential or authentic channels. In our case study, the initiator sends her nonce on an insecure channel to the responder, who returns it together with his own nonce on an authentic channel back to the initiator. The attacker can read but not modify messages on authentic channels. In a final refinement, we use digital signatures to implement the authentic channels and we refine the attacker into a standard Dolev-Yao attacker~\cite{DBLP:journals/tit/DolevY83}. In this model, the attacker is identified with the network, \ie, all messages are sent to and received from the attacker. The capabilities of this attacker are defined by the composition of two closure operators on sets of messages~\cite{Pau98}, namely, $\id{DY} = \id{synth} \circ \id{analz}$, where $\id{synth}$ corresponds to message composition (such as encryption) and $\id{analz}$ to message decomposition (such as decryption). In standard (informal) Alice\&Bob notation, the resulting protocol reads as follows.
\[
\begin{array}{lcll}
\textnormal{M1.} & A\rightarrow B & : & A,B,N_A \\
\textnormal{M2.} & B\rightarrow A & : & [N_B,N_A,A]_{\mathsf{pri}(B)}
\end{array}
\]
Here, $N_A$ and $N_B$ are the nonces generated by the initiator $A$ and the responder $B$ respectively, and $[M]_{\mathsf{pri}(B)}$ denotes the signature of message $M$ with $B$'s private key. All three models produced by these refinement steps correspond to Igloo protocol models of different levels of detail. The refinement proofs show that our event system model of this protocol is secure in the Dolev-Yao attacker model, namely, that the initiator injectively agrees with the responder on their respective roles, names, and nonces. At this point, the protocol verification is completed and we can move towards implementations of the protocol roles.

\subsubsection{Towards an Implementation}

As described in the main text, the interface model refines the transmitted messages into bitstrings of an abstract type $\id{bs}$ (a type variable in Isabelle/HOL), which are related to the message terms of the protocol model via a surjective abstraction function $\alpha\!: \id{bs} \rightarrow \id{msg}$. There is a special message term $\kw{Junk}$ to which all unparsable bitstrings are mapped.
The concrete attacker then also operates on these abstract bitstrings and is defined by transferring the closure operators on messages to bitstrings, \ie, $\id{DY}_{bs} = \id{csynth} \circ \id{canalz}$, where $\id{csynth} = \alpha^{-1} \circ \id{synth} \circ \alpha$ and $\id{canalz} = \alpha^{-1} \circ \id{analz} \circ \alpha$. Hence, $\id{DY}_{bs} = \alpha^{-1} \circ \id{DY} \circ \alpha$.
The events of the different roles still operate on message terms, but these are related to the corresponding bitstrings when sending and receiving messages. For example, denoting the attacker's knowledge (\ie, the network's state) by $\id{IK}$, a guard of the form $m \in \id{IK}$ modeling the reception of message term $m$ in the protocol model, is simply replaced by $bs \in \id{cIK} \land \alpha(bs) = m$, where $\id{cIK}$ the the concrete attacker's bitstring knowledge. A similar transformation is applied to updates of $\id{IK}$, which model the sending of messages.
 
The interface model also adds an input buffer for receiving messages and a (single-place) output buffer for sending messages to the state of the initiator and responder. The send and receive events each take a network address and a bitstring message. The responder additionally stores the network address of the initiator, obtained upon receiving the first message. 
The initiator role's events are parametrized by a run identifier, the names of the initiator $A$ and the responder $B$, the address of $B$, and the bitstring representing $B$'s public key. The responder role's events are parametrized by a run identifier, the responder's name $B$, and the bitstring representing his private key. These parameters later become program arguments of each role's implementation. 

Finally, we decompose the interface model and translate the initiator and responder components into I/O specifications for the implementation.

\subsubsection{Code Level}

Our implementations in Java and Python have a similar structure. 
In both languages we implemented the initiator and the responder as two separate components that share some common code, e.g., for parsing and sending messages. The implementations send and receive one message each and, in particular, check that the received messages have the correct form. 

We extracted separate I/O specifications for each of the component from the component models and then verified the correctness of both implementations independently from each other (modulo, of course, shared code). This is a major advantage of our methodology: For any given component, we \emph{only} have to prove that its own I/O behavior is correct on the code level, without having to reason about or in any way represent other components or the adversary and its capabilities.

Since the program's structure is very simple, we can easily relate the program's state to the abstract state in the I/O specification: In both components, we only receive a single message and store it in a local variable. The resulting state corresponds to an abstract state in which the input buffer contains the received message. Similarly, the abstract model's output buffer is represented by a local variable in which we compose the message that is to be sent out.  

Our model contains different kinds of message terms; some are cryptographic (signatures, nonces), others are atomic message parts (e.g., agent identifiers) or combinations of message parts. Unlike our leader election case study, we now explicitly model the fact that messages sent on the network have the form of bitstrings (as opposed to higher-level data types). 
Our implementation contains a parsing and marshalling function for each message type that maps it to a bitstring representation.
Since, in our model, bitstrings either correspond to a unique abstract message or to a junk term, we must ensure that every bitstring representing a message can be parsed unambiguously. We enforce this by explicitly tagging messages with their type. 
For the cryptographic operations, we employ widely used cryptography libraries, PyNaCl in Python and the standard library in Java.

\begin{figure}[t]
\begin{center}
\begin{lstlisting}[basicstyle=\scriptsize\sffamily,language=Python,caption={Simplified pseudocode implementing a function for verifying signatures and its (trusted) specification. The pre- and postconditions relate the bitstrings to their abstract message representations that are used in the I/O specification. The variable \textit{a} is implicitly existentially quantified (denoted by a question mark).},captionpos=b,label={lst:signature_stub2},escapechar=/]
def verify(signed_msg: bytes, key: bytes) -> bytes:
    #@ PRE: alpha(key) == Key(publicKey(?a)))
    #@ POST: alpha(signed_msg) == Crypt(Key(privateKey(a))), alpha(result)) 
    #@       && len(signed_msg) == len(result) + 1 + SIGNATURE_SIZE
    #@ EXCEPTIONAL POST: forall msg . alpha(signed_msg) != Crypt(Key(privateKey(a)), alpha(msg)) 
    # Check tag
    if len(signed_msg) == 0 or signed_msg[0] != SIGNATURE_TAG:
        raise InvalidDataException("Invalid tag on signature") 
    # Forward to crypto library
    return nacl.verify(signed_msg[1:])   # raises exception if signature is not valid
\end{lstlisting}
\end{center}
\end{figure}

Since we do not want to explicitly model cryptography on the bit level, 
we leave the abstraction function $\alpha$ abstract and define it only in terms of the (trusted) functions for parsing, marshalling, and cryptography. Listing~\ref{lst:signature_stub2} shows (a simplified pseudocode version of) the implementation and specification of a function that checks signatures. Its implementation checks that the given bitstring has the correct tag for a signature and, if it does, forwards it to a library function. Its precondition requires that the bitstring given as the key actually represents the public key of some agent. The ordinary postcondition (which holds if no exception is raised) states that the given message was indeed signed with the private key of that agent, and gives additional information about the length of the returned bitstring. The latter is important because the bitstring representation puts upper bounds on the lengths and values of many message types. The exceptional postcondition, which holds if an exception is raised (either because the tag is incorrect or the library rejected the signature), states that the given message was not signed by the agent whose public key was provided.

Alternatively, we could explicitly (partially) define $\alpha$ in our model and would then be able to prove parsing and marshalling code correct. In particular, we could then also ensure that implementations that use two different implementations for parsing and cryptography use the same message representation and are therefore compatible (which is currently not the case; the two implementations use different algorithms for signatures). Our approach leaves it up to the user to decide which components should be trusted and how far along the software stack the code should be verified.

Like in the other case studies, we axiomatize the trusted network library. Listing~\ref{lst:send_stub2} shows a simplified pseudocode version of the specification of a function for sending a message via UDP\@. The example illustrates the three typical parts of such a specification: (1) It specifies the I/O behavior of this function in terms of tokens and I/O permissions. This part is used to verify that the implementation's I/O behavior complies with the I/O specification. (2) It constrains the state of the object representing the socket on the program heap. In this case, we require that it has already been connected to a receiver address by a previous method invocation, and that this address is the same as the one specified by the I/O permission. (3) It constrains messages on the bitstring level, where there are sometimes additional requirements that do not exist on more abstract levels. Here, we restrict the length of the message to be below the maximum size for one packet.

\begin{figure}[t]
\begin{center}
\begin{lstlisting}[basicstyle=\scriptsize\sffamily,language=Python,caption={Simplified pseudocode contract for a library method for sending packets via UDP. As before, names starting with question marks are implicitly existentially quantified. \textit{connected} is a separation logic predicate that contains the heap memory of a UDP socket object connected to the shown address and port.},captionpos=b,label={lst:send_stub2}]
def send(self: socket, msg: bytes) -> bytes:
    #@ PRE: connected(self, ?address, ?port)
    #@ PRE: token(?t) * UDP_send(t, (address, port, msg), ?tp)
    #@ PRE: len(msg) < MAX_PACKET_LENGTH
    #@ POST: token(tp)
    #@ EXCEPTIONAL POST: token(t) * UDP_send(t, (address, port, msg), tp)
\end{lstlisting}
\end{center}
\end{figure}

The implementations in both languages are sequential and between 80 and 100 lines of code each (not counting libraries, parsing code, or proof annotations). The required verification effort was therefore relatively low on the code level, since it essentially amounts to proving that different libraries are called in the right order with the right arguments. The total amount of specification required, meaning library stubs, type declarations, the I/O specification and all proof annotations, was ca. 500 lines of code in Java and ca. 800 lines of code in Python.
Given the connection to the system level proof via the I/O specification, we thereby gain a correctness guarantee for an authentication property for the entire system.

